\documentclass[11pt,a4paper]{article}
\usepackage[utf8]{inputenc}
\usepackage{jheppub}
\hypersetup{unicode=true,bookmarksopen=true}

\usepackage{todonotes}

\usepackage{mathtools}
\usepackage{lmodern}
\usepackage[T1]{fontenc}
\usepackage{bbm}
\DeclareMathAlphabet{\mathbfi}{OML}{cmm}{b}{it}

\usepackage{amsthm}
\newtheorem{theorem}{Theorem}
\newtheorem{lemma}[theorem]{Lemma}
\newtheorem{corollary}[theorem]{Corollary}
\newtheorem{definition}[theorem]{Definition}
\newtheorem{remark}[theorem]{Remark}

\let\originalleft\left
\let\originalright\right
\renewcommand{\left}{\mathopen{}\mathclose\bgroup\originalleft}
\renewcommand{\right}{\aftergroup\egroup\originalright}

\makeatletter
\newenvironment{equations}[1][]{\subequations\ifx\relax#1\relax\else\label{#1}\fi\align\ignorespaces}{\endalign\ignorespacesafterend\endsubequations}
\def\@spliteq#1{\begin{equation}\begin{split}#1\end{split}\end{equation}}
\def\splitequation{\collect@body\@spliteq}

\makeatother

\renewcommand{\vec}[1]{{\ifnum9<1#1\mathbf{#1}\else\ifcat\noexpand#1\relax\boldsymbol{#1}\else\mathbfi{#1}\fi\fi}}
\newcommand{\mathe}{\mathrm{e}}
\newcommand{\mathi}{\mathrm{i}}
\let\oldre\Re
\let\oldim\Im
\renewcommand{\Re}{\oldre\mathfrak{e}\,}
\renewcommand{\Im}{\oldim\mathfrak{m}\,}
\newcommand{\total}{\mathop{}\!\mathrm{d}}

\makeatletter
\def\abs{\@ifnextchar[\abs@size\abs@nosize}
\def\abs@size[#1]#2{\mathopen{#1\lvert}{#2}\mathclose{#1\rvert}}
\def\abs@nosize#1{\left\lvert{#1}\right\rvert}
\def\norm{\@ifnextchar[\norm@size\norm@nosize}
\def\norm@size[#1]#2{\mathopen{#1\lVert}{#2}\mathclose{#1\rVert}}
\def\norm@nosize#1{\left\lVert{#1}\right\rVert}
\makeatother

\newcommand{\sgn}{\operatorname{sgn}}

\newcommand{\1}{\mathbbm{1}}
\newcommand{\tr}{\operatorname{tr}}
\newcommand{\eqend}[1]{\,#1}
\newcommand{\bigo}[1]{\mathcal{O}\left({#1}\right)}

\makeatletter
\def\bra{\@ifnextchar[\bra@size\bra@nosize}
\def\bra@size[#1]#2{#1\langle{#2}#1\rvert}
\def\bra@nosize#1{\left\langle{#1}\right\rvert}
\def\ket{\@ifnextchar[\ket@size\ket@nosize}
\def\ket@size[#1]#2{#1\lvert{#2}#1\rangle}
\def\ket@nosize#1{\left\lvert{#1}\right\rangle}
\makeatother

\newcommand{\supp}{\operatorname{supp}}

\makeatletter
\def\dom{\@ifnextchar[\dom@size\dom@nosize}
\def\dom@size[#1]#2{\operatorname{dom}\mathopen{#1(}{#2}\mathclose{#1)}}
\def\dom@nosize#1{\operatorname{dom}\left({#1}\right)}
\makeatother

\makeatletter
\gdef\@fpheader{\strut}
\renewcommand\afterLogoSpace{\vskip-5pt}
\renewcommand\beforetochook{\pagestyle{myplain}\pagenumbering{roman}\afterRuleSpace}
\makeatother

\allowdisplaybreaks

\begin{document}

\title{Bounding relative entropy for non-unitary excitations in quantum field theory}

\author[a]{Markus B. Fr{\"o}b}
\author[b]{and Leonardo Sangaletti}

\affiliation[a]{Department Mathematik, Friedrich-Alexander-Universit{\"a}t Erlangen-N{\"u}rnberg, Cauerstra{\ss}e 11, 91058 Erlangen, Germany}
\affiliation[b]{Dipartimento di Fisica, Universit{\`a} di Genova, Via Dodecaneso 33, 16146 Genova, Italy}

\emailAdd{markus.froeb@fau.de}
\emailAdd{leonardo.sangaletti@edu.unige.it}

\abstract{We show how one can use the convexity of non-commutative $L^p$ norms to bound the relative entropy between a faithful state on a von Neumann algebra and an arbitrary excitation thereof. Our results hold for general von Neumann algebras, including the local algebras of type III that are ubiquitous in quantum field theory, and do not require knowledge of the relative modular operator. As an application of our results, we prove that for the chiral current on a light ray, the relative entropy between the vacuum and a dense set of single-particle states is uniformly bounded.}


\toccontinuoustrue
\maketitle

\section{Introduction}

Relative entropy, also known as Kullback--Leibler divergence, is a fundamental concept in classical information theory that quantifies the difference between two probability distributions. Its significance stems from its operational interpretation in hypothesis testing described by Stein's Lemma: the probability for an error of type II in a likelihood ratio test decreases exponentially with the number of measurements, with the convergence rate given by the relative entropy. In non-commutative probability theory, the relative entropy was first studied by Umegaki~\cite{umegaki1962}, and extended by Araki~\cite{araki1975,araki1976} to general von Neumann algebras using Tomita--Takesaki modular theory. At the same time, Uhlmann~\cite{uhlmann1977} introduced relative entropy for general positive functions of $*$-algebras using interpolation theory.

Consider a von Neumann algebra $\mathfrak{M}$ in standard form~\cite{haagerup1975} $(\mathfrak{M},\mathcal{H},J,\mathcal{P})$, with $\mathcal{H}$ the separable Hilbert space on which $\mathfrak{M}$ acts, $J$ a conjugate linear, isometric involution of $\mathcal{H}$ mapping $\mathfrak{M}$ into its commutant $\mathfrak{M}'$, and $\mathcal{P}$ a self-dual cone (invariant under $J$). Since $\mathcal{H}$ is separable, $\mathfrak{M}$ is $\sigma$-finite and admits a cyclic and separating vector $\Omega \in \mathcal{H}$~\cite[Ch. 2.5.1]{brattelirobinson1}, for which $J$ is the modular conjugation and $\mathcal{P}$ the associated natural positive cone $\mathcal{P} = \overline{ \{ a J a J \Omega \colon a \in \mathfrak{M} \}}$~\cite[Remark~1.2]{haagerup1975}. The relative entropy $S_\mathrm{rel}$\footnote{Here and in the following, we use a single vertical bar to denote the relative entropy between states, and a double bar to denote the relative entropy between their representative vectors.} between two faithful normal states $\psi$ and $\varphi$ is then given by the Araki formula~\cite{araki1975}
\begin{equation}
\label{eq:araki_relentropy}
S_\mathrm{rel}(\psi\vert\varphi) = S_\mathrm{rel}(\Psi^+\Vert\Phi^+) = - \left( \Phi^+, \ln \Delta_{\Psi^+,\Phi^+} \Phi^+ \right) \eqend{,}
\end{equation}
where $\Delta_{\Psi^+,\Phi^+}$ is the relative modular operator and $\Phi^+, \Psi^+ \in \mathcal{P}$ are the unique~\cite[Thm.~C.1 $(\delta 3)$]{arakimasuda1982} representative vectors of the normal states $\psi, \varphi$ in the natural positive cone. One of the main problems in computing the relative entropy consists in the derivation of an explicit expression for the relative modular operator. While in certain situations (for example the one covered by the Bisognano--Wichmann theorem~\cite{bisognanowichmann1975,bisognanowichmann1976}) the non-relative modular operator can be obtained, this is generally not the case for the relative one. An important exception corresponds to the case in which one considers the pair of vectors $\Omega$ and $U' U \Omega$ with unitaries $U' \in \mathfrak{M}'$ and $U \in \mathfrak{M}$, i.e., if one wants to compute the relative entropy between the cyclic and separating $\Omega$ and a unitary excitation $U' U \Omega$ thereof; other special cases include perturbations by an inner automorphism or where the perturbation is quadratic in the fields~\cite{brunettifredenhagenpinamonti2025}.

In this work, we want to derive bounds on the relative entropy between $\Omega$ and general (non-unitary) excitations of $\Omega$. Our main results are
\begin{theorem}
\label{thm:bound1}
For a von Neumann algebra $\mathfrak{M}$ with cyclic and separating vector $\Omega$, the relative entropy between $\Omega$ and $b' \Omega$ for any $b' \in \mathfrak{M}'$ with $\norm{ b' \Omega } = 1$ is bounded by
\begin{equation}
\label{eq:bound1_inequality}
0 \leq S_\mathrm{rel}\left( \Omega \Vert b' \Omega \right) \leq 2 \ln \norm{ \Delta^{- \frac{1}{4}} (b')^* b' \Omega } \leq 2 \ln \norm{ b' }_\mathrm{op} \eqend{,}
\end{equation}
where $\Delta$ is the modular operator associated to $\Omega$ and $\norm{ \cdot }_\mathrm{op}$ denotes the operator norm.

If $b'$ is only affiliated with $\mathfrak{M}'$ with $\Omega \in \dom{ (b')^* b' }$ and $\norm{ b' \Omega } = 1$, it holds that
\begin{equation}
\label{eq:bound1_inequality_finite}
0 \leq S_\mathrm{rel}\left( \Omega \Vert b' \Omega \right) \leq 2 \ln \norm{ \Delta^{- \frac{1}{4}} (b')^* b' \Omega } < \infty \eqend{,}
\end{equation}
and the relative entropy between $\Omega$ and $b' \Omega$ is finite.
\end{theorem}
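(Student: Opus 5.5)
The plan is to express $S_\mathrm{rel}(\Omega\Vert b'\Omega)$ as a limit of Araki--Masuda non-commutative $L^p$ norms, i.e.\ sandwiched Rényi divergences. Monotonicity of these norms in $p$, which is where convexity enters, then bounds the relative entropy by the Rényi divergence of order $2$. That order-$2$ quantity can be evaluated explicitly because $\Psi = b'\Omega$ comes from the commutant.

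\emph{Setup.} Write $\omega$ for the state of $\Omega$ and $\psi(a) = (b'\Omega, a\, b'\Omega)$ for the state of $\Psi=b'\Omega$ on $\mathfrak{M}$. Set $h' := (b')^*b' \in \mathfrak{M}'_+$, so that $\psi(a) = (\Omega, a h'\Omega)$. With $k := J h' J \in \mathfrak{M}_+$ we have
\begin{equation*}
\psi(a) = (\Omega, a J k \Omega) = (a^*\Omega, \Delta^{\frac12} k\Omega) \eqend{.}
\end{equation*}
In Haagerup's $L^p$ language with $h = h_\omega$, this says $h_\psi = h^{\frac12} k h^{\frac12}$.

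With the convention of~\eqref{eq:araki_relentropy}, $S_\mathrm{rel}(\Omega\Vert\Psi) = -(\Psi, \ln\Delta_{\Omega,\Psi}\Psi)$. This is the Umegaki-type divergence of $\psi$ relative to $\omega$ (formally $\tr h_\psi(\ln h_\psi - \ln h)$), and it is nonnegative since $\psi(1)=\omega(1)=1$.

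\emph{Main step.} Introduce the sandwiched Rényi divergences
\begin{equation*}
\tilde D_\alpha(\psi\Vert\omega) = \frac{1}{\alpha-1}\ln \norm{ h^{\frac{1-\alpha}{2\alpha}} h_\psi h^{\frac{1-\alpha}{2\alpha}} }_\alpha^\alpha \eqend{,}
\end{equation*}
defined rigorously through the Araki--Masuda $L^p(\mathfrak{M},\Omega)$ norms of $\Psi$. Two properties are needed. First, $\alpha\mapsto\tilde D_\alpha$ is nondecreasing; this follows from log-convexity of the interpolating $L^p$ norms, i.e.\ Riesz--Thorin/Hadamard three-lines applied to $z\mapsto \Delta_{\Omega}^{z}\Psi$-type analytic families. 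Second, $\tilde D_\alpha \to S_\mathrm{rel}(\Omega\Vert\Psi)$ as $\alpha\downarrow 1$. Together these give $S_\mathrm{rel}(\Omega\Vert\Psi)\le \tilde D_2$.

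At $\alpha=2$ the divergence is $\tilde D_2 = \ln \norm{ h^{-\frac14} h_\psi h^{-\frac14} }_2^2 = \ln\norm{ h^{\frac14} k h^{\frac14} }_2^2$. This $L^2$ norm is just the Hilbert norm $\norm{\Delta^{\frac14} k\Omega}$. Since $k\Omega = J h'\Omega$ and $\Delta^{\frac14} J = J \Delta^{-\frac14}$, we get $\Delta^{\frac14}k\Omega = J\Delta^{-\frac14}h'\Omega$. Hence
\begin{equation*}
S_\mathrm{rel} \le 2\ln\norm{\Delta^{-\frac14}(b')^*b'\Omega} \eqend{,}
\end{equation*}
and this middle inequality is independent of the choice of representative vectors. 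The main obstacle is making this rigorous without traces: one must identify $\norm{\Psi}$ in the $\alpha=2$ Araki--Masuda norm with $\norm{\Delta^{-\frac14}h'\Omega}$. I would do this by checking that $\Delta_\Omega^{-\frac14}$ applied to the appropriate vector lies in the domain and computing the norm via the variational (sup over unit vectors) definition of the $L^2$ norm.

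\emph{Operator-norm bound.} Compute
\begin{equation*}
\norm{\Delta^{\frac14}k\Omega}^2 = (k\Omega, \Delta^{\frac12}k\Omega) = (k\Omega, Jk\Omega) = (\Omega, k h'\Omega) = (k^{\frac12}\Omega, h' k^{\frac12}\Omega) \eqend{,}
\end{equation*}
using that $k \in \mathfrak{M}$ and $h'\in\mathfrak{M}'$ commute. This is at most $\norm{h'}_\mathrm{op}(\Omega,k\Omega) = \norm{b'}_\mathrm{op}^2 (\Omega,h'\Omega) = \norm{b'}_\mathrm{op}^2$, which gives the last inequality in~\eqref{eq:bound1_inequality}.

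\emph{Affiliated case.} For unbounded $b'$ affiliated with $\mathfrak{M}'$ and $\Omega\in\dom{h'}$, replace $b'$ by the normalized spectral truncations $b'_n = b' E_n / \norm{b'E_n\Omega}$, where $E_n$ are spectral projections of $|b'|$. Then $b'_n \in \mathfrak{M}'$ and $b'_n\Omega\to b'\Omega$. Applying the bounded result and using lower semicontinuity of relative entropy gives the inequality in the limit. It remains to show that $\Delta^{-\frac14}h'_n\Omega \to \Delta^{-\frac14}h'\Omega$. The previous identity bounds $\norm{\Delta^{-\frac14}h'_n\Omega}^2$ by $\norm{h'_n\Omega}\,\norm{h'_n\Omega}$ through Cauchy--Schwarz in $(k\Omega,Jk\Omega)$. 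Thus $h'\Omega\in\dom{\Delta^{-\frac14}}$, since this holds for all vectors in $\dom{\Delta^{-\frac12}}\supseteq\mathfrak{M}'\Omega$-closure type domains, with norm at most $\norm{h'\Omega}<\infty$, and the limit is finite.
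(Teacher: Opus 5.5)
Your architecture matches the paper's. Monotonicity of the Araki--Masuda divergences plus their $\alpha\downarrow 1$ limit reduces everything to evaluating $D_2(\omega\vert\psi)=\ln\norm{b'\Omega}_{4,\Omega}^4$. In the Araki--Masuda indexing this is the $L^4$ norm with respect to $\mathfrak{M}'$, namely $\sup_{\norm{\xi}=1}\norm{(\Delta'_{\xi,\Omega})^{1/4}b'\Omega}$, not an $L^2$ norm.

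\textbf{The gap.} That evaluation is exactly the step you do not carry out. Your identity $D_2=2\ln\norm{\Delta^{1/4}k\Omega}$ comes from the Haagerup expression $h^{1/4}kh^{1/4}$. You never connect that expression to the Araki--Masuda norm that you adopt as the rigorous definition. Your plan to compute ``via the variational definition'' also does not say how to control the supremum, since $\Delta'_{\xi,\Omega}$ is unknown for general $\xi$.

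\textbf{How the paper closes it.} The missing idea, which is the heart of the paper's proof, has three steps.
\begin{itemize}
\item By representative independence, replace $\xi$ by $\xi^+\in\mathcal{P}$.
\item Then $J'_{\xi^+,\Omega}=J$ and the defining relation of the relative Tomita operator give $(\Delta'_{\xi^+,\Omega})^{1/2}b'\Omega=J(b')^*\xi^+$. Consequently
\begin{equation*}
\norm{(\Delta'_{\xi^+,\Omega})^{1/4}b'\Omega}^2=\left(\xi^+,b'Jb'J\Omega\right)
\end{equation*}
is \emph{linear} in $\xi^+$.
\item Its supremum over unit vectors is attained at the normalized $b'Jb'J\Omega$. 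This vector lies in $\mathcal{P}$, so it is admissible, and one gets $\norm{b'\Omega}_{4,\Omega}^4=\norm{b'Jb'\Omega}^2=(h'\Omega,Jh'\Omega)=\norm{\Delta^{-1/4}h'\Omega}^2$.
\end{itemize}
Alternatively, you could justify your Haagerup computation by citing three facts: Jen{\v c}ov{\'a}'s identification of the Araki--Masuda norms with Kosaki's interpolation norms, Kosaki's isometry onto Haagerup $L^p$, and the standard correspondence $h^{1/4}xh^{1/4}\leftrightarrow\Delta^{1/4}x\Omega$. As written, however, the central inequality rests on a heuristic.

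\textbf{What is correct.} The remaining steps are sound, and in places more elementary than the paper's.
\begin{itemize}
\item Your estimate $\norm{\Delta^{-1/4}h'\Omega}^2=(\Omega,kh'\Omega)\le\norm{h'}_\mathrm{op}(\Omega,k\Omega)=\norm{b'}_\mathrm{op}^2$ avoids the paper's detour through $\lim_{\alpha\to\infty}D_\alpha$ and the $L^\infty$ norm.
\item Truncation plus lower semicontinuity handles the affiliated case more cleanly than the paper's spectral cut-off definition of $L^p$ norms. To get convergence of $\Delta^{-1/4}h'_n\Omega$, though, you need the bound for differences, not for $h'_n$ alone. It holds because $\norm{\Delta^{-1/4}x'\Omega}^2=(x'\Omega,Jx'\Omega)\le\norm{x'\Omega}^2$ for every self-adjoint $x'\in\mathfrak{M}'$. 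So the sequence is Cauchy, and closedness of $\Delta^{-1/4}$ gives $h'\Omega\in\dom{\Delta^{-1/4}}$ with the expected limit.
\end{itemize}
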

This theorem translates the results of Jen{\v c}ov{\'a}~\cite{jencova2018,jencova2021} and Berta, Scholz and Toma\-michel~\cite{bertascholztomamichel2018} into the setting of general excitations of the given vector $\Omega$, and shows in particular that the relative entropy for such states is finite, even if the excitation comes from an unbounded operator. As a corollary, we obtain
\begin{corollary}
\label{corr:bound2}
In the setting of Thm.~\ref{thm:bound1}, consider the set of vectors $a \Omega$ with $\norm{ a \Omega } = 1$ for $a \in \mathfrak{M}$ analytic with respect to the modular flow, which is dense in $\mathcal{H}$. The relative entropy between $\Omega$ and $a \Omega$ is then bounded by
\begin{equation}
0 \leq S_\mathrm{rel}\left( \Omega \Vert a \Omega \right) \leq 2 \ln \norm{ \Delta^{- \frac{1}{4}} a \Delta a^* \Omega } = 2 \ln \norm{ \sigma_\frac{\mathi}{4}(a) \left[ \sigma_\frac{3 \mathi}{4}(a) \right]^* \Omega } \leq 2 \ln \norm{ \sigma_\frac{\mathi}{2}(a) }_\mathrm{op} \eqend{,}
\end{equation}
where $\sigma_t(a) \coloneq \Delta^{\mathi t} a \Delta^{- \mathi t}$ is the modular flow. The middle inequality is non-trivial, and the relative entropy is finite.
\end{corollary}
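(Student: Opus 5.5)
We derive the corollary from Thm.~\ref{thm:bound1} by converting the excitation $a \Omega$ with $a \in \mathfrak{M}$ into an excitation by an element of the commutant. For $a$ entire analytic for the modular flow, the vector $a \Omega$ lies in the domain of every power of $\Delta$, and by Tomita--Takesaki theory
\begin{equation}
a \Omega = J \Delta^{\frac{1}{2}} a^* \Omega = J \sigma_{-\frac{\mathi}{2}}(a^*) \Omega = J \sigma_{-\frac{\mathi}{2}}(a^*) J \Omega \eqend{.}
\end{equation}
We therefore set $b' \coloneq J \sigma_{-\frac{\mathi}{2}}(a^*) J \in \mathfrak{M}'$. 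This is bounded because $\sigma_z(a^*)$ is bounded for analytic $a^*$, it satisfies $b' \Omega = a \Omega$, and $\norm{ b' \Omega } = 1$. Thm.~\ref{thm:bound1} then applies directly and yields $0 \leq S_\mathrm{rel}(\Omega \Vert a \Omega) \leq 2 \ln \norm{ \Delta^{-\frac{1}{4}} (b')^* b' \Omega } \leq 2 \ln \norm{ b' }_\mathrm{op}$. Density of the set $\{a \Omega\}$ is the standard fact that modular-analytic elements are $\sigma$-weakly dense in $\mathfrak{M}$, obtained by Gaussian smearing of the modular flow, combined with cyclicity of $\Omega$.

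The next step is to rewrite these expressions in terms of $a$. Since $J$ is antiunitary with $J \Delta J = \Delta^{-1}$ and $J \Omega = \Omega$,
\begin{equation}
(b')^* b' \Omega = J \left[ \sigma_{-\frac{\mathi}{2}}(a^*) \right]^* \sigma_{-\frac{\mathi}{2}}(a^*) \Omega \eqend{.}
\end{equation}
Using $[\sigma_z(x)]^* = \sigma_{\bar z}(x^*)$, this equals $J \sigma_{\frac{\mathi}{2}}(a) \sigma_{-\frac{\mathi}{2}}(a^*) \Omega = J \Delta^{-\frac{1}{2}} a \Delta a^* \Omega$. Applying $\Delta^{-\frac{1}{4}}$ and commuting it through $J$ turns it into $J \Delta^{\frac{1}{4}}$, so the norm becomes $\norm{ \Delta^{-\frac{1}{4}} a \Delta a^* \Omega }$. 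The exponents combine as $\frac{1}{4} - \frac{1}{2} = -\frac{1}{4}$, and the signs must be checked carefully with the convention $\sigma_t = \Delta^{\mathi t} \cdot \Delta^{-\mathi t}$. Rewriting $\Delta^{-\frac{1}{4}} a \Delta a^* \Omega = \sigma_{\frac{\mathi}{4}}(a) \, \Delta^{\frac{3}{4}} a^* \Omega = \sigma_{\frac{\mathi}{4}}(a) \sigma_{-\frac{3\mathi}{4}}(a^*) \Omega = \sigma_{\frac{\mathi}{4}}(a) [\sigma_{\frac{3\mathi}{4}}(a)]^* \Omega$ gives the stated middle expression.

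The final bound follows from $\norm{ b' }_\mathrm{op} = \norm{ \sigma_{-\frac{\mathi}{2}}(a^*) }_\mathrm{op} = \norm{ [\sigma_{\frac{\mathi}{2}}(a)]^* }_\mathrm{op} = \norm{ \sigma_{\frac{\mathi}{2}}(a) }_\mathrm{op}$. The main obstacle is bookkeeping rather than analysis. One must track the signs of the imaginary arguments under $J$ conjugation and adjoints, and make sure every vector used lies in the relevant domains. Analyticity guarantees the domain conditions, since $\Delta^{z} x \Omega = \sigma_{-\mathi z}(x) \Omega$ for entire $x$, and no unbounded affiliated operators are needed. Finiteness is then immediate because $\sigma_{\frac{\mathi}{2}}(a)$ is bounded.
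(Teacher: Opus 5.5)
Your proposal is correct and follows the paper's own proof: you take the same swapping partner $b' = J\sigma_{-\frac{\mathi}{2}}(a^*)J = J[\sigma_{\frac{\mathi}{2}}(a)]^*J \in \mathfrak{M}'$ with $b'\Omega = a\Omega$, apply Thm.~\ref{thm:bound1}, and rewrite the middle norm and the operator norm using $J\Delta^r J = \Delta^{-r}$, as the paper does. Your domain justification via $\Delta^z x\Omega = \sigma_{-\mathi z}(x)\Omega$ for entire analytic $x$ (with products of analytic elements again analytic) is, if anything, slightly cleaner than the paper's remark that the identity holds only in the weak sense.
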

Even though the set of analytic vectors is dense, the relative entropy is not continuous in the states and thus this bound is not useful for general states. However, using the lower semicontinuity of relative entropy we obtain
\begin{corollary}
\label{corr:bound3}
In the setting of Thm.~\ref{thm:bound1}, consider a vector $a \Omega$ for a general $a \in \mathfrak{M}$. The relative entropy between $\Omega$ and $a \Omega$ is then bounded by
\begin{equation}
0 \leq S_\mathrm{rel}\left( \Omega \Vert a \Omega \right) \leq 2 \liminf_{n \to \infty} \ln \norm{ \Delta^{- \frac{1}{4}} a_n \Delta a_n^* \Omega } = 2 \liminf_{n \to \infty} \ln \norm{ \sigma_\frac{\mathi}{4}(a_n) \left[ \sigma_\frac{3 \mathi}{4}(a_n) \right]^* \Omega } \eqend{,}
\end{equation}
where
\begin{equation}
a_n \coloneq \sqrt{ \frac{n}{\pi} } \int_{-\infty}^\infty \mathe^{- n s^2} \sigma_s(a) \total s
\end{equation}
is analytic with respect to the modular flow $\sigma_t$ and satisfies
\begin{equation}
\lim_{n \to \infty} \norm{ a_n \Omega - a \Omega } = 0 \eqend{.}
\end{equation}
The bound may be trivial.
\end{corollary}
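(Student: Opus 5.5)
The plan is to apply Corollary~\ref{corr:bound2} to the smoothed operators $a_n$ and then pass to the limit $n \to \infty$ using the lower semicontinuity of relative entropy.

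\textbf{Step 1: analyticity and norms of $a_n$.} By standard Gaussian smoothing, $a_n \in \mathfrak{M}$ is entire analytic for $\sigma$, with
\begin{equation}
\sigma_z(a_n) = \sqrt{ \tfrac{n}{\pi} } \int_{-\infty}^\infty \mathe^{- n (s-z)^2} \sigma_s(a) \total s \eqend{.}
\end{equation}
This follows by shifting the integration variable and then continuing analytically in $z$. The integral converges in operator norm because $\norm{\sigma_s(a)}_\mathrm{op} = \norm{a}_\mathrm{op}$ and the Gaussian factor $\abs{\mathe^{-n(s-z)^2}}$ decays.

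\textbf{Step 2: convergence $a_n \Omega \to a\Omega$.} Since $\Delta^{\mathi s}\Omega = \Omega$, we have $\sigma_s(a)\Omega = \Delta^{\mathi s} a\Omega$. Hence
\begin{equation}
a_n\Omega = \sqrt{n/\pi}\int \mathe^{-ns^2}\Delta^{\mathi s}a\Omega \total s = \mathe^{-(\ln\Delta)^2/(4n)} a\Omega
\end{equation}
by the spectral theorem. The function $\mathe^{-\lambda^2/(4n)}$ is bounded by $1$ and tends to $1$ pointwise, so strong convergence follows from dominated convergence. The same formula gives $\norm{a_n\Omega}\le\norm{a\Omega}=1$.

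\textbf{Step 3: normalization and Corollary~\ref{corr:bound2}.} Corollary~\ref{corr:bound2} requires unit vectors. I therefore set $\tilde a_n = a_n/\norm{a_n\Omega}$, which is well defined for large $n$ since $\norm{a_n\Omega}\to 1$. Corollary~\ref{corr:bound2} then gives
\begin{equation}
S_\mathrm{rel}(\Omega\Vert \tilde a_n\Omega) \le 2\ln\norm{\Delta^{-1/4}\tilde a_n\Delta\tilde a_n^*\Omega} = 2\ln\norm{\Delta^{-1/4}a_n\Delta a_n^*\Omega} - 4\ln\norm{a_n\Omega} \eqend{.}
\end{equation}
The correction term $-4\ln\norm{a_n\Omega}$ tends to zero. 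So under the $\liminf$ the bound matches the stated one, either with normalized $a_n$ or asymptotically. The identity with the $\sigma_{\mathi/4}$ expression is inherited directly from Corollary~\ref{corr:bound2}.

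\textbf{Step 4: lower semicontinuity, the main point.} I will use the lower semicontinuity of Araki's relative entropy jointly in the pair of normal functionals, in the norm topology (or weak$^*$). The representing vectors converge, $\tilde a_n\Omega\to a\Omega$, so the states $\omega_{\tilde a_n\Omega}\to\omega_{a\Omega}$ in norm on $\mathfrak{M}'$, the algebra for which $a\Omega$ is an excitation. Note that $a\Omega = b'$-type relative to the commutant, matching the conventions of Thm.~\ref{thm:bound1} with roles of $\mathfrak{M},\mathfrak{M}'$ swapped.

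Convergence of vectors controls only the vector states, not the cone representatives. This is harmless, because relative entropy depends only on the states and the state map is continuous. Lower semicontinuity then gives
\begin{equation}
S_\mathrm{rel}(\Omega\Vert a\Omega)\le\liminf_n S_\mathrm{rel}(\Omega\Vert\tilde a_n\Omega) \eqend{,}
\end{equation}
and combining this with Step 3 yields the claim.

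The remaining points are the following. The bound may be trivial because the $\liminf$ can be $+\infty$. Nonnegativity is standard for normalized states.
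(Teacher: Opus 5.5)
\paragraph{Overall.} Your route is the paper's: apply Cor.~\ref{corr:bound2} to the Gaussian-smeared $a_n$ and pass to the limit using Araki's lower semicontinuity of relative entropy. Two of your steps are improvements on the paper. Your spectral formula $a_n\Omega=\mathe^{-(\ln\Delta)^2/(4n)}a\Omega$ replaces the citation of Bratteli--Robinson and gives $\norm{a_n\Omega}\le 1$ directly. Your Step 3 makes explicit a normalization that the paper leaves implicit when it applies Cor.~\ref{corr:bound2} straight to $a_n$. The correction $-4\ln\norm{a_n\Omega}\ge 0$ vanishes as $n\to\infty$, so the $\liminf$ is unaffected.

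\paragraph{Step 4 names the wrong algebra.} The relative entropy in Cor.~\ref{corr:bound2}, and in the statement, is that of the states induced on $\mathfrak{M}$, and no roles are swapped. Cor.~\ref{corr:bound2} trades $a\in\mathfrak{M}$ for its swapping partner $b'=J[\sigma_{\mathi/2}(a)]^*J\in\mathfrak{M}'$ precisely so that Thm.~\ref{thm:bound1} applies as stated.

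If you really regarded $a$ as the element $b'$ of Thm.~\ref{thm:bound1} with $\mathfrak{M}\leftrightarrow\mathfrak{M}'$ exchanged, you would be bounding a different quantity. That is the relative entropy of the restrictions to $\mathfrak{M}'$, bounded by $2\ln\norm{\Delta^{1/4}a^*a\Omega}$ since $\Delta'=\Delta^{-1}$, with no analyticity needed. Lower semicontinuity on $\mathfrak{M}'$ then cannot be chained with the bound from Step 3.

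\paragraph{The fix.} The repair is one line. For every bounded $x$,
$\abs{(\xi_n,x\xi_n)-(\xi,x\xi)}\le\norm{x}_\mathrm{op}\norm{\xi_n-\xi}(\norm{\xi_n}+\norm{\xi})$.
Hence the vector states of $\tilde a_n\Omega$ converge in norm on $\mathfrak{M}$ as well. Lower semicontinuity there, with $\omega$ fixed, gives exactly the inequality you need.

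\paragraph{A minor point.} The integrals in Step 1 should be understood $\sigma$-weakly, since $s\mapsto\sigma_s(a)$ is in general not norm continuous. Your norm estimate for $\sigma_z(a_n)$ still holds.
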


Bounds on (relative) entropy have also been studied previously, with perhaps the most famous conjecture given by the Bekenstein bound~\cite{bekenstein1981} which asserts that the entropy of any gravitating system has an upper bound given by $2 \pi R$ times its energy, with $R$ being the longest dimension of the system. The Bekenstein conjecture has been generalized in various ways, and at the same time there has been a long controversy about the exact definition of the various quantities that enter the conjecture (see, e.g., Refs.~\cite{page1982,sorkin1986,thooft1993,bousso1999,flanaganmarolfwald2000,marolfminicross2004,pesci2007,page2008,wall2011,blancocasinihungmyers2013,blancocasini2013}). Casini~\cite{casini2008} argued that the proper interpretation of the Bekenstein bound results in its equality to the positivity of relative entropy (which is a well-known result), and recently Longo~\cite{longo2025} showed that the Bekenstein bound in fact holds for the relative entropy itself. That is, one has
\begin{equation}
0 \leq S_\mathrm{rel}\left( \Phi \Vert \Omega \right) \leq 2 \pi R \left( \Phi, H \Phi \right)
\end{equation}
for a vector state $\Phi$ localized in a spacetime region of width $2 R$, i.e., such that $\left( \Phi, a \Phi \right) = \left( \Omega, a \Omega \right)$ for all $a$ supported outside the causal hull of this spacetime region, and where $H$ is the usual Hamiltonian. The bound can be generalized (with some error margin) also to states which are not strictly localized~\cite{hollandslongo2025}; see also Refs.~\cite{longoxu2018,kudlerflametal2025} for related results.

The relative entropy can be obtained as the limit $\alpha \to 1$ of the family of Petz--Rényi relative entropies~\cite{petz1985,petz1986a}, which may be defined for $\alpha \in (0,1)$ by~\cite{froebsangaletti2025}
\begin{equation}
\label{eq:petzrenyi_entropy}
S_\alpha(\Psi^+ \Vert \Phi^+) = \frac{1}{\alpha-1} \ln \left( \Phi^+, \Delta_{\Psi^+,\Phi^+}^{1-\alpha} \, \Phi^+ \right) \eqend{.}
\end{equation}
This family is monotonic in $\alpha$, such that the relative entropy gives an upper bound for the Petz--Rényi relative entropies with $\alpha < 1$. Moreover, Jen{\v c}ov{\'a} and Berta, Scholz and Tomamichel~\cite{jencova2018,bertascholztomamichel2018} have given a definition of sandwiched Rényi relative entropies for $\alpha \in (1,\infty)$, which employs non-commutative $L^p$ norms defined for von Neumann algebras (see also~\cite{hiai2019,kato2024} for further generalizations and properties). They have shown monotonicity for all $\alpha$, such that also the relative entropy can be bounded from above. In particular, the bounds of Thm.~\ref{thm:bound1} correspond to the value $\alpha = 2$ and the limit $\alpha \to \infty$, which are the cases where the non-commutative $L^p$ norms have simple explicit expressions. The remainder of this work is organized as follows: In Sec.~\ref{sec:lpnorms}, we recall the definition of (Araki--Masuda) non-commutative $L^p$ norms, and present a streamlined proof of monotonicity, which was given in~\cite{jencova2018,bertascholztomamichel2018}. In Sec.~\ref{sec:bounds}, we explicitly evaluate the non-commutative $L^p$ norms for $p = 4$ and $p = \infty$, which give the bounds of Thm.~\ref{thm:bound1}, and prove Cor.~\ref{corr:bound2} and Cor.~\ref{corr:bound3}. Lastly, in Sec.~\ref{sec:example} we consider the example of free scalar fields in a wedge and the chiral current on a light ray, and compute a bound on the relative entropy between the Minkowski vacuum and a single-particle state.

\section{Non-commutative \texorpdfstring{$L^p$}{Lp} norms}
\label{sec:lpnorms}

\subsection{Review of modular theory}

Let $\mathfrak{A}$ be a unital $\sigma$-finite von Neumann algebra and $\omega$ a faithful normal state $\omega$ on $\mathfrak{A}$ (such a state exists thanks to the $\sigma$-finite condition). Via the GNS construction, $\mathfrak{A}$ is represented by a von Neumann algebra of operators $\mathfrak{M} = \pi(\mathfrak{A})$ on an Hilbert space $\mathcal{H}$ with a cyclic (by GNS) and separating (since $\omega$ is faithful) vector $\Omega \in \mathcal{H}$ that implements the state $\omega$. We denote with $\Delta$ and $J$ the modular operator and the modular conjugation for the pair $(\mathfrak{M},\Omega)$. The natural positive cone $\mathcal{P}$ is defined as~\cite{araki1974}
\begin{equation}
\label{eq:pos_cone_1}
\mathcal{P} \coloneq \overline{ \{ \Delta^\frac{1}{4} a \Omega \colon a \in \mathfrak{M}, a \geq 0 \} } \eqend{,}
\end{equation}
is self-dual, satisfies $J \xi = \xi$ for all $\xi \in \mathcal{P}$, and coincides with the set~\cite[Thm.~4]{araki1974}
\begin{equation}
\label{eq:pos_cone_2}
\mathcal{P} = \overline{ \{ a J a J \Omega \colon a \in \mathfrak{M} \} } \eqend{.}
\end{equation}
The tuple $(\mathfrak{M},\mathcal{H},J,\mathcal{P})$ is the standard form~\cite{haagerup1975} of the von Neumann algebra.

Given two normal states $\varphi$, $\psi$ and their representative vectors $\Phi$, $\Psi$, the relative Tomita operator $S_{\Phi,\Psi}$ is defined by~\cite{araki1977}, \cite[Eq.~(C.1)]{arakimasuda1982}
\begin{equation}
\label{eq:relative_tomita_def}
S_{\Phi,\Psi} \left[ a \Psi + \left( 1 - s^{\mathfrak{M}'}(\Psi) \right) \zeta \right] = s^\mathfrak{M}(\Psi) a^* \Phi
\end{equation}
for all $a \in \mathfrak{M}$ and $\zeta \in \mathcal{H}$, where $s^\mathfrak{M}(\Psi)$ denotes the support projection of $\Psi$ in $\mathfrak{M}$ (the smallest projection in $\mathfrak{M}$ such that $s^\mathfrak{M}(\Psi) \Psi = \Psi$). $S_{\Phi,\Psi}$ is antilinear, closable and densely defined with domain $\dom{ S_{\Phi,\Psi} } = \mathfrak{M} \Psi + \bigl( \1 - s^{\mathfrak{M}'}(\Psi) \bigr) \mathcal{H}$, and its closure and adjoint have the support~$s(\overline{S}_{\Phi,\Psi}) = s^\mathfrak{M}(\Phi) s^{\mathfrak{M}'}(\Psi)$, $s( S_{\Phi,\Psi}^* ) = s^\mathfrak{M}(\Psi) s^{\mathfrak{M}'}(\Phi)$~\cite[Thm.~C.1 ($\alpha$)]{arakimasuda1982}. The closure admits the unique polar decomposition
\begin{equation}
\overline{S}_{\Phi,\Psi} = J_{\Phi,\Psi} \Delta_{\Phi,\Psi}^\frac{1}{2}
\end{equation}
into the modular conjugation $J_{\Phi,\Psi}$ and the square root of the positive relative modular operator $\Delta_{\Phi,\Psi} \coloneq ( S_{\Phi,\Psi} )^* \overline{S}_{\Phi,\Psi}$ with support $s( \Delta_{\Phi,\Psi} ) = s^\mathfrak{M}(\Phi) s^{\mathfrak{M}'}(\Psi)$~\cite[Thm.~C.1 ($\beta$)]{arakimasuda1982}. If $\Phi^+, \Psi^+ \in \mathcal{P}$ are the unique~\cite[Thm.~C.1 $(\delta 3)$]{arakimasuda1982} representative vectors in the natural positive cone, $J_{\Phi^+,\Psi^+} = J$~\cite[Thm.~2.4]{araki1977} or~\cite[Thm.~C.1 ($\epsilon1$)]{arakimasuda1982}. In general, we have
\begin{theorem}
\label{thm:modular_representative_change}
If $\Phi$ and $\Psi$ are representative vectors for $\varphi$ and $\psi$, there exist partial isometries $u',v' \in \mathfrak{M}'$ such that $\Phi = u' \Phi^+$, $\Psi = v' \Psi^+$, and
\begin{equation}
\label{eq:relative_tomita_representative_change}
\overline{S}_{\Phi,\Psi} = u' \overline{S}_{\Phi^+,\Psi^+} (v')^* \eqend{,}
\end{equation}
where $\Phi^+$ and $\Psi^+$ are the unique representative vectors for $\varphi$ and $\psi$ in the positive cone, as well as
\begin{equation}
\label{eq:relative_modular_representative_change}
\Delta_{\Phi,\Psi} = v' \Delta_{\Phi^+,\Psi^+} (v')^*
\end{equation}
and
\begin{equation}
\label{eq:relative_conjugation_representative_change}
J_{\Phi,\Psi} = u' J (v')^* \eqend{.}
\end{equation}
\end{theorem}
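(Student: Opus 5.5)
Since $\Phi$ and $\Phi^+$ both represent $\varphi$, we have $\norm{a\Phi}=\norm{a\Phi^+}$ for all $a\in\mathfrak{M}$. The plan is therefore to build $u'$ as the partial isometry defined on $\overline{\mathfrak{M}\Phi^+}$ by $u' a\Phi^+ \coloneq a\Phi$, and to set it to zero on the orthogonal complement. It is well defined and isometric on its initial space by the norm equality above, and it commutes with $\mathfrak{M}$: for $b\in\mathfrak{M}$ we have $u' b a\Phi^+ = ba\Phi = b u' a\Phi^+$. The orthogonal complement is $\mathfrak{M}$-invariant, so $u'$ commutes with $b$ there as well. Hence $u'\in\mathfrak{M}'$, with initial projection $s^{\mathfrak{M}'}(\Phi^+)$ and final projection $s^{\mathfrak{M}'}(\Phi)$, and $\Phi = u'\Phi^+$. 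The same construction gives $v'$ with $\Psi = v'\Psi^+$. Note also that $s^\mathfrak{M}(\Phi)=s^\mathfrak{M}(\Phi^+)$, because the support in $\mathfrak{M}$ depends only on the state.

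For \eqref{eq:relative_tomita_representative_change}, I would first compare both sides on the core $\dom{S_{\Phi,\Psi}}$. Take $\xi = a\Psi + (1-s^{\mathfrak{M}'}(\Psi))\zeta$. Since $(v')^*v' = s^{\mathfrak{M}'}(\Psi^+)$ and $(v')^*$ annihilates $(1-v'(v')^*)\mathcal{H} = (1-s^{\mathfrak{M}'}(\Psi))\mathcal{H}$, we get
\begin{equation}
(v')^*\xi = a\Psi^+ + (1-s^{\mathfrak{M}'}(\Psi^+))\zeta''
\end{equation}
for a suitable $\zeta''$; in fact the second term vanishes, but it does not matter. Thus $(v')^*\xi\in\dom{S_{\Phi^+,\Psi^+}}$ and $u' S_{\Phi^+,\Psi^+}(v')^*\xi = u' s^\mathfrak{M}(\Psi^+)a^*\Phi^+ = s^\mathfrak{M}(\Psi)a^*\Phi = S_{\Phi,\Psi}\xi$, using that $u'$ commutes with $\mathfrak{M}$. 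This shows $S_{\Phi,\Psi}\subseteq u'\overline{S}_{\Phi^+,\Psi^+}(v')^*$, and the right-hand side is closed because $u'$ is isometric on the range of $\overline S_{\Phi^+,\Psi^+}$, which lies in $s^{\mathfrak{M}'}(\Phi^+)\mathcal{H}$ by the support statement.

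\textbf{The main obstacle is the reverse inclusion}, i.e.\ that $u'\overline{S}_{\Phi^+,\Psi^+}(v')^*$ is no larger than the closure of $S_{\Phi,\Psi}$. I would handle it with a graph argument. Conjugation by $v'$ maps the core $\dom{S_{\Phi^+,\Psi^+}}$ onto $\mathfrak{M}\Psi$ modulo the kernel part. Any $\eta$ in the domain of the right-hand side decomposes as $\eta = v'(v')^*\eta + (1-s^{\mathfrak{M}'}(\Psi))\eta$. The second piece lies in $\dom{S_{\Phi,\Psi}}$ and is mapped to $0$ by both operators. For the first piece, approximate $(v')^*\eta$ in graph norm by elements $a_n\Psi^+ + (1-s^{\mathfrak{M}'}(\Psi^+))\zeta_n$, and apply $v'$ and $u'$, which are contractions. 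This yields graph-norm approximants $a_n\Psi + \dots$ in $\dom{S_{\Phi,\Psi}}$, which proves equality with $\overline S_{\Phi,\Psi}$.

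Formulas \eqref{eq:relative_modular_representative_change} and \eqref{eq:relative_conjugation_representative_change} then follow from the polar decomposition. We have $\Delta_{\Phi,\Psi} = \overline S^*\overline S = v'\overline S_{\Phi^+,\Psi^+}^*(u')^*u'\overline S_{\Phi^+,\Psi^+}(v')^*$, and $(u')^*u' = s^{\mathfrak{M}'}(\Phi^+)$ acts as the identity on the range of $\overline S_{\Phi^+,\Psi^+}$. This gives $\Delta_{\Phi,\Psi} = v'\Delta_{\Phi^+,\Psi^+}(v')^*$. Functional calculus under partial isometric conjugation gives $\Delta_{\Phi,\Psi}^{1/2} = v'\Delta_{\Phi^+,\Psi^+}^{1/2}(v')^*$; this works because $(v')^*v'$ dominates the support $s^{\mathfrak{M}'}(\Psi^+)$ of $\Delta_{\Phi^+,\Psi^+}$, so $v'$ is unitary between the relevant support spaces. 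Writing $\overline S_{\Phi,\Psi} = u' J\Delta_{\Phi^+,\Psi^+}^{1/2}(v')^* = (u'J(v')^*)(v'\Delta_{\Phi^+,\Psi^+}^{1/2}(v')^*)$ and invoking uniqueness of the polar decomposition yields $J_{\Phi,\Psi} = u'J(v')^*$. This uses $J_{\Phi^+,\Psi^+}=J$ on the appropriate support, and one checks that $u'J(v')^*$ has initial space equal to the support of $\Delta_{\Phi,\Psi}$.
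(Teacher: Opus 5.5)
Your proposal is correct and follows the paper's route. You check $S_{\Phi,\Psi}=u'S_{\Phi^+,\Psi^+}(v')^*$ on $\mathfrak{M}\Psi+(1-s^{\mathfrak{M}'}(\Psi))\mathcal{H}$, pass to closures, read off $\Delta_{\Phi,\Psi}$ from $\overline S^*\overline S$ using $(u')^*u'=s^{\mathfrak{M}'}(\Phi^+)$, and get $J_{\Phi,\Psi}$ from the polar decomposition; where the paper cites Araki--Masuda for the partial isometries and for extending to the closure and adjoint, you instead give the explicit construction and a graph-norm argument. The one step you leave tacit is $\overline S_{\Phi,\Psi}^*=v'\overline S_{\Phi^+,\Psi^+}^*(u')^*$, since in general one only has $(\overline S_{\Phi^+,\Psi^+}(v')^*)^*\supseteq v'\overline S_{\Phi^+,\Psi^+}^*$; equality holds here because $\overline S_{\Phi^+,\Psi^+}$ vanishes on $(1-(v')^*v')\mathcal{H}\subseteq\dom{\overline S_{\Phi^+,\Psi^+}}$.
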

\begin{proof}
We follow parts of the proof of~\cite[Thm.~C.1]{arakimasuda1982}. By~\cite[Thm.~7 (5)]{araki1974} or~\cite[Thm.~C.1 ($\delta_2$)]{arakimasuda1982}, there exists a partial isometry $u' \in \mathfrak{M}'$ such that $\Phi = u' \Phi^+$ with $\Phi^+ \in \mathcal{P}$, satisfying $u' (u')^* = s^{\mathfrak{M}'}(\Phi)$ and $(u')^* u' = s^{\mathfrak{M}'}(\Phi^+)$ such that $\Phi^+ = (u')^* \Phi$, and analogously a partial isometry $v' \in \mathfrak{M}'$ such that $\Psi = v' \Psi^+$ with the same properties. From the equation~\eqref{eq:relative_tomita_def} for the relative modular operator, we thus obtain
\begin{splitequation}
&\left( S_{\Phi,\Psi} - u' S_{\Phi^+,\Psi^+} (v')^* \right) \left[ a \Psi + \left( 1 - s^{\mathfrak{M}'}(\Psi) \right) \zeta \right] \\
&= s^\mathfrak{M}(\Psi) a^* \Phi - u' S_{\Phi^+,\Psi^+} a \Psi^+ - u' S_{\Phi^+,\Psi^+} (v')^* \left( 1 - s^{\mathfrak{M}'}(\Psi) \right) \zeta \\
&= s^\mathfrak{M}(\Psi) a^* \Phi - u' s^\mathfrak{M}(\Psi^+) a^* \Phi^+ - u' S_{\Phi^+,\Psi^+} \left( 1 - s^{\mathfrak{M}'}(\Psi^+) \right) (v')^* \zeta \\
&= s^\mathfrak{M}(\Psi) a^* \Phi - s^\mathfrak{M}(\Psi^+) a^* u' \Phi^+ = \left[ s^\mathfrak{M}(\Psi) - s^\mathfrak{M}(\Psi^+) \right] a^* \Phi = 0 \eqend{,}
\end{splitequation}
where in the second equality we used that $(v')^* \left( 1 - s^{\mathfrak{M}'}(\Psi) \right) = (v')^* \left( 1 - v' (v')^* \right) =\linebreak \left( 1 - s^{\mathfrak{M}'}(\Psi^+) \right) (v')^*$, which is annihilated by $S_{\Phi^+,\Psi^+}$, and in the last that $s^\mathfrak{M}(\Psi) = s^\mathfrak{M}(\Psi^+)$ since $\Psi$ and $\Psi^+$ induce the same state on $\mathfrak{M}$. Thanks to the relation~\cite[Eq~(C.23)]{arakimasuda1982}, this extends to the closure and the adjoint, and for the relative modular operator it follows that
\begin{equation}
\Delta_{\Phi,\Psi} = v' S_{\Phi^+,\Psi^+}^* s^{\mathfrak{M}'}(\Phi^+) \overline{S}_{\Phi^+,\Psi^+} (v')^* = v' \Delta_{\Phi^+,\Psi^+} (v')^* \eqend{,}
\end{equation}
using that $s( S_{\Phi,\Psi}^* ) = s^\mathfrak{M}(\Psi) s^{\mathfrak{M}'}(\Phi)$~\cite[Thm.~C.1~$(\alpha)$]{arakimasuda1982}. Using furthermore that $(v')^* v' = s^{\mathfrak{M}'}(\Psi^+)$ and that $\Delta_{\Phi^+,\Psi^+}$ is supported on $s^\mathfrak{M}(\Phi^+) s^{\mathfrak{M}'}(\Psi^+)$, it follows that also $f(\Delta_{\Phi,\Psi}) = v' f(\Delta_{\Phi^+,\Psi^+}) (v')^*$. Finally, we obtain
\begin{equation}
J_{\Phi,\Psi} = \overline{S}_{\Phi,\Psi} \Delta_{\Phi,\Psi}^{-\frac{1}{2}} = u' \overline{S}_{\Phi^+,\Psi^+} s^{\mathfrak{M}'}(\Psi^+) \Delta_{\Phi^+,\Psi^+}^{-\frac{1}{2}} (v')^* = u' J (v')^* \eqend{.}
\end{equation}
\end{proof}

In particular, the relative modular operator $\Delta_{\Phi,\Psi}$ is independent of the choice of representative vector for $\varphi$. The relative entropy between the two states is then defined as~\cite[Def.~3.1]{araki1977}
\begin{equation}
\label{eq:relative_entropy_general}
S_\mathrm{rel}(\psi\vert\varphi) \coloneq \begin{cases} \displaystyle\int_0^\infty \ln \lambda \total \left( \Phi^+, E_\lambda^{\Phi^+,\Psi^+} \Phi^+ \right) & \text{if} \quad s(\psi) \geq s(\phi) \\ + \infty & \text{otherwise} \eqend{,} \end{cases}
\end{equation}
where $s(\omega)$ denotes the support projection of $\omega$ (the smallest projection $p$ in $\mathfrak{M}$ such that $\omega(p) = 1$) and $E_\lambda^{\Phi^+,\Psi^+}$ denotes the spectral projections of $\Delta_{\Phi^+,\Psi^+}$. If $s(\psi) \geq s(\phi)$ (and thus in particular when $\psi$ is faithful such that $s(\psi) = \1$), by using the conjugation property $\left[ \ln \Delta_{\Phi^+,\Psi^+} + J \ln \Delta_{\Psi^+,\Phi^+} J \right] s^\mathfrak{M}(\Phi^+) s^{\mathfrak{M}'}(\Psi^+) = 0$~\cite[Remark~3.4]{araki1977} one obtains
\begin{splitequation}
\label{eq:phi_elambda_conjugation}
0 &= \left( \Phi^+, \left[ E_\lambda^{\Phi^+,\Psi^+} + J E_\lambda^{\Psi^+,\Phi^+} J \right] s^\mathfrak{M}(\Phi^+) s^{\mathfrak{M}'}(\Psi^+) \Phi^+ \right) \\
&= \left( \Phi^+, E_\lambda^{\Phi^+,\Psi^+} s^{\mathfrak{M}'}(\Psi^+) J \Phi^+ \right) + \left( \Phi^+, J E_\lambda^{\Psi^+,\Phi^+} J s^{\mathfrak{M}'}(\Psi^+) J \Phi^+ \right) \\
&= \left( \Phi^+, E_\lambda^{\Phi^+,\Psi^+} J s^\mathfrak{M}(\Psi^+) \Phi^+ \right) + \left( \Phi^+, J E_\lambda^{\Psi^+,\Phi^+} s^\mathfrak{M}(\Psi^+) \Phi^+ \right) \\
&= \left( \Phi^+, E_\lambda^{\Phi^+,\Psi^+} J \Phi^+ \right) + \left( E_\lambda^{\Psi^+,\Phi^+} \Phi^+, J \Phi^+ \right) \\
&= \left( \Phi^+, E_\lambda^{\Phi^+,\Psi^+} \Phi^+ \right) + \left( \Phi^+, E_\lambda^{\Psi^+,\Phi^+} \Phi^+, \right) \eqend{,}
\end{splitequation}
where we also used the fact that $J \Phi^+ = \Phi^+$ since $\Phi^+ \in \mathcal{P}$, that $J$ is antilinear, that $J s^{\mathfrak{M}'}(\Psi^+) J = s^\mathfrak{M}(\Psi^+)$, and that $s^\mathfrak{M}(\Psi^+) \Phi^+ = \Phi^+$ since $s(\psi) \geq s(\phi)$. Therefore, we obtain
\begin{equation}
S_\mathrm{rel}(\psi\vert\varphi) = \begin{cases} - \displaystyle\int_0^\infty \ln \lambda \total \left( \Phi^+, E_\lambda^{\Psi^+,\Phi^+} \Phi^+ \right) & \text{if} \quad s(\psi) \geq s(\phi) \\ + \infty & \text{otherwise} \eqend{,} \end{cases}
\end{equation}
which coincides with the formula~\eqref{eq:araki_relentropy} quoted in the introduction. From Eq.~\eqref{eq:relative_modular_representative_change}, switching the roles of $\Phi$ and $\Psi$ we obtain
\begin{equation}
\left( \Phi, E_\lambda^{\Psi,\Phi} \Phi \right) = \left( u' \Phi^+, u' E_\lambda^{\Psi^+,\Phi^+} (u')^* u' \Phi^+ \right) = \left( \Phi^+, E_\lambda^{\Psi^+,\Phi^+} \Phi^+ \right) \eqend{,}
\end{equation}
and thus the relative entropy is independent of the choice of representative vectors and only depends on the states $\psi$ and $\varphi$. Analogously, one shows that the Petz--Rényi relative entropies~\eqref{eq:petzrenyi_entropy} are independent of the choice of representative vectors.

We also know how the modular objects are related when the states are exchanged, namely~\cite[Thm.~C.1 ($\beta$5), ($\epsilon$1)]{arakimasuda1982} (with the second one following from spectral calculus)
\begin{equation}
J_{\Phi,\Psi}^* = J_{\Psi,\Phi} \eqend{,} \quad J_{\Phi,\Psi} f(\Delta_{\Phi,\Psi}) J_{\Psi,\Phi} = f^*(\Delta_{\Psi,\Phi}^{-1}) \eqend{.}
\end{equation}
In particular, for representative vectors $\Phi^+, \Psi^+$ in the positive cone and for $s(\psi) \geq s(\phi)$ such that $s^{\mathfrak{M}}(\Psi^+) \Phi^+ = \Phi^+$, we have
\begin{equation}
\left( \Psi^+, \Delta_{\Phi^+,\Psi^+} f(\Delta_{\Phi^+,\Psi^+}) \Psi^+ \right) = \left( \Phi^+, f^*(\Delta_{\Psi^+,\Phi^+}^{-1}) \Phi^+ \right)
\end{equation}
by a computation analogous to equation~\eqref{eq:phi_elambda_conjugation}, using that
\begin{equation}
\Delta_{\Phi^+,\Psi^+}^\frac{1}{2} \Psi^+ = J s^\mathfrak{M}(\Psi^+) \Phi^+ = J \Phi^+ = \Phi^+ \eqend{.}
\end{equation}
That is, for $s(\psi) \geq s(\phi)$ it holds that
\begin{equation}
\int_0^\infty f(\lambda) \total \left( \Phi^+, E_\lambda^{\Psi^+,\Phi^+} \Phi^+ \right) = \int_0^\infty \lambda f^*(\lambda^{-1}) \total \left( \Psi^+, E_\lambda^{\Phi^+,\Psi^+} \Psi^+ \right) \eqend{.}
\end{equation}
From this, it follows that the relative entropy can alternatively be written as the Petz quasi-entropy~\cite{petz1985}
\begin{equation}
S_\mathrm{rel}(\psi\vert\varphi) = \begin{cases} \displaystyle \int_0^\infty \lambda \ln \lambda \total \left( \Psi^+, E_\lambda^{\Phi^+,\Psi^+} \Psi^+ \right) & \text{if} \quad s(\psi) \geq s(\phi) \\ + \infty & \text{otherwise} \eqend{,} \end{cases}
\end{equation}
which has the advantage that the integrand is continuous at $\lambda = 0$.

In fact, this is just one member of a family of divergences defined using the relative modular operator, the Petz quasi-entropies $Q_f$~\cite{petz1985}. Given a convex function $f \colon [0,\infty) \to \mathbb{R}$ and two normal states $\varphi, \psi$, they are defined by
\begin{equation}
Q_f\left( \varphi \vert \psi \right) \coloneq \begin{cases} \displaystyle \int_0^\infty f(\lambda) \total \left( \Psi, E_\lambda^{\Phi,\Psi} \Psi \right) & \text{if} \quad s(\psi) \geq s(\phi) \\ + \infty & \text{otherwise} \eqend{,} \end{cases}
\end{equation}
where $\Phi, \Psi$ are representative vectors of $\varphi, \psi$, and $E_\lambda^{\Phi,\Psi}$ denotes the spectral projections of $\Delta_{\Phi,\Psi}$. As for the relative entropy, one shows that the definition is independent of the choice of representative. For $f = \lambda \ln \lambda$, we recover the relative entropy $Q_f\left( \varphi \vert \psi \right) = S_\mathrm{rel}(\psi\vert\varphi)$ (note that the order of the states is reversed). By using the conjugation property of functions of the relative modular operator, we also obtain
\begin{equation}
\label{eq:conjrelQ}
Q_f\left( \varphi \vert \psi \right) = \begin{cases} \displaystyle \int_0^\infty \lambda f^*(\lambda^{-1}) \total \left( \Phi^+, E_\lambda^{\Psi^+,\Phi^+} \Phi^+ \right) & \text{if} \quad s(\psi) \geq s(\phi) \\ + \infty & \text{otherwise} \eqend{,} \end{cases}
\end{equation}
where now $\Phi^+,\Psi^+ \in \mathcal{P}$ are the unique representatives of $\varphi,\psi$ in the positive cone. Exchanging $\varphi$ and $\psi$, we see that $Q_f\left( \varphi \vert \psi \right) = Q_{\hat{f}}\left( \psi \vert \varphi \right)$ with the dual function $\hat{f}(\lambda) = \lambda f^*(\lambda^{-1})$ if $s(\psi) = s(\varphi)$, in particular if both quasi-entropies are finite.

It is then clear that if $f \leq g$, we have
\begin{equation}
Q_f\left( \varphi \vert \psi \right) \leq Q_g\left( \varphi \vert \psi \right) \eqend{.}
\end{equation}
This can be used to prove bounds on relative entropy. For example, since the function
\begin{equation}
\label{eq:fa_def}
f_a(\lambda) = \lambda \begin{cases} \frac{\lambda^a - 1}{a} & a \neq 0 \\ \ln \lambda & a = 0 \eqend{,} \end{cases}
\qquad f''_a(\lambda) = (a+1) \lambda^{a-1}
\end{equation}
is monotonically increasing in $a$ for all $\lambda \in [0,\infty)$ and convex for $a \geq -1$, also $Q_{f_a}$ is well-defined and monotonically increasing for $a \geq -1$ (but it might be infinite for some $a$, and thus all larger ones). In particular, using equation~\eqref{eq:conjrelQ} we have for $s(\psi) \geq s(\phi)$
\begin{splitequation}
\label{eq:quasientropy_limit_relative}
\lim_{a \to 0^-} Q_{f_a}\left( \varphi \vert \psi \right) &= \lim_{a \to 0^-} \int_0^\infty \frac{\lambda^{-a} - 1}{a} \total \left( \Phi^+, E_\lambda^{\Psi^+,\Phi^+} \Phi^+ \right) \\
&= - \int_0^\infty \ln \lambda \total \left( \Phi^+, E_\lambda^{\Psi^+,\Phi^+} \Phi^+ \right) = S_\mathrm{rel}\left( \psi \vert \varphi \right) \eqend{,}
\end{splitequation}
where the limit exists (but may be $+ \infty$) due to~\cite[Lemma~2]{petz1986b}. We note that the limit is essentially the definition of relative entropy given by Uhlmann~\cite{uhlmann1977}, and that Eq.~\eqref{eq:quasientropy_limit_relative} was shown by Araki~\cite[Eq.~(4.17)]{araki1976}. Furthermore, the same limit is obtained as $a \to 0^+$ if $\Phi \in \dom{ \Delta_{\Psi,\Phi}^{-\epsilon} }$ for some $\epsilon > 0$, again using~\cite[Lemma~2]{petz1986b}, and in this case is always finite.

The Petz--Rényi relative entropies~\eqref{eq:petzrenyi_entropy} can then be written as
\begin{splitequation}
\label{eq:petzrenyi_entropy_quasientropy}
S_\alpha(\psi \vert \varphi) &= \frac{1}{\alpha-1} \ln \int_0^\infty \lambda^{1-\alpha} \total \left( \Phi, E_\lambda^{\Psi,\Phi} \Phi \right) \\
&= \frac{1}{\alpha-1} \ln\left[ \left( \Phi, \Phi \right) + (\alpha-1) \int_0^\infty \frac{\lambda^{1-\alpha} - 1}{\alpha - 1} \total \left( \Phi, E_\lambda^{\Psi,\Phi} \Phi \right) \right] \\
&= \frac{1}{\alpha-1} \ln\left[ 1 + (\alpha-1) Q_{f_{\alpha-1}}(\varphi \vert \psi) \right] \eqend{,}
\end{splitequation}
and since $Q_{f_{\alpha-1}}$ is well-defined (but might be infinite) for all $\alpha \geq 0$, one can extend the definition of the Petz--Rényi relative entropies to $\alpha > 1$ by this formula. For $\alpha \to 1^+$ they have the same limit as the quasi-entropies $Q_{f_{\alpha-1}}$, namely the relative entropy:
\begin{theorem}
\label{thm:petzrenyi_limit}
If the Petz--Rényi divergence $S_\alpha(\psi \vert \varphi)$ is finite for some $\alpha = \alpha_0 > 1$, we have
\begin{equation}
\lim_{\alpha \to 1^+} S_\alpha(\psi \vert \varphi) = S_\mathrm{rel}(\psi \vert \varphi) < \infty \eqend{.}
\end{equation}
\end{theorem}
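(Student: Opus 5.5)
Set $\mu$ to be the spectral measure $\total \mu(\lambda) = \total \left( \Phi, E_\lambda^{\Psi,\Phi} \Phi \right)$, a probability measure on $[0,\infty)$ (since $\norm{\Phi} = 1$). By Eq.~\eqref{eq:petzrenyi_entropy_quasientropy}, finiteness of $S_{\alpha_0}(\psi\vert\varphi)$ for $\alpha_0 > 1$ means that $s(\psi) \geq s(\varphi)$ and $\int \lambda^{1-\alpha_0} \total\mu < \infty$. The latter is the statement that $\Phi \in \dom{ \Delta_{\Psi,\Phi}^{-\epsilon} }$ with $\epsilon = (\alpha_0-1)/2 > 0$. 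Because of this, $\mu$ carries no mass at $\lambda = 0$, and $\lambda^{-a}$ is $\mu$-integrable for every $a \in [0, \alpha_0 - 1]$: on $(0,1]$ we have $\lambda^{-a} \leq \lambda^{1-\alpha_0}$, and on $[1,\infty)$ we have $\lambda^{-a} \leq 1$. I will freely pass between representative vectors, using Thm.~\ref{thm:modular_representative_change} together with the independence of $\mu$ from the representative shown above.

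The first step is to control the quasi-entropies as $a \to 0^+$. For $a = \alpha - 1 \in (0, \alpha_0 - 1]$ I write
\begin{equation}
Q_{f_a}(\varphi\vert\psi) = \int_0^\infty \frac{\lambda^{-a} - 1}{a} \total\mu(\lambda) \eqend{.}
\end{equation}
This follows from Eq.~\eqref{eq:conjrelQ}, since $\lambda f_a^*(\lambda^{-1}) = (\lambda^{-a}-1)/a$, and it is exactly the expression under the logarithm in Eq.~\eqref{eq:petzrenyi_entropy_quasientropy}. The integrand is monotone in $a$ and tends pointwise to $-\ln\lambda$ as $a \to 0^+$. It is dominated uniformly for $a \in (0, a_0]$, with $a_0 = \alpha_0 - 1$, as follows. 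On $\lambda \geq 1$ its absolute value is at most $\ln \lambda$, which is bounded by a multiple of $\lambda^{1/2}$. On $\lambda < 1$ it is bounded by $(\lambda^{-a_0}-1)/a_0$ by monotonicity in $a$ (convexity of $a \mapsto \lambda^{-a}$).

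The large-$\lambda$ part needs separate care, because $\mu$ need not have a finite first moment. There I would instead use monotone convergence: for $\lambda \geq 1$ the integrand $-(1-\lambda^{-a})/a$ is nonpositive and increases to $-\ln\lambda$ as $a \downarrow 0$, so dominated convergence is not actually needed on that part. Combining the two regions gives
\begin{equation}
\lim_{a\to 0^+} Q_{f_a} = -\int \ln\lambda \total\mu = S_\mathrm{rel}(\psi\vert\varphi) \eqend{,}
\end{equation}
by Eq.~\eqref{eq:quasientropy_limit_relative}. Alternatively, I could simply invoke the cited \cite[Lemma~2]{petz1986b} remark stated right after Eq.~\eqref{eq:quasientropy_limit_relative}. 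Finiteness of the limit follows because it is bounded above by $Q_{f_{a_0}} < \infty$ (monotonicity in $a$), and bounded below by $\int (1 - \lambda)\total\mu \geq 1 - \norm{\Delta^{1/2}_{\Psi,\Phi}\Phi}^2 = 1 - \psi(s(\varphi)) \geq 0$.

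The second step is to pass the limit through the logarithm. I set $q(a) = Q_{f_a}(\varphi\vert\psi)$, which converges to the finite number $S := S_\mathrm{rel}(\psi\vert\varphi)$. Then
\begin{equation}
S_{1+a} = \frac{1}{a}\ln\left(1 + a\, q(a)\right) \eqend{.}
\end{equation}
Since $a q(a) \to 0$, the expansion $\ln(1+x) = x + \bigo{x^2}$ gives $S_{1+a} = q(a) + \bigo{a\, q(a)^2} \to S$.

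The main obstacle I expect is the integrability bookkeeping in the first step: justifying the interchange of limit and integral, and ensuring that no mass sits at $\lambda = 0$ or $\lambda = \infty$ and spoils the formulas. I would handle this with the split at $\lambda = 1$ described above, using monotone convergence on each half separately.
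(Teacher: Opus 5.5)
Your proof is correct and follows the same route as the paper. The paper likewise reduces the claim to $Q_{f_{\alpha-1}}(\varphi \vert \psi) \to S_\mathrm{rel}(\psi \vert \varphi)$, which it takes from \cite[Lemma~2]{petz1986b} while you verify it by hand, and then passes through the logarithm using $\frac{x}{1+x} \leq \ln(1+x) \leq x$ together with $(\alpha-1) Q_{f_{\alpha-1}}(\varphi \vert \psi) \to 0$, the counterpart of your expansion step.

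There are two harmless slips. First, $\mu$ does have a finite first moment, $\int \lambda \total\mu = \psi(s(\varphi)) \leq 1$ (you use this yourself at the end), so dominated convergence also works on $\lambda \geq 1$. Second, for $\lambda \geq 1$ the integrand $(\lambda^{-a}-1)/a$ decreases rather than increases to $-\ln\lambda$ as $a \downarrow 0$, so monotone convergence should be applied to its negative.
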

\begin{proof}
Since by assumption $S_{\alpha_0}(\psi \vert \varphi)$ is finite, it follows that also $Q_{f_{\alpha_0-1}}(\varphi \vert \psi) < \infty$, and since $Q_{f_a}(\varphi \vert \psi)$ is monotonically increasing in $a$ it follows that all $Q_{f_{\alpha-1}}(\varphi \vert \psi)$ with $\alpha \in [1,\alpha_0]$ are finite, and hence all $S_\alpha(\psi \vert \varphi)$ with $\alpha \in (1, \alpha_0]$ are finite. The fundamental inequality
\begin{equation}
\frac{x}{1+x} \leq \ln(1+x) \leq x
\end{equation}
then yields
\begin{equation}
\frac{Q_{f_{\alpha-1}}(\varphi \vert \psi)}{1+(\alpha-1) Q_{f_{\alpha-1}}(\varphi \vert \psi)} \leq S_\alpha(\psi \vert \varphi) \leq Q_{f_{\alpha-1}}(\varphi \vert \psi) \eqend{,}
\end{equation}
and since
\begin{equation}
\lim_{\alpha \to 1^+} (\alpha-1) Q_{f_{\alpha-1}}(\varphi \vert \psi) = \lim_{\alpha \to 1^+} \int_0^\infty \left( \lambda^{1-\alpha} - 1 \right) \total \left( \Phi^+, E_\lambda^{\Psi^+,\Phi^+} \Phi^+ \right) = 0 \eqend{,}
\end{equation}
where the last equality follows by dominated convergence, we obtain
\begin{equation}
\lim_{\alpha \to 1^+} S_\alpha(\psi \vert \varphi) = \lim_{\alpha \to 1^+} Q_{f_{\alpha-1}}(\varphi \vert \psi) = S_\mathrm{rel}(\psi \vert \varphi) \eqend{,}
\end{equation}
and the limit is finite.
\end{proof}

Finally, we want to give a relation between the relative modular operators for the algebra $\mathfrak{M}$ and for the commutant $\mathfrak{M}'$. From~\cite[Thm.~2.4]{araki1977} it follows that
\begin{equation}
\label{eq:delta_commutant}
\Delta'_{\Psi^+, \Phi^+} \Delta_{\Phi^+, \Psi^+} = \overline{F}^*_{\Psi^+, \Phi^+} \overline{F}_{\Psi^+, \Phi^+} \Delta_{\Phi^+, \Psi^+} = J \Delta_{\Psi^+, \Phi^+} J \Delta_{\Phi^+, \Psi^+} = s^{\mathfrak{M}'}(\Psi^+) s^\mathfrak{M}(\Phi^+) \eqend{,}
\end{equation}
i.e., the relative modular operator $\Delta'_{\Psi^+, \Phi^+}$ of the commutant is obtained by inverting and switching the two state vectors. By spectral calculus, this also holds for functions of $\Delta'_{\Psi^+, \Phi^+}$.

\subsection{Araki--Masuda divergences}

The non-commutative $L^p$ norms with $p \in [2,\infty]$ are defined for $\Psi \in \mathcal{H}$ by~\cite[Eq.~(1.4)]{arakimasuda1982}
\begin{equation}
\label{eq:lpnorm}
\norm{ \Psi }_{p,\Omega} \coloneq \sup_{\xi \in \mathcal{H}\colon \norm{\xi} = 1} \norm{ \Delta_{\xi, \Omega}^{\frac{1}{2}-\frac{1}{p}} \Psi } \eqend{,}
\end{equation}
where here and in the following we set $\frac{1}{\infty} \coloneq 0$, and $\norm{ \Delta_{\xi, \Omega}^{\frac{1}{2}-\frac{1}{p}} \Psi } \coloneq + \infty$ if $\Psi \not\in \dom{ \Delta_{\xi, \Omega}^{\frac{1}{2}-\frac{1}{p}} }$. Therefore, for $p > 2$ the norm $\norm{ \Psi }_{p,\Omega}$ may be infinite even if $\norm{ \Psi }$ is finite, since $\Psi$ might not be in the domain of the $(\frac{1}{2}-\frac{1}{p})$-th power of the relative modular operator $\Delta_{\xi, \Omega}$ for some or all $\xi$, or if the supremum is infinite. The norms for $p \in [1,2)$ may be defined either directly (by an analogous definition involving an infimum) or through a duality relation analogous to classical $L^p$ spaces. However, we will only need the norms for $p \geq 2$. Moreover, they only depend on the positive part of the polar decomposition of $\Psi$, i.e., for any $\Psi$ there exists a unique partial isometry $u \in \mathfrak{M}$ such that $u^* \Psi = \abs{\Psi} \in \mathcal{P}$ and $\norm{ \Psi }_{p,\Omega} = \norm{ \abs{\Psi} }_{p,\Omega}$~\cite[Thm.~3 (3)]{arakimasuda1982}. We have the fundamental inequality
\begin{lemma}[{\cite[Prop.~4]{bertascholztomamichel2018}}]
\label{lem:lpnorm_inequality}
The non-commutative $L^p$ norms~\eqref{eq:lpnorm} fulfill
\begin{equation}
\label{eq:lpnorm_interpolation}
\norm{ \Psi }_{p_\theta,\Omega} \leq \norm{ \Psi }_{p_0,\Omega}^{1-\theta} \norm{ \Psi }_{p_1,\Omega}^\theta \quad \text{with} \quad \frac{1}{p_\theta} \coloneq \frac{1-\theta}{p_0} + \frac{\theta}{p_1}
\end{equation}
for all $p_0, p_1 \in [2,\infty]$, $\theta \in [0,1]$. In particular, the function $p \mapsto \ln \norm{ \Psi }_{p,\Omega}^p$ is convex in $[2,\infty]$ for fixed $\Psi$ and $\Omega$~\cite[Cor.~5]{bertascholztomamichel2018}, that is
\begin{equation}
\label{eq:lpnorm_logconvex}
\ln \norm{ \Psi }_{p_\theta,\Omega}^{p_\theta} \leq \frac{p_\theta}{p_0} (1-\theta) \ln \norm{ \Psi }_{p_0,\Omega}^{p_0} + \frac{p_\theta}{p_1} \theta \ln \norm{ \Psi }_{p_1,\Omega}^{p_1} \eqend{.}
\end{equation}
\end{lemma}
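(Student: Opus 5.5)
We will prove the interpolation inequality \eqref{eq:lpnorm_interpolation} with a Hadamard three-lines argument applied to the analytic family $z \mapsto \Delta_{\xi,\Omega}^{z}\Psi$. The log-convexity \eqref{eq:lpnorm_logconvex} then follows by pure algebra.

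\textbf{Reductions.} Without loss of generality both norms on the right-hand side are finite, otherwise there is nothing to prove. We may also take $\Psi = \abs{\Psi} \in \mathcal{P}$, since the norms depend only on $\abs{\Psi}$. Set $\gamma_j \coloneq \frac{1}{2}-\frac{1}{p_j} \in [0,\frac{1}{2}]$, so that $\gamma_\theta = (1-\theta)\gamma_0 + \theta\gamma_1$ corresponds to $p_\theta$. Fix a unit vector $\xi$ and a unit test vector $\eta$. The quantity to bound is $\abs{(\eta, \Delta_{\xi,\Omega}^{\gamma_\theta}\Psi)}$, whose supremum over $\eta$ and then over $\xi$ gives $\norm{\Psi}_{p_\theta,\Omega}$.

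\textbf{Three-lines step.} We consider
\begin{equation}
F(z) \coloneq \bigl(\eta, \Delta_{\xi,\Omega}^{(1-z)\gamma_0 + z\gamma_1}\Psi\bigr)
\end{equation}
on the strip $0 \le \Re z \le 1$. Since $\Psi \in \dom{\Delta_{\xi,\Omega}^{\gamma_0}} \cap \dom{\Delta_{\xi,\Omega}^{\gamma_1}}$, spectral calculus shows that $\Psi$ lies in the domain of every intermediate power. Hence $F$ is bounded and continuous on the closed strip and analytic inside. On the line $\Re z = j$ the extra factor is the unitary $\Delta_{\xi,\Omega}^{\mathi t(\gamma_1-\gamma_0)}$ restricted to the support, so
\begin{equation}
\abs{F(j+\mathi t)} \le \norm{\Delta_{\xi,\Omega}^{\gamma_j}\Psi} \le \norm{\Psi}_{p_j,\Omega} .
\end{equation}
The three-lines theorem then yields the claim with the same $\xi$ at both endpoints.

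\textbf{Main obstacle.} This argument actually gives only the fixed-$\xi$ bound $\norm{\Delta_{\xi,\Omega}^{\gamma_\theta}\Psi} \le \norm{\Delta_{\xi,\Omega}^{\gamma_0}\Psi}^{1-\theta}\norm{\Delta_{\xi,\Omega}^{\gamma_1}\Psi}^{\theta}$. That is already enough: bounding each factor by its supremum over $\xi$ and then taking the supremum on the left gives \eqref{eq:lpnorm_interpolation}. The only delicate points are domain and support issues:
\begin{itemize}
\item $\Delta_{\xi,\Omega}$ has nontrivial support $s^\mathfrak{M}(\xi)s^{\mathfrak{M}'}(\Omega)$, so its powers and imaginary powers must be understood on that support.
\item Boundedness of $F$ on the strip follows from $\lambda^{\Re w} \le \lambda^{\gamma_0}+\lambda^{\gamma_1}$ together with spectral integration.
\end{itemize}
If one prefers to avoid the limiting-value issues, one can first truncate $\Psi$ by the spectral projections of $\Delta_{\xi,\Omega}$ and then pass to the limit by monotone convergence.

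\textbf{Convexity corollary.} Taking logarithms of \eqref{eq:lpnorm_interpolation} and multiplying by $p_\theta$ gives
\begin{equation}
\ln\norm{\Psi}_{p_\theta}^{p_\theta} \le p_\theta(1-\theta)\ln\norm{\Psi}_{p_0} + p_\theta\,\theta\ln\norm{\Psi}_{p_1},
\end{equation}
which is exactly \eqref{eq:lpnorm_logconvex}. The coefficients $\frac{p_\theta}{p_0}(1-\theta)$ and $\frac{p_\theta}{p_1}\theta$ sum to $1$ by the definition of $p_\theta$, and they are the barycentric weights expressing $p_\theta$ as a convex combination of $p_0$ and $p_1$. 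This is precisely convexity of $p \mapsto \ln\norm{\Psi}_p^p$ along $[p_0,p_1]$. The endpoint $p_1=\infty$ is handled by taking limits, or by using the form \eqref{eq:lpnorm_logconvex} directly with the convention that the corresponding weight vanishes.
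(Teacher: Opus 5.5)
Your proposal is correct and follows essentially the paper's route. For fixed $\xi$ you run a Hadamard three-lines argument on the analytic family $z \mapsto \Delta_{\xi,\Omega}^{z}$ applied to $\Psi$, bound each endpoint by its supremum over $\xi$, and get the log-convexity by plain algebra. Your application of three-lines to the scalar function $(\eta, \Delta_{\xi,\Omega}^{(1-z)\gamma_0 + z\gamma_1}\Psi)$ is, if anything, a cleaner justification than the paper's use of $\norm{\Delta_{\xi,\Omega}^{z} \Phi}$, which is log-subharmonic rather than holomorphic.

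The only slip is your aside on $p_1 = \infty$. Setting the weight $\frac{p_\theta}{p_1}\theta$ to zero would give $\norm{\Psi}_{p_\theta,\Omega}^{p_\theta} \leq \norm{\Psi}_{p_0,\Omega}^{p_0}$, which is false in general, already for commutative $L^p$ spaces. That term must instead be read as $p_\theta \theta \ln \norm{\Psi}_{p_1,\Omega}$, exactly as in the form you derived first.
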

\begin{proof}
The inequality~\eqref{eq:lpnorm_interpolation} is proven in~\cite{bertascholztomamichel2018} using interpolation theory, and we reproduce a streamlined version of the proof here. W.l.o.g. we may assume that $p_0 \leq p_1$ (otherwise change $\theta \to 1-\theta$), and that $\norm{ \Psi }_{p_1,\Omega} < \infty$ since otherwise the inequality is trivial. Note first that
\begin{equation}
\frac{1}{p_1} = \frac{1-\theta}{p_1} + \frac{\theta}{p_1} \leq \frac{1}{p_\theta} \leq \frac{1-\theta}{p_0} + \frac{\theta}{p_0} = \frac{1}{p_0} \eqend{,}
\end{equation}
and thus we have $p_\theta \geq 2$. From $\norm{ \Psi }_{p_1,\Omega} < \infty$ it follows by definition that $\Psi \in \dom{ \Delta_{\xi, \Omega}^{\frac{1}{2}-\frac{1}{p_1}} }$ for all $\xi \in \mathcal{H}$, and the inequality $\lambda^{1-\frac{2}{p}} \leq 1 + \lambda^{1 - \frac{2}{p_1}}$ for $\lambda \geq 0$ and $p \geq p_1$ shows that
\begin{equation}
\norm{ \Delta_{\xi, \Omega}^{\frac{1}{2}-\frac{1}{p}} \Psi }^2 \leq \norm{ \Psi }^2 + \norm{ \Delta_{\xi, \Omega}^{\frac{1}{2}-\frac{1}{p_1}} \Psi }^2 < \infty
\end{equation}
for all $\xi \in \mathcal{H}$ (by the spectral theorem, using that $\Delta_{\xi, \Omega} \geq 0$), such that taking the supremum over $\xi$ we have also $\norm{ \Psi }_{p_0,\Omega} < \infty$ and $\norm{ \Psi }_{p_\theta,\Omega} < \infty$.

Consider then the strip $B_\alpha \coloneq \{ z \in \mathbb{C} \colon \Re z \in [0,\alpha] \}$, and assume that $\Phi \in \dom{ \Delta_{\xi, \Omega}^\alpha }$. Then the function $z \mapsto \Delta_{\xi, \Omega}^z \Phi$ is holomorphic in the interior of $B_\alpha$ and continuous and bounded on its closure~\cite[Lemma~VI~2.3]{takesaki2003}, and it follows that the function $f(z) = \norm{ \Delta_{\xi, \Omega}^z \Phi }$ has the same properties. We apply Hadamard's three lines lemma
\begin{equation}
\label{eq:hadamard_lemma}
\abs{ f(x + \mathi y) } \leq \left[ \sup_{y \in \mathbb{R}} \abs{ f(a + \mathi y) } \right]^\frac{b - x}{b-a} \left[ \sup_{y \in \mathbb{R}} \abs{ f(b + \mathi y) } \right]^\frac{x - a}{b-a} \eqend{,} \quad x = \Re z \in [a, b]
\end{equation}
to obtain
\begin{equation}
\norm{ \Delta_{\xi, \Omega}^x \Phi } = \norm{ \Delta_{\xi, \Omega}^{x+\mathi y} \Phi } \leq \left[ \sup_{y \in \mathbb{R}} \norm{ \Delta_{\xi, \Omega}^{\mathi y} \Phi } \right]^{1-\frac{x}{\alpha}} \left[ \sup_{y \in \mathbb{R}} \norm{ \Delta_{\xi, \Omega}^{\alpha + \mathi y} \Phi } \right]^\frac{x}{\alpha} = \norm{ \Phi }^{1-\frac{x}{\alpha}} \norm{ \Delta_{\xi, \Omega}^\alpha \Phi }^\frac{x}{\alpha}
\end{equation}
for all $x \in [0,\alpha]$. Taking $\alpha = \frac{1}{p_0} - \frac{1}{p_1}$, $\Phi = \Delta_{\xi, \Omega}^{\frac{1}{2} - \frac{1}{p_0}} \Psi$ (which fulfills the assumption since $\Psi \in \dom{ \Delta_{\xi, \Omega}^{\frac{1}{2} - \frac{1}{p_1}} }$) and $x = \alpha \theta$, it follows that
\begin{equation}
\norm{ \Delta_{\xi, \Omega}^{\frac{1}{2} - \frac{1-\theta}{p_0} - \frac{\theta}{p_1}} \Psi } \leq \norm{ \Delta_{\xi, \Omega}^{\frac{1}{2} - \frac{1}{p_0}} \Psi }^{1-\theta} \norm{ \Delta_{\xi, \Omega}^{\frac{1}{2} - \frac{1}{p_1}} \Psi }^\theta
\end{equation}
for all $\theta \in [0,1]$. Taking the supremum over $\xi \in \mathcal{H}$ with $\norm{ \xi } = 1$, the first inequality~\eqref{eq:lpnorm_interpolation} follows, and the second one~\eqref{eq:lpnorm_logconvex} is obtained by a straightforward computation.
\end{proof}

We note that the non-commutative $L^p$ norms with $p \geq 2$ are finite for $\Psi = b \Omega$ with $b \in \mathfrak{M}$, since these are in the domain of $\Delta_{\xi, \Omega}^\frac{1}{2}$ and thus in the domain of $\Delta_{\xi, \Omega}^{\frac{1}{2} - \frac{1}{p}}$ for all $p \in [2,\infty]$. In fact, we have $\norm{ \Delta_{\xi,\Omega}^\frac{1}{2} b \Omega } = \norm{ J_{\xi,\Omega} \overline{S}_{\xi,\Omega} b \Omega } = \norm{ b^* \xi } \leq \norm{ b^* }_\mathrm{op}$, where we used that $\norm{ J_{\xi,\Omega} }_\mathrm{op} = 1$. However, they can also be extended to operators affiliated with $\mathfrak{M}$, which are the operators commuting with every unitary $u' \in \mathfrak{M}'$. In particular, consider a closed and densely defined operator $a$ affiliated with $\mathfrak{M}$ and its polar decomposition $a = u \abs{a}$. Then~\cite[Lemma 4.4.1]{murrayvonneumann1936} the partial isometry lies in the algebra, $u \in \mathfrak{M}$, and the absolute value $\abs{a}$ is affiliated with $\mathfrak{M}$. Let $E_I$ be the spectral projections of $\abs{a}$, then $E_{[0,n]} \abs{a}$ is bounded and hence an element of $\mathfrak{M}$, for which the non-commutative $L^p$ norm (when applied to $\Omega$) is finite.
\begin{definition}
\label{def:lpnorm_affiliated}
If the limit
\begin{equation}
\norm{ a \Omega }_{p,\Omega} \coloneq \lim_{n \to \infty} \norm{ u E_{[0,n]} \abs{a} \Omega }_{p,\Omega} = \lim_{n \to \infty} \norm{ E_{[0,n]} \abs{a} \Omega }_{p,\Omega}
\end{equation}
is finite, we call it the $L^p$ norm of $a \Omega$. Otherwise we set $\norm{ a \Omega }_{p,\Omega} \coloneq + \infty$.
\end{definition}
Therefore, even for $p \geq 2$ the $L^p$ norm for affiliated operators takes values in the extended positive reals $\overline{\mathbb{R}}_+$. The last equality is obtained as follows: Since $\norm{ \Psi }_{p,\Omega} = \norm{ \abs{\Psi} }_{p,\Omega}$~\cite[Thm.~3 (3)]{arakimasuda1982} where $\abs{\Psi} = u^* \Psi \in \mathcal{P}$ for a unique partial isometry $u \in \mathfrak{M}$ with $u^* u = s^\mathfrak{M}(\abs{\Psi})$ and $u u^* = s^\mathfrak{M}(\Psi)$, it follows that the non-commutative $L^p$ norms are invariant under the action of unitary operators $v \in \mathfrak{M}$. Namely, consider the vector $v \Psi = v u \abs{\Psi}$, and note that $v u$ is still a partial isometry. Since $( v u )^* v u = u^* u = s^\mathfrak{M}(\abs{\Psi})$ and $v u ( v u )^* = v s^\mathfrak{M}(\Psi) v^* = s^\mathfrak{M}(v \Psi)$, where the last equality follows because by definition $s^\mathfrak{M}(v \Psi)$ is the projector on $\overline{\mathfrak{M}' v \Psi} = v \overline{\mathfrak{M}' \Psi}$, by uniqueness we have $\abs{v \Psi} = \abs{\Psi}$. Because $\Psi$ and $v \Psi$ correspond to different states on $\mathfrak{M}$, (functions of) the non-commutative $L^p$ norms for the algebra are not good measures of distinguishability. On the other hand, since acting with a unitary operator $v' \in \mathfrak{M}'$ from the commutant does not change the state on the algebra, the non-commutative $L^p$ norms for the commutant seem more sensible.

For $\alpha \in (1,\infty)$, \cite{bertascholztomamichel2018} defines the Araki--Masuda divergence of order $\alpha$ as
\begin{equation}
\label{eq:arakimasuda_divergence}
D_\alpha\left( \omega \vert \psi \right) \coloneq \frac{1}{\alpha-1} \ln \norm{ \Psi }^{2 \alpha}_{2 \alpha,\Omega} \eqend{,}
\end{equation}
where $\Psi$ is a representative vector of the normal state $\psi$ (and thus has $\norm{ \Psi } = 1$), and the non-commutative $L^p$ norm is computed with respect to the commutant $\mathfrak{M}'$~\cite[App.~C]{bertascholztomamichel2018}, \cite[Sec.~3.1.1]{jencova2018}. From the above properties of the non-commutative $L^p$ norms, we obtain
\begin{lemma}[{\cite[Lemma 8]{bertascholztomamichel2018}}]
\label{lem:arakimasuda_increasing}
For fixed states $\omega$ and $\psi$, the Araki--Masuda divergence is a continuous and monotonically increasing function of $\alpha \in (1,\infty)$.
\end{lemma}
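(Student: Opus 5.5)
The plan is to derive monotonicity from the log-convexity in Lemma~\ref{lem:lpnorm_inequality}, applied to the commutant $\mathfrak{M}'$ with $p = 2\alpha$. Set $g(p) \coloneq \ln \norm{ \Psi }_{p,\Omega}^p$ for $p \in [2,\infty)$, with the norm taken with respect to $\mathfrak{M}'$. Then
\begin{equation}
D_\alpha(\omega\vert\psi) = \frac{g(2\alpha)}{\alpha-1} = 2 \, \frac{g(2\alpha) - g(2)}{2\alpha - 2} \eqend{,}
\end{equation}
provided $g(2) = 0$. This holds because for $p = 2$ the exponent $\frac{1}{2}-\frac{1}{p}$ vanishes, so $\norm{ \Psi }_{2,\Omega} = \norm{ \Psi } = 1$ for a normalised representative vector. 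Hence $D_\alpha$ is, up to the factor $2$, the slope of the secant of $g$ between the fixed point $p=2$ and the point $p = 2\alpha$.

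For a function with values in $(-\infty,+\infty]$ that is convex on $[2,\infty)$, the secant slope $p \mapsto \frac{g(p)-g(2)}{p-2}$ is nondecreasing in $p > 2$. This is the standard three-slope lemma. To make it explicit, take $2 < p_\theta < p_1$ and write $\frac{1}{p_\theta} = \frac{1-\theta}{2} + \frac{\theta}{p_1}$. Equation~\eqref{eq:lpnorm_logconvex} with $p_0 = 2$ and $g(2) = 0$ gives
\begin{equation}
g(p_\theta) \leq \frac{p_\theta \theta}{p_1} g(p_1) \eqend{.}
\end{equation}
A short computation gives $\theta = \frac{p_1 (p_\theta - 2)}{p_\theta (p_1 - 2)}$, so $\frac{p_\theta\theta}{p_1} = \frac{p_\theta-2}{p_1-2}$, and the inequality is exactly
\begin{equation}
\frac{g(p_\theta)}{p_\theta-2} \leq \frac{g(p_1)}{p_1-2} \eqend{.}
\end{equation}
This is monotonicity in $\alpha$. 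It also holds trivially when $g(p_1) = +\infty$. Note that $g$ is not forced to be nonnegative, so I will avoid using any sign of $g$ in this step.

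For continuity, I would argue on the set where $D_\alpha$ is finite. By the first paragraph of the proof of Lemma~\ref{lem:lpnorm_inequality}, finiteness at some $p_1$ implies finiteness for all $p \in [2,p_1]$. So the finiteness set is an interval $(1,\alpha^*)$ or $(1,\alpha^*]$, and on it $g$ is a finite convex function.
\begin{itemize}
\item A finite convex function is continuous on the interior of its domain. This gives continuity of $D_\alpha$ at every interior point.
\item At a finite right endpoint $\alpha^*$, and at the transition to $+\infty$, one needs lower semicontinuity of $p \mapsto \norm{ \Psi }_{p,\Omega}$.
\end{itemize}
I expect the second point to be the main obstacle. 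The plan for it is as follows. For each fixed $\xi$, the map $p \mapsto \norm{ \Delta'^{\frac{1}{2}-\frac{1}{p}}_{\xi,\Omega} \Psi }^2 = \int \lambda^{1-2/p} \total \mu_\xi$ is lower semicontinuous by Fatou's lemma. A supremum of lower semicontinuous functions is again lower semicontinuous. Combined with monotonicity, lower semicontinuity gives left-continuity at $\alpha^*$, including divergence to $+\infty$ when the value there is infinite. The resulting continuity is therefore understood in the extended sense on $\overline{\mathbb{R}}_+$.
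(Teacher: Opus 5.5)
Your monotonicity argument is the paper's proof. The paper also applies \eqref{eq:lpnorm_logconvex} with $p_0 = 2$, $p_1 = 2\beta > 2\alpha = p_\theta$ and $\ln \norm{\Psi}_{2,\Omega}^2 = 0$. It obtains $\theta = \frac{\beta(\alpha-1)}{\alpha(\beta-1)}$ and the coefficient $\frac{\alpha-1}{\beta-1}$, which is exactly your secant-slope inequality $\frac{g(p_\theta)}{p_\theta-2} \leq \frac{g(p_1)}{p_1-2}$.

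For continuity the paper only says ``convex, hence continuous on $(1,\infty)$''. That tacitly assumes $D_\alpha < \infty$ for every $\alpha$, as happens for instance when $\psi \lll \omega$. Your restriction to the finiteness interval, together with the Fatou and supremum-of-lower-semicontinuous-functions argument, is a genuine refinement of that step.

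One caveat on your last sentence. If the finiteness set is $(1,\alpha^*]$, then $D_\alpha$ jumps from the finite value $D_{\alpha^*}$ to $+\infty$ for $\alpha > \alpha^*$, and this really occurs. Take $L^\infty(0,1)$ with $\Omega = 1$, where the norms reduce to the classical $L^p$ norms. A normalized $\Psi \in L^{2\alpha^*} \setminus \bigcup_{p > 2\alpha^*} L^p$ gives exactly this jump. So ``continuity in the extended sense on $\overline{\mathbb{R}}_+$'' is false in that case. What your argument actually establishes is continuity on the finiteness interval plus left-continuity (lower semicontinuity) everywhere. That is the sharp statement, and the lemma's continuity claim should be read with this qualification.
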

\begin{proof}
Eq.~\eqref{eq:lpnorm_logconvex} shows that $\ln \norm{ \Psi }^{2 \alpha}_{2 \alpha,\Omega}$ is convex for $\alpha \in [1,\infty]$, hence continuous for $\alpha \in (1,\infty)$. For $\alpha = 1$, we have
\begin{equation}
\ln \norm{ \Psi }^{2 \alpha}_{2 \alpha,\Omega} = 2 \ln \left[ \sup_{\xi \in \mathcal{H}\colon \norm{\xi} = 1} \norm{ \Psi } \right] = 2 \ln 1 = 0 \eqend{,}
\end{equation}
since $\Psi$ is normalized. Eq.~\eqref{eq:lpnorm_logconvex} with $p_0 = 2$, $p_1 = 2 \beta > 2 \alpha = p_\theta > 2$ and $\theta = \frac{\beta (\alpha-1)}{\alpha (\beta-1)} \in (0,1)$ then shows that
\begin{equation}
\ln \norm{ \Psi }_{2 \alpha,\Omega}^{2 \alpha} \leq \frac{\alpha - 1}{\beta - 1} \ln \norm{ \Psi }_{2 \beta,\Omega}^{2 \beta} \eqend{,}
\end{equation}
which is the monotonous increase.
\end{proof}

If we consider $\psi = \omega \circ \operatorname{ad}_{b'} = \left( b' \Omega, \cdot\, b' \Omega \right)$ with $b'$ affiliated with $\mathfrak{M}'$ and $\Omega \in \dom{b'}$, this defines a state if $\omega\left( (b')^* b' \right) = 1$. However, it is possible that the Araki--Masuda divergences become infinite for some $\alpha_0$ (and hence also all larger $\alpha \geq \alpha_0$). Nevertheless, Lemma~\ref{lem:arakimasuda_increasing} holds (with values in the extended positive reals $\overline{\mathbb{R}}_+$) for $\Psi = b' \Omega$. Namely, for each $n \in \mathbb{N}$, Lemma~\ref{lem:arakimasuda_increasing} shows that
\begin{equation}
\ln \norm{ u E_{[0,n]} \abs{b'} \Omega }_{2 \alpha,\Omega}^{2 \alpha} \leq \frac{\alpha - 1}{\beta - 1} \ln \norm{ u E_{[0,n]} \abs{b'} \Omega }_{2 \beta,\Omega}^{2 \beta} \eqend{,}
\end{equation}
and taking the limit $n \to \infty$, the inequality remains true, possibly in $\overline{\mathbb{R}}_+$.

In the limit $\alpha \to 1$, the Araki--Masuda divergence reduces to the relative entropy~\eqref{eq:relative_entropy_general}. This was proven in~\cite[Prop.~3.8]{jencova2018} and ~\cite[Thm.~13]{bertascholztomamichel2018} under certain conditions. Since this is important for our results, we provide again the (somewhat) simplified and streamlined proof.
\begin{theorem}[{\cite[Cor.~3.6]{jencova2018}, \cite[Thm.~12]{bertascholztomamichel2018}}]
\label{thm:arakimasuda_bounded_petzrenyi}
For all $\alpha > 1$, the Araki--Masuda divergence is bounded from above and below by the Petz--Rényi relative entropy according to
\begin{equation}
\label{eq:arakimasuda_bounded_petzrenyi}
S_{2-\frac{1}{\alpha}}\left( \omega \vert \psi \right) \leq D_\alpha\left( \omega \vert \psi \right) \leq S_\alpha\left( \omega \vert \psi \right) \eqend{.}
\end{equation}
\end{theorem}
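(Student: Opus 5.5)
Throughout I use the conventions of the paper: the $L^p$ norm in~\eqref{eq:arakimasuda_divergence} is computed for the commutant $\mathfrak{M}'$, so $\norm{\Psi}_{2\alpha,\Omega} = \sup_{\norm{\xi}=1} \norm{ (\Delta'_{\xi,\Omega})^{\gamma} \Psi }$ with $\gamma \coloneq \frac{1}{2}-\frac{1}{2\alpha}$. The Petz--Rényi entropy is $S_\beta(\omega\vert\psi) = \frac{1}{\beta-1}\ln ( \Psi, \Delta_{\Omega,\Psi}^{1-\beta} \Psi )$. The first step is to rewrite both Petz--Rényi entropies in commutant language using~\eqref{eq:delta_commutant}: on the relevant supports $\Delta'_{\Psi,\Omega} = \Delta_{\Omega,\Psi}^{-1}$. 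Hence
\begin{equation}
S_\beta(\omega\vert\psi) = \frac{1}{\beta-1} \ln \norm{ (\Delta'_{\Psi,\Omega})^{\frac{\beta-1}{2}} \Psi }^2 \eqend{.}
\end{equation}
By Thm.~\ref{thm:modular_representative_change}, the choice of representatives does not matter here. I may also assume $\Omega$ is faithful, so that the support conditions are harmless.

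\emph{Lower bound.} I test the supremum with the admissible vector $\xi = \Psi$, which gives $\norm{\Psi}_{2\alpha,\Omega}^{2\alpha} \geq \norm{ (\Delta'_{\Psi,\Omega})^{\gamma}\Psi }^{2\alpha}$. For $\beta = 2-\frac{1}{\alpha}$ one has $\frac{\beta-1}{2} = \gamma$ and $\frac{1}{\beta-1} = \frac{\alpha}{\alpha-1}$. Taking $\frac{1}{\alpha-1}\ln$ of both sides therefore gives exactly $D_\alpha(\omega\vert\psi) \geq S_{2-\frac{1}{\alpha}}(\omega\vert\psi)$. This step needs no further work beyond checking domains: if the right-hand side of the test is infinite, then $D_\alpha = \infty$ as well.

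\emph{Upper bound.} I need, for every unit $\xi$,
\begin{equation}
\norm{ (\Delta'_{\xi,\Omega})^{\gamma} \Psi } \leq \norm{ (\Delta'_{\Psi,\Omega})^{\frac{\alpha-1}{2}} \Psi }^{\frac{1}{\alpha}} \eqend{,}
\end{equation}
assuming the right side is finite. The plan is a Hadamard three-lines argument, in the same spirit as the proof of Lemma~\ref{lem:lpnorm_inequality}, but now interpolating between two \emph{different} relative modular operators via the Connes cocycle $u'_t = (\Delta'_{\xi,\Omega})^{\mathi t} (\Delta'_{\Psi,\Omega})^{-\mathi t} \in \mathfrak{M}'$ (up to supports). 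I would define, for $\eta$ in a dense set, the function
\begin{equation}
F(z) = \bigl( \eta, (\Delta'_{\xi,\Omega})^{z} (\Delta'_{\Psi,\Omega})^{-z} \Phi_z \bigr) \eqend{,}
\end{equation}
where $\Phi_z = (\Delta'_{\Psi,\Omega})^{\frac{(\alpha-1) z}{\alpha \gamma}\cdot\frac{\gamma}{1}\cdot\frac{1}{2}\cdot 2}\Psi$ is chosen so that $F(\gamma)$ reproduces $( \eta, (\Delta'_{\xi,\Omega})^{\gamma}\Psi)$. The intended endpoint estimates are as follows. On $\Re z = 0$ the cocycle is a partial isometry, so $\abs{F}$ is controlled by $\norm{\eta}\, \norm{ (\Delta'_{\Psi,\Omega})^{\frac{\alpha-1}{2}}\Psi }$. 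On $\Re z = \frac12$ I would use $\norm{ (\Delta'_{\xi,\Omega})^{\frac12} a\Omega } = \norm{ a^*\xi } \leq \norm{a}_\mathrm{op}$ together with $(\Delta'_{\Psi,\Omega})^{-\frac12}\Psi = J\Omega$-type identities, so that the bound there is $\norm{\xi}\,\norm{\eta} = \norm{\eta}$. Interpolating at $\Re z = \gamma$ with weights $2\gamma = 1-\frac{1}{\alpha}$ and $\frac1\alpha$ then yields the displayed inequality, and taking the supremum over $\xi$ gives $D_\alpha \leq S_\alpha$.

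The main obstacle is precisely this second step: making the analytic continuation of the cocycle product rigorous (domains, and boundedness on the strip needed for Hadamard's lemma). I would first prove the inequality for $\Psi = b'\Omega$ with $b'$ analytic and $\xi$ faithful, where everything is bounded. I would then remove these assumptions by approximation, using lower semicontinuity of the supremum in $\xi$ and, for unbounded excitations, the monotone truncation of Def.~\ref{def:lpnorm_affiliated}. If the cocycle route becomes too technical, the fallback is to cite the Araki--Masuda Hölder inequality $\norm{ x'\Phi }_{p,\Omega} \leq \norm{x'}\,\ldots$ and the duality of $L^p$ norms instead.
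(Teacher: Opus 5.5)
Your rewriting via $\Delta'_{\Psi,\Omega}=\Delta_{\Omega,\Psi}^{-1}$ and your lower bound (testing the supremum with $\xi=\Psi$) are exactly the paper's. So is the intended mechanism for the upper bound: a three-lines argument on a strip of width $\frac12$, with one edge controlled by contractivity of imaginary powers and cocycles, the other by $\norm{(\Delta'_{\xi,\Omega})^{\frac12}x'\Omega}=\norm{(x')^*\xi}$, and weights $\frac1\alpha$, $1-\frac1\alpha$.

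The gap is that the function you write down does not implement this. Your exponent simplifies to $\frac{(\alpha-1)z}{\alpha}=2\gamma z$, so $(\Delta'_{\Psi,\Omega})^{-z}\Phi_z=(\Delta'_{\Psi,\Omega})^{-z/\alpha}\Psi$. Hence $F(\gamma)\neq(\eta,(\Delta'_{\xi,\Omega})^{\gamma}\Psi)$. Moreover, the bound $\norm{\eta}$ you claim on $\Re z=\frac12$ is false. Already for $\xi=\Psi\in\mathcal{P}$ and $z=\frac12$ the vector is $(\Delta'_{\Psi,\Omega})^{\gamma}\Psi$, whose norm is $\mathe^{\gamma S_{2-1/\alpha}(\omega\vert\psi)}>1$ whenever $\psi\neq\omega$.

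The missing idea is to pass to $\Psi^+=\abs{\Psi}$ and use $(\Delta'_{\Psi^+,\Omega})^{\frac12}\Omega=J\Psi^+=\Psi^+$. On the $L^p$ side this passage is justified by $\norm{\Psi}_{2\alpha,\Omega}=\norm{\abs{\Psi}}_{2\alpha,\Omega}$, not by Thm.~\ref{thm:modular_representative_change}. It turns the right-hand side into $\norm{(\Delta'_{\Psi^+,\Omega})^{\frac{\alpha}{2}}\Omega}$ and dictates the interpolating vector
\[
(\Delta'_{\xi,\Omega})^{z}\,(\Delta'_{\Psi^+,\Omega})^{\alpha(\frac12-z)}\,\Omega .
\]
Its norm is at most $\norm{(\Delta'_{\Psi^+,\Omega})^{\frac{\alpha}{2}}\Omega}$ on $\Re z=0$. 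On $\Re z=\frac12$ it equals $(\Delta'_{\xi,\Omega})^{\frac12}x'\Omega$, with $x'\in\mathfrak{M}'$ a contraction built from Connes cocycles and the modular group of $\Omega$, so its norm is at most $1$. At $z=\gamma$ it equals $(\Delta'_{\xi,\Omega})^{\gamma}\Psi^+$. Up to $z\mapsto\frac12-z$ this is the paper's $f_\chi$. It also avoids negative powers of $\Delta'_{\Psi,\Omega}$, whose support is $s^{\mathfrak{M}'}(\Psi)$ rather than the identity. Faithfulness of $\Omega$ therefore does not make the support issues in your $F$ harmless.

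Your plan for the domain questions also does not close. After proving the estimate for analytic $b'$ and faithful $\xi$, passing to the limit would require $\norm{(\Delta'_{\Psi_n,\Omega})^{\frac{\alpha-1}{2}}\Psi_n}\to\norm{(\Delta'_{\Psi,\Omega})^{\frac{\alpha-1}{2}}\Psi}$, or at least upper semicontinuity of the Petz--R\'enyi side along the approximating sequence. That is not available in general, and lower semicontinuity of the left-hand side alone does not suffice.

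No approximation is needed. Assume $\Omega\in\dom{(\Delta'_{\Psi^+,\Omega})^{\frac{\alpha}{2}}}$, since otherwise there is nothing to prove, and pair the vector above with $(\Delta'_{\xi,\Omega})^{z^*}\chi$ for $\chi$ in the dense set $\dom{(\Delta'_{\xi,\Omega})^{\frac12}}$. This gives a bounded holomorphic scalar function $F_\chi$ on the strip without knowing in advance that the vector lies in $\dom{(\Delta'_{\xi,\Omega})^{z}}$. The three-lines lemma then gives $\abs{F_\chi(z)}\le\norm{(\Delta'_{\Psi^+,\Omega})^{\frac{\alpha}{2}}\Omega}^{1-2\Re z}\norm{\chi}$. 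By the definition of the adjoint, the vector lies in that domain, with exactly the norm bound you need at $z=\gamma$. Your H\"older/duality fallback is too unspecific to serve as a proof.
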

\begin{proof}
Since $\omega$ is faithful such that $s(\omega) = s^\mathfrak{M}(\Omega) = \1$, the condition $s(\psi) \leq s(\omega)$ is always fulfilled. From the definitions~\eqref{eq:arakimasuda_divergence} of the Araki--Masuda divergence and~\eqref{eq:petzrenyi_entropy_quasientropy} of the Petz--Rényi relative entropy, which we write in the form
\begin{equation}
S_\alpha\left( \omega \vert \psi \right) = \frac{1}{\alpha-1} \ln \norm{ \Delta_{\Omega, \Psi}^\frac{1-\alpha}{2} \Psi }^2 \eqend{,}
\end{equation}
and using that the logarithm is monotonically increasing, we obtain that Eq.~\eqref{eq:arakimasuda_bounded_petzrenyi} is equivalent to
\begin{equation}
\label{eq:arakimasuda_proof_ineq}
\norm{ \Delta_{\Omega, \Psi}^\frac{1-\alpha}{2 \alpha} \Psi } \leq \norm{ \Psi }_{2 \alpha,\Omega} \leq \norm{ \Delta_{\Omega, \Psi}^\frac{1-\alpha}{2} \Psi }^\frac{1}{\alpha} \eqend{,}
\end{equation}
where we recall that the $L^p$ norm is computed with respect to the commutant $\mathfrak{M}'$. Writing $\Psi = u' \Psi^+$ with $u' \in \mathfrak{M}'$ a partial isometry and $\Psi^+ \in \mathcal{P}$ the unique representative in the positive cone, we have shown before that $S_\alpha\left( \omega \vert \psi \right)$ only depends on $\Psi^+$, i.e., that $\norm{ \Delta_{\Omega,\Psi}^r \Psi } = \norm{ \Delta_{\Omega,\Psi^+}^r \Psi^+ }$. Moreover, the Araki--Masuda divergences only depend on the positive part $\abs{\Psi}$ of the polar decomposition of $\Psi$, but since the $L^p$ norm is computed with respect to the commutant $\mathfrak{M}'$ we have $\Psi = u' \Psi^+ = v' \abs{\Psi}$ for another partial isometry $v' \in \mathfrak{M}'$. By the uniqueness of both decompositions and since $\abs{\Psi} \in \mathcal{P}$, we obtain $u' = v'$ and $\Psi^+ = \abs{\Psi}$. Note that $\Psi^+ = \abs{\Psi}$ is a representative of the state $\psi$ on $\mathfrak{M}$ (which is what we are interested in), but not of $\psi$ as a state on $\mathfrak{M}'$.

To show the inequalities~\eqref{eq:arakimasuda_proof_ineq} with $\Psi = \Psi^+$, we rewrite the outer terms by using that the relative modular operator $\Delta'$ of the commutant is related to the one of the algebra via~\eqref{eq:delta_commutant}
\begin{equation}
( \Delta'_{\Psi^+, \Omega} )^r \Delta_{\Omega, \Psi^+}^r = s^{\mathfrak{M}'}(\Psi^+) s^\mathfrak{M}(\Omega) = s^{\mathfrak{M}'}(\Psi^+) \eqend{.}
\end{equation}
We therefore obtain
\begin{equation}
\norm{ \Delta_{\Omega, \Psi}^r \Psi } = \norm{ \Delta_{\Omega, \Psi^+}^r \Psi^+ } = \norm{ ( \Delta'_{\Psi^+, \Omega} )^{-r} ( \Delta'_{\Psi^+, \Omega} )^r \Delta_{\Omega, \Psi^+}^r \Psi^+ } = \norm{ ( \Delta'_{\Psi^+, \Omega} )^{-r} \Psi^+ } \eqend{,}
\end{equation}
and it follows that we may write~\eqref{eq:arakimasuda_proof_ineq} as
\begin{equation}
\norm{ ( \Delta'_{\Psi^+, \Omega} )^{\frac{1}{2} - \frac{1}{2 \alpha}} \Psi^+ } \leq \sup_{\xi \in \mathcal{H}\colon \norm{\xi} = 1} \norm{ ( \Delta'_{\xi, \Omega} )^{\frac{1}{2}-\frac{1}{2 \alpha}} \Psi^+ } \leq \norm{ ( \Delta'_{\Psi^+, \Omega} )^\frac{\alpha-1}{2} \Psi^+ }^\frac{1}{\alpha} \eqend{.}
\end{equation}
The first inequality is now obvious, and for the second we write
\begin{splitequation}
\norm{ ( \Delta'_{\Psi^+, \Omega} )^\frac{\alpha-1}{2} \Psi^+ } &= \norm{ ( \Delta'_{\Psi^+, \Omega} )^\frac{\alpha}{2} ( \Delta'_{\Psi^+, \Omega} )^{-\frac{1}{2}} \Delta_{\Omega, \Psi^+}^{-\frac{1}{2}} \Delta_{\Omega, \Psi^+}^\frac{1}{2} \Psi^+ } \\
&= \norm{ ( \Delta'_{\Psi^+, \Omega} )^\frac{\alpha}{2} s^{\mathfrak{M}'}(\Psi^+) J \overline{S}_{\Omega, \Psi^+} \Psi^+ } = \norm{ ( \Delta'_{\Psi^+, \Omega} )^\frac{\alpha}{2} \Omega } \eqend{,}
\end{splitequation}
using that $s(\Delta'_{\Psi^+, \Omega}) = s^{\mathfrak{M}'}(\Psi^+) = J s^\mathfrak{M}(\Psi^+) J$, such that we have to show that
\begin{equation}
\label{eq:arakimasuda_proof_ineq_2}
\sup_{\xi \in \mathcal{H}\colon \norm{\xi} = 1} \norm{ ( \Delta'_{\xi, \Omega} )^{\frac{1}{2}-\frac{1}{2 \alpha}} \Psi^+ } \leq \norm{ ( \Delta'_{\Psi^+, \Omega} )^\frac{\alpha}{2} \Omega }^\frac{1}{\alpha} \eqend{.}
\end{equation}
Since the inequality is trivial if $\norm{ ( \Delta'_{\Psi^+, \Omega} )^\frac{\alpha}{2} \Omega } = \infty$, we may assume w.l.o.g. that $\Omega \in \dom{ ( \Delta'_{\Psi^+, \Omega} )^\frac{\alpha}{2} }$, and hence the function $z \mapsto ( \Delta'_{\Psi^+, \Omega} )^{\alpha z} \Omega$ is holomorphic in the interior of $B_\frac{1}{2}$ and continuous and bounded on its closure~\cite[Lemma~VI~2.3]{takesaki2003}. The same holds for any $\chi \in \dom{ ( \Delta'_{\xi, \Omega} )^\frac{1}{2} }$ for the function $z \mapsto ( \Delta'_{\xi, \Omega} )^{\frac{1}{2} - z} \chi$, such that the function $f_\chi$ with
\begin{equation}
f_\chi(z) = \left( ( \Delta'_{\xi, \Omega} )^{\frac{1}{2} - z^*} \chi, ( \Delta'_{\Psi^+, \Omega} )^{\alpha z} \Omega \right)
\end{equation}
is holomorphic in the interior of $B_\frac{1}{2}$ and continuous and bounded on its closure.

Let us show first that on the boundaries $\Re z = 0$ and $\Re z = \frac{1}{2}$ of $B_\frac{1}{2}$ the vector $( \Delta'_{\Psi^+, \Omega} )^{\alpha z} \Omega$ lies in the domain of $( \Delta'_{\xi, \Omega} )^{\frac{1}{2} - z}$. Since by Thm.~\ref{thm:modular_representative_change} the relative modular operator $\Delta'_{\xi, \Omega}$ does not depend on the representative vector $\xi$ but only on the state $\chi = \left( \xi, \cdot \, \xi \right)$ on $\mathfrak{M}'$, we may replace $\xi$ by $\xi^+ \in \mathcal{P}$, which is also the positive cone for $\mathfrak{M}'$. Analogously to the proof of~\cite[Lemma C.2 (1)]{arakimasuda1982}, we then obtain for $\Re z = 0$
\begin{splitequation}
\norm{ ( \Delta'_{\xi, \Omega} )^{\frac{1}{2} - \mathi y} ( \Delta'_{\Psi, \Omega} )^{\mathi \alpha y} \Omega } &= \norm{ ( \Delta'_{\xi^+, \Omega} )^{\frac{1}{2} - \mathi \alpha y} ( \Delta'_{\Psi, \Omega} )^{\mathi \alpha y} \Omega } \\
&= \norm{ J ( \Delta'_{\xi^+, \Omega} )^{\frac{1}{2} - \mathi \alpha y} ( \Delta'_{\Psi, \Omega} )^{\mathi \alpha y} \Omega } = \norm{ \overline{F}_{\xi^+, \Omega} ( D \chi \colon\! D \psi )'_{-\alpha y} \Omega } \\
&= \norm{ ( \Delta'_{\Psi, \Omega} )^{- \mathi \alpha y} ( \Delta'_{\xi^+, \Omega} )^{\mathi \alpha y} \xi^+ } = \norm{ \xi^+ } = 1 \eqend{,}
\end{splitequation}
where we used that $( \Delta'_{\xi^+, \Omega} )^{\mathi t}$ is unitary, that~\cite[Thm.~C.1 ($\beta$2) applied to the commutant]{arakimasuda1982}
\begin{equation}
( \Delta'_{\xi^+, \Omega} )^{- \mathi \alpha y} ( \Delta'_{\Psi, \Omega} )^{\mathi \alpha y} = ( D \chi \colon\! D \psi )'_{-\alpha y} \in \mathfrak{M}' \eqend{,}
\end{equation}
and that $J'_{\xi^+, \Omega} = J$~\cite[Thm.~2.4]{araki1977}. On the other hand, for $\Re z = \frac{1}{2}$ we have
\begin{equation}
\norm{ ( \Delta'_{\xi, \Omega} )^{- \mathi y} ( \Delta'_{\Psi^+, \Omega} )^{\frac{\alpha}{2} + \mathi \alpha y} \Omega } = \norm{ ( \Delta'_{\Psi^+, \Omega} )^\frac{\alpha}{2} \Omega } \eqend{,}
\end{equation}
which is finite by assumption. Hence by the definition of the adjoint for $\chi \in \dom{ ( \Delta'_{\xi, \Omega} )^\frac{1}{2} }$ we have the bounds
\begin{equations}
\abs{ f_\chi( \mathi y ) } &\leq \norm{ ( \Delta'_{\xi, \Omega} )^{\frac{1}{2} - \mathi y} ( \Delta'_{\Psi, \Omega} )^{\mathi \alpha y} \Omega } \norm{ \chi } \leq \norm{ \chi } \eqend{,} \\
\abs{ f_\chi\left( \frac{1}{2} + \mathi y \right) } &\leq \norm{ ( \Delta'_{\xi, \Omega} )^{- \mathi y} ( \Delta'_{\Psi, \Omega} )^{\frac{\alpha}{2} + \mathi \alpha y} \Omega } \norm{ \chi } \leq \norm{ ( \Delta'_{\Psi^+, \Omega} )^\frac{\alpha}{2} \Omega } \norm{ \chi } \eqend{,}
\end{equations}
and employing again Hadamard's three lines lemma~\eqref{eq:hadamard_lemma}, it follows that
\begin{equation}
\abs{ f_\chi(z) } \leq \norm{ ( \Delta'_{\Psi^+, \Omega} )^\frac{\alpha}{2} \Omega }^{2 \Re z} \norm{ \chi }
\end{equation}
for all $z \in B_\frac{1}{2}$. Since
\begin{equation}
\abs{ f_\chi(z) } = \abs{ [ f_\chi(z) ]^* } = \abs{ \left( ( \Delta'_{\Psi^+, \Omega} )^{\alpha z} \Omega, ( \Delta'_{\xi, \Omega} )^{\frac{1}{2} - z^*} \chi \right) } \eqend{,}
\end{equation}
it follows that $\chi \mapsto [ f_\chi(z) ]^*$ is a bounded linear map for all $z \in B_\frac{1}{2}$, such that $( \Delta'_{\Psi^+, \Omega} )^{\alpha z} \Omega$ is in the domain of $( \Delta'_{\xi, \Omega} )^{\frac{1}{2} - z}$ for all $z \in B_\frac{1}{2}$ and
\begin{equation}
f_\chi(z) = \left( \chi, ( \Delta'_{\xi, \Omega} )^{\frac{1}{2} - z} ( \Delta'_{\Psi^+, \Omega} )^{\alpha z} \Omega \right)
\end{equation}
holds. Moreover, since $\dom{ ( \Delta'_{\xi, \Omega} )^{\frac{1}{2} - z^*} } \supset \dom{ ( \Delta'_{\xi, \Omega} )^\frac{1}{2} }$ is dense in $\mathcal{H}$ we can extend the map $\chi \mapsto [ f_\chi(z) ]^*$ by continuity to all $\chi \in \mathcal{H}$ and obtain
\begin{equation}
\norm{ ( \Delta'_{\xi, \Omega} )^{\frac{1}{2} - z} ( \Delta'_{\Psi^+, \Omega} )^{\alpha z} \Omega } = \sup_{\chi \in \mathcal{H} \colon \norm{ \chi } = 1 } \abs{ f_\chi(z) } \leq \norm{ ( \Delta'_{\Psi^+, \Omega} )^\frac{\alpha}{2} \Omega }^{2 \Re z} \eqend{.}
\end{equation}
Choosing $z = (2\alpha)^{-1}$, it follows that
\begin{equation}
\norm{ ( \Delta'_{\xi, \Omega} )^{\frac{1}{2} - \frac{1}{2 \alpha}} ( \Delta'_{\Psi^+, \Omega} )^\frac{1}{2} \Omega } \leq \norm{ ( \Delta'_{\Psi^+, \Omega} )^\frac{\alpha}{2} \Omega }^\frac{1}{\alpha} \eqend{,}
\end{equation}
and taking the supremum over all $\xi \in \mathcal{H}$ with $\norm{ \xi } = 1$ and using that $( \Delta'_{\Psi^+, \Omega} )^\frac{1}{2} \Omega = J \Psi^+ = \Psi^+$, we also obtain the second inequality~\eqref{eq:arakimasuda_proof_ineq_2}.
\end{proof}
From this, we obtain
\begin{theorem}[{\cite[Prop.~3.8]{jencova2018}}]
\label{thm:arakimasuda_limit}
If the Araki--Masuda divergence $D_\alpha\left( \omega \vert \psi \right)$ is finite for some $\alpha = \alpha_0 > 1$, we have
\begin{equation}
\lim_{\alpha \to 1^+} D_\alpha\left( \omega \vert \psi \right) = S_\mathrm{rel}\left( \omega \vert \psi \right) < \infty \eqend{.}
\end{equation}
\end{theorem}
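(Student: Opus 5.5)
The plan is to squeeze $D_\alpha(\omega\vert\psi)$ between two Petz--Rényi divergences using Thm.~\ref{thm:arakimasuda_bounded_petzrenyi}, and then let both bounds tend to the relative entropy via Thm.~\ref{thm:petzrenyi_limit} and the quasi-entropy limit~\eqref{eq:quasientropy_limit_relative}.

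\textbf{Finiteness of a Petz--Rényi divergence.} Suppose $D_{\alpha_0}(\omega\vert\psi)<\infty$ for some $\alpha_0>1$. The lower bound in~\eqref{eq:arakimasuda_bounded_petzrenyi} gives $S_{2-1/\alpha_0}(\omega\vert\psi)\le D_{\alpha_0}(\omega\vert\psi)<\infty$. Since $\beta_0 \coloneq 2-1/\alpha_0>1$, Thm.~\ref{thm:petzrenyi_limit} applies with $\beta_0$ in place of $\alpha_0$. Hence $\lim_{\beta\to1^+}S_\beta(\omega\vert\psi)=S_\mathrm{rel}(\omega\vert\psi)<\infty$. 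The same monotonicity argument as in that proof shows that $S_\beta$ is finite for all $\beta\in(1,\beta_0]$.

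\textbf{Squeeze.} For $\alpha\in(1,\alpha_0]$ we have
\begin{equation*}
S_{2-\frac{1}{\alpha}}(\omega\vert\psi)\le D_\alpha(\omega\vert\psi)\le S_\alpha(\omega\vert\psi).
\end{equation*}
As $\alpha\to1^+$, also $2-1/\alpha\to1^+$, so the lower bound converges to $S_\mathrm{rel}(\omega\vert\psi)$. The upper bound $S_\alpha$ is finite only for $\alpha\le\beta_0$; since $\beta_0>1$ there is a right neighbourhood of $1$ where it is finite, and there it also converges to $S_\mathrm{rel}$. Squeezing then gives the claim. The limit exists because $D_\alpha$ is monotone by Lemma~\ref{lem:arakimasuda_increasing}, but the squeeze makes this automatic.

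\textbf{Main obstacle.} The delicate point is the upper bound. It requires Petz--Rényi finiteness at some order above $1$, which we obtain from the lower bound at $\beta_0=2-1/\alpha_0>1$. We also need $S_\alpha$ to be monotone in $\alpha$, so that finiteness at $\beta_0$ propagates down to every $\alpha\in(1,\beta_0]$. This monotonicity follows from the monotonicity of $Q_{f_a}$ in $a$ together with formula~\eqref{eq:petzrenyi_entropy_quasientropy}, exactly as in the proof of Thm.~\ref{thm:petzrenyi_limit}. One should also check that the ordering convention is consistent, i.e.\ that $S_\mathrm{rel}(\omega\vert\psi)$ is the quantity Thm.~\ref{thm:petzrenyi_limit} produces with $\psi\to\omega$, $\varphi\to\psi$. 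Finally, faithfulness of $\omega$ guarantees that the support condition holds, so none of the quantities is trivially $+\infty$.
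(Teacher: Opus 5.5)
Your proposal is correct and matches the paper's proof: you squeeze $D_\alpha(\omega\vert\psi)$ between $S_{2-1/\alpha}$ and $S_\alpha$ via Thm.~\ref{thm:arakimasuda_bounded_petzrenyi}, use the lower bound to get finiteness of $S_{2-1/\alpha_0}(\omega\vert\psi)$, and then invoke Thm.~\ref{thm:petzrenyi_limit}, exactly as the paper does. The only difference is that the paper first uses the monotonicity of Lemma~\ref{lem:arakimasuda_increasing} to get $D_\alpha<\infty$ for all $\alpha\in(1,\alpha_0]$, whereas you apply the lower bound directly at $\alpha_0$; this is a harmless and slightly cleaner shortcut.
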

\begin{proof}
From Thm.~\ref{thm:arakimasuda_bounded_petzrenyi}, it follows that $\lim_{\alpha \to 1^+} D_\alpha\left( \omega \vert \psi \right) = \lim_{\alpha \to 1^+} S_\alpha\left( \omega \vert \psi \right)$. Since by assumption $D_{\alpha_0}\left( \omega \vert \psi \right) < \infty$ and $D_\alpha\left( \omega \vert \psi \right)$ is monotonically increasing in $\alpha$, we have $D_\alpha\left( \omega \vert \psi \right) < \infty$ for all $\alpha \in (1,\alpha_0]$. From the lower bound of the Araki--Masuda divergence in Thm.~\ref{thm:arakimasuda_bounded_petzrenyi}, it follows that $S_\alpha\left( \omega \vert \psi \right) < \infty$ for all $\alpha \in (1,2-\alpha_0^{-1}]$, and thus the limit exists and is finite by the corresponding result for the Petz--Rényi relative entropy, Thm.~\ref{thm:petzrenyi_limit}.
\end{proof}

\section{Bounds on relative entropy}
\label{sec:bounds}

By Thms.~\ref{thm:petzrenyi_limit} and~\ref{thm:arakimasuda_bounded_petzrenyi}, we can bound the relative entropy by the Araki--Masuda divergence for any $\alpha > 1$, assuming that it is finite for at least one $\alpha_0 > 1$. In general, we obtain the bound
\begin{equation}
\label{eq:srel_bound_dalpha}
S_\mathrm{rel}\left( \omega \vert \psi \right) \leq S_{2 - \frac{1}{\alpha}}\left( \omega \vert \psi \right) \leq D_\alpha\left( \omega \vert \psi \right) \eqend{,}
\end{equation}
which for $\alpha = 2$ reduces to
\begin{equation}
\label{eq:srel_bound_d2}
S_\mathrm{rel}\left( \omega \vert \psi \right) \leq S_\frac{3}{2}\left( \omega \vert \psi \right) \leq D_2\left( \omega \vert \psi \right) = \ln \norm{ \Psi }^4_{4,\Omega}
\end{equation}
in terms of the non-commutative $L^4$ norm computed with respect to the commutant $\mathfrak{M}'$, which is given by~\eqref{eq:lpnorm}
\begin{equation}
\norm{ \Psi }_{4,\Omega} = \sup_{\xi \in \mathcal{H}\colon \norm{\xi} = 1} \norm{ ( \Delta'_{\xi, \Omega} )^\frac{1}{4} \Psi } = \sup_{\xi \in \mathcal{H}\colon \norm{\xi} = 1} \sqrt{ \left( \Psi, ( \Delta'_{\xi, \Omega} )^\frac{1}{2} \Psi \right) } \eqend{.}
\end{equation}
To compute this expectation value, we note that by Thm.~\ref{thm:modular_representative_change} the relative modular operator $\Delta_{\xi, \Omega}'$ does not depend on the representative vector $\xi$, and so we may take the representative in the positive cone: $\xi^+ \in \mathcal{P}$. Taking $\Psi = b' \Omega$ for some $b' \in \mathfrak{M}'$ with $\norm{ b' \Omega } = 1$, we then obtain
\begin{splitequation}
\left( \Psi, ( \Delta'_{\xi^+, \Omega} )^\frac{1}{2} \Psi \right) &= \left( b' \Omega, J^2 ( \Delta'_{\xi^+, \Omega} )^\frac{1}{2} b' \Omega \right) = \left( b' \Omega, J (b')^* \xi^+ \right) \\
&= \left( (b')^* \xi^+, J b' \Omega \right) = \left( \xi^+, b' J b' \Omega \right) = \left( \xi^+, b' J b' J \Omega \right) \eqend{,}
\end{splitequation}
where we used that $J'_{\xi^+, \Omega} = J$~\cite[Thm.~2.4]{araki1977}, the definition~\eqref{eq:relative_tomita_def} of the relative Tomita operator, that $J$ is antilinear, and that $J \Omega = \Omega$. The supremum is clearly obtained for
\begin{equation}
\xi^+ = \frac{b' J b' J \Omega}{\norm{ b' J b' J \Omega }} \in \mathcal{P} \eqend{,}
\end{equation}
which lies in the positive cone~\cite[Remark~1.2]{haagerup1975}, and it follows that
\begin{equation}
\norm{ b' \Omega }_{4,\Omega} = \norm{ b' J b' J \Omega }^\frac{1}{2} = \norm{ b' J b' \Omega }^\frac{1}{2} \eqend{.}
\end{equation}
Since $b' J b' \Omega \in \mathcal{P}$ such that $b' J b' \Omega = J b' J b' \Omega$, we can further rewrite this as
\begin{splitequation}
\norm{ b' J b' \Omega }^2 &= \left( b' J b' \Omega, b' J b' \Omega \right) = \left( J b' J b' \Omega, b' J b' \Omega \right) \\
&= \left( b' \Omega, J (b')^* J b' J b' \Omega \right) = \left( b' \Omega, b' J (b')^* J J b' \Omega \right) \\
&= \left( b' \Omega, b' J (b')^* b' \Omega \right) = \left( (b')^* b' \Omega, J F (b')^* b' \Omega \right) \\
&= \left( (b')^* b' \Omega, (\Delta')^\frac{1}{2} (b')^* b' \Omega \right) = \norm{ (\Delta')^\frac{1}{4} (b')^* b' \Omega }^2 \eqend{,}
\end{splitequation}
where we used that $J (b')^* J \in \mathcal{M}$, and thus commutes with $b'$. It follows that
\begin{equation}
\label{eq:l4_norm_explicit}
\norm{ b' \Omega }^4_{4,\Omega} = \norm{ b' J b' \Omega }^2 = \norm{ (\Delta')^\frac{1}{4} (b')^* b' \Omega }^2 = \norm{ \Delta^{- \frac{1}{4}} (b')^* b' \Omega }^2 \eqend{,}
\end{equation}
which together with the bound~\eqref{eq:srel_bound_d2} yields the middle inequality in~\eqref{eq:bound1_inequality} of Thm.~\ref{thm:bound1}. We note that~\eqref{eq:l4_norm_explicit} can alternatively be obtained from~\cite[Eq.~(151)]{hollands2023} by taking $a = \1$, $\eta = \psi = \Omega$ and $n = 4$ there, but the present more elementary derivation is sufficient for our needs. In particular, we see that the non-commutative $L^4$ norm can be expressed using the \emph{non-relative} modular operator $\Delta$, even though its definition~\eqref{eq:lpnorm} is in terms of the relative one.

\begin{remark}
In the case of a commutative von Neumann algebra $\mathfrak{A}$ (with faithful normal state $\omega$), the Gelfand--Neumark theorem combined with the Riesz--Markov theorem shows that $\mathfrak{A}$ can be represented as $L^\infty(X,\mu)$ with $X$ a compact Hausdorff space and $\mu$ a probability measure, acting by multiplication on the Hilbert space $L^2(X,\mu)$. The state $\omega$ is implemented by the function $\Omega = 1$, the modular operator $\Delta = 1$ and the modular conjugation $J$ acts by complex conjugation of functions. The relative modular operator is then given by the quotient of the implementing functions in the positive cone $\mathcal{P} = L^2(X,\mu)_+$, which are the positive elements of $L^2(X,\mu)$, namely functions which are positive almost everywhere. A straightforward computation then shows that the non-commutative $L^p$ norms~\eqref{eq:lpnorm} reduce to the usual $L^p$ norms on $(X,\mu)$. In particular, the identity~\eqref{eq:l4_norm_explicit} gives the evident identity $\norm{ f }^4_4 = \norm{ f^* f }_2^2$, i.e., the $L^4$ norm of $f$ coincides with the square root of the $L^2$ norm of the square $f^* f = \abs{ f }^2$ of the element $\abs{ f } \in \mathcal{P}$ of the positive cone canonically associated to $f$.
\end{remark}

Taking now the limit $\alpha \to \infty$ in Lemma~\ref{lem:arakimasuda_increasing}, we obtain the bound
\begin{equation}
D_2\left( \omega \vert \psi \right) \leq \lim_{\alpha \to \infty} D_\alpha\left( \omega \vert \psi \right) = \lim_{\alpha \to \infty} \left( \frac{2 \alpha}{\alpha-1} \ln \norm{ \Psi }_{2 \alpha,\Omega} \right) = 2 \ln \norm{ \Psi }_{\infty,\Omega} \eqend{.}
\end{equation}
For the $L^\infty$ norm of $\Psi = b' \Omega$ we also obtain an explicit expression by computing
\begin{splitequation}
\norm{ b' \Omega }_{\infty,\Omega} &= \sup_{\xi \in \mathcal{H}\colon \norm{\xi} = 1} \norm{ ( \Delta'_{\xi, \Omega} )^\frac{1}{2} b' \Omega } = \sup_{\xi \in \mathcal{H}\colon \norm{\xi} = 1} \norm{ J'_{\xi, \Omega} ( \Delta'_{\xi, \Omega} )^\frac{1}{2} b' \Omega } \\
&= \sup_{\xi \in \mathcal{H}\colon \norm{\xi} = 1} \norm{ (b')^* \xi } = \norm{ (b')^* }_\mathrm{op} = \norm{ b' }_\mathrm{op} \eqend{,}
\end{splitequation}
where the last equality (of the operatorial norms) holds in any von Neumann algebra. Together with the bound~\eqref{eq:srel_bound_d2}, this yields the right inequality in~\eqref{eq:bound1_inequality} of Thm.~\ref{thm:bound1}. If $b'$ is not an element of $\mathfrak{M}'$, but only affiliated with it, the operatorial norm $\norm{ b' }_\mathrm{op}$ diverges, but the $L^4$ norm~\eqref{eq:l4_norm_explicit} is still finite if $\Omega \in \dom{ (b')^* b' }$~\cite[Prop.~B.2]{bostelmanncadamurosangaletti2025}. Therefore, the middle inequality in~\eqref{eq:bound1_inequality} is still non-trivial, and Thm.~\ref{thm:bound1} is proven. In particular, the relative entropy

For a fixed element of the commutant algebra $b' \in \mathfrak{M}'$, we then consider the state $\omega_{b'}$ on $\mathfrak{M}$ defined by
\begin{equation}
\omega_{b'}(a) = \frac{\omega\left( (b')^* a b' \right)}{\omega\left( (b')^* b' \right)} = \frac{\left( b' \Omega, a b' \Omega \right)}{\norm{ b' \Omega }^2} \eqend{,} \quad a \in \mathfrak{M} \eqend{.}
\end{equation}
It follows straightforwardly that $\omega_{b'}$ is normal, and we see that it is controlled by $\omega$ (in symbols $\omega_{b'} \lll \omega$), i.e.,
\begin{equation}
\omega_{b'}(a) \leq C \omega(a) \quad\forall a \in \mathfrak{M}_+
\end{equation}
for some positive constant $C$ independent of $a$. Indeed, if $a = c^* c$ is a positive element of $\mathfrak{M}$, we have
\begin{equation}
\omega_{b'}\left( c^* c \right) = \frac{\norm{ c b' \Omega }^2}{\norm{ b' \Omega }^2} = \frac{\norm{ b' c \, \Omega }^2}{\norm{ b' \Omega }^2} \leq \frac{\norm{ b' }_\mathrm{op}^2}{\norm{ b' \Omega }^2} \norm{ c \, \Omega }^2 = \frac{\norm{ b' }_\mathrm{op}^2}{\norm{ b' \Omega }^2} \, \omega\left( c^* c \right) \eqend{,}
\end{equation}
such that $\omega_{b'} \lll \omega$ with the constant $C = \norm{ b' }_\mathrm{op}^2 \norm{ b' \Omega }^{-2}$. Actually, the opposite implication is also always true, namely for every normal state $\chi \lll \omega$ it exists a $b' \in \mathfrak{M}'$ such that $\chi$ is implemented by the vector $b' \Omega$ in the GNS representation induced by $\omega$~\cite[Thm.~2.3.19]{brattelirobinson1}. This is the condition considered by Berta, Scholz and Tomamichel, who show that~\cite[Lem.~8 and Thm.~13]{bertascholztomamichel2018}
\begin{equation}
D^\mathrm{BST}(\rho \Vert \sigma) \leq D^\mathrm{BST}_\alpha(\rho \Vert \sigma) = \frac{1}{\alpha-1} \ln \norm{\rho}_{2 \alpha, \sigma}^{2 \alpha}
\end{equation}
for all $\alpha > 1$ if $\rho \lll \sigma$. Taking $\sigma = \omega$ and $\rho = \psi = \omega_{b'}$, and taking into account that $D^\mathrm{BST}(\rho \Vert \sigma) = S_\mathrm{rel}(\sigma \vert \rho)$ and $D^\mathrm{BST}_\alpha(\rho \Vert \sigma) = D_\alpha(\sigma \vert \rho)$, this is exactly the bound~\eqref{eq:srel_bound_dalpha}. We note that Jen{\v c}ov{\'a}~\cite{jencova2018} has already extended this inequality to states which do not satisfy $\rho \lll \sigma$; our contributions are the explicit expressions for the $L^4$ and the $L^\infty$ norm involving only the non-relative modular operator of the state $\omega$. Moreover, we have shown that the bound in terms of the $L^4$ norm is non-trivial even for unbounded operators affiliated with $\mathfrak{M}'$ under mild domain conditions. In particular, these conditions are satisfied by polynomials of the fields in a Wightman QFT for the Minkowski vacuum state. Namely, the fields are required to share a common dense and invariant domain that contains $\Omega$, from which it also follows that polynomials of the fields are closable operators and, by causality, that they are affiliated with local von Neumann algebras~\cite[Sec.~2.4]{sangalettithesis}.

We finally note that if $b' = u' \in \mathfrak{M}'$ is unitary, the inequality~\eqref{eq:bound1_inequality} reduces to
\begin{equation}
\label{eq:bound1_inequality_uprime}
0 \leq S_\mathrm{rel}\left( \Omega \Vert u' \Omega \right) \leq 2 \ln \norm{ \Delta^{- \frac{1}{4}} \Omega } = 2 \ln \norm{ \Omega } = 0 \eqend{,}
\end{equation}
such that the relative entropy vanishes. This is consistent with the fact that $\Omega$ and $u' \Omega$ implement the same state $\omega$ on $\mathfrak{M}$:
\begin{equation}
\omega_{u'}(a) = \left( u' \Omega, a u' \Omega \right) = \left( \Omega, a (u')^* u' \Omega \right) = \left( \Omega, a \Omega \right) = \omega(a) \quad\forall a \in \mathfrak{M} \eqend{.}
\end{equation}

\subsection{Analytic elements}

To obtain from Thm.~\ref{thm:bound1} a bound also for excitations from the algebra, we need to consider elements $a \in \mathfrak{M}$ which are analytic with respect to the modular flow, i.e., for which the function $t \mapsto \sigma_t(a) = \Delta^{\mathi t} a \Delta^{- \mathi t}$ for $t \in \mathbb{R}$ extends to an analytic function in $\mathbb{C}$. We denote the $*$-subalgebra of $\mathfrak{M}$ consisting of analytic elements by $\mathfrak{M}_\sigma$. For $a \in \mathfrak{M}_\sigma$, we obtain
\begin{equation}
a \Omega = J \Delta^\frac{1}{2} a^* \Omega = J \Delta^\frac{1}{2} a^* \Delta^{- \frac{1}{2}} J \Omega = J \left[ \Delta^{- \frac{1}{2}} a \Delta^\frac{1}{2} \right]^* J \Omega = J \left[ \sigma_\frac{\mathi}{2}(a) \right]^* J \Omega \eqend{,}
\end{equation}
and since $a$ is analytic with respect to the modular flow and $\sigma_t(a) \in \mathfrak{M}$ for all real $t$, we have $\sigma_ \frac{\mathi}{2}(a) \in \mathfrak{M}$ and $b' = J \left[ \sigma_\frac{\mathi}{2}(a) \right]^* J \in \mathfrak{M}'$, which thus furnishes the required element from the commutant. Note that the equality $\sigma_\frac{\mathi}{2}(a) = \Delta^{- \frac{1}{2}} a \Delta^\frac{1}{2}$ holds only in a weak sense, namely as an identity for quadratic forms on the set of analytic vectors for $\Delta$~\cite[Sec.~$2.5.3.$]{brattelirobinson1}. Nevertheless, the vectors $a \Omega$ and $b' \Omega$ are identical, and in particular give the same state (assuming that $\norm{ a \Omega } = 1$, which can always be achieved by rescaling $a$). Therefore, we have
\begin{equation}
S_\mathrm{rel}\left( \Omega \Vert a \Omega \right) = S_\mathrm{rel}\left( \Omega \Vert b' \Omega \right) \eqend{,}
\end{equation}
and can apply the bounds of Thm.~\ref{thm:bound1} to the latter one. It follows that
\begin{equation}
\label{eq:srel_analytic_bound}
0 \leq S_\mathrm{rel}\left( \Omega \Vert a \Omega \right) \leq 2 \ln \norm{ \Delta^{- \frac{1}{4}} J \left[ \sigma_\frac{\mathi}{2}(a) \right] J J \left[ \sigma_\frac{\mathi}{2}(a) \right]^* J \Omega } \leq 2 \ln \norm{ J \left[ \sigma_\frac{\mathi}{2}(a) \right]^* J }_\mathrm{op} \eqend{,}
\end{equation}
and the middle norm can be rewritten as
\begin{splitequation}
\label{eq:srel_analytic_bound_1}
\norm{ \Delta^{- \frac{1}{4}} J \left[ \sigma_\frac{\mathi}{2}(a) \right] J J \left[ \sigma_\frac{\mathi}{2}(a) \right]^* J \Omega } &= \norm{ \Delta^{- \frac{1}{4}} J \left[ \sigma_\frac{\mathi}{2}(a) \right] \left[ \sigma_\frac{\mathi}{2}(a) \right]^* \Omega } \\
&= \norm{ J \Delta^{- \frac{1}{4}} J \Delta^{- \frac{1}{2}} a \Delta a^* \Omega } = \norm{ \Delta^{- \frac{1}{4}} a \Delta a^* \Omega } \\
&= \norm{ \Delta^{- \frac{1}{4}} a \Delta^\frac{1}{4} \left[ \Delta^{- \frac{3}{4}} a \Delta^\frac{3}{4} \right]^* \Omega } = \norm{ \sigma_\frac{\mathi}{4}(a) \left[ \sigma_\frac{3 \mathi}{4}(a) \right]^* \Omega } \eqend{,}
\end{splitequation}
where we used that $J \Delta^r J = \Delta^{-r}$. For the operational norm we compute
\begin{splitequation}
\label{eq:srel_analytic_bound_2}
\norm{ J \left[ \sigma_\frac{\mathi}{2}(a) \right]^* J }_\mathrm{op} &= \sup_{\xi \in \mathcal{H} \colon \norm{\xi} = 1} \norm{ J \left[ \sigma_\frac{\mathi}{2}(a) \right]^* J \xi } = \sup_{\hat{\xi} \in \mathcal{H} \colon \norm{\hat{\xi}} = 1} \norm{ \left[ \sigma_\frac{\mathi}{2}(a) \right]^* \hat{\xi} } \\
&= \norm{ \left[ \sigma_\frac{\mathi}{2}(a) \right]^* }_\mathrm{op} = \norm{ \sigma_\frac{\mathi}{2}(a) }_\mathrm{op} \eqend{,}
\end{splitequation}
where we used that $J$ is invertible and $J^2 = \1$, which then proves Cor.~\ref{corr:bound2}.

While being analytic seems to be a strong restriction on $a$, the set of analytic elements $\mathfrak{M}_\sigma$ is actually dense in $\mathfrak{M}$ in the $\sigma$-weak topology~\cite[Cor.~2.5.23]{brattelirobinson1}. In turn, this implies that the set of vectors $a \Omega$ with $a \in \mathfrak{M}_\sigma$ is dense in the Hilbert space $\mathcal{H}$. Concretely, we have
\begin{lemma}[{\cite[Lemma~3.2.5]{sangalettithesis}}]
\label{lem:density_analytic}
The vector $\Omega$ (cyclic and separating for $\mathfrak{M}$) is cyclic and separating for the $*$-subalgebra $\mathfrak{M}_\sigma \subset \mathfrak{M}$ of analytic elements of $\mathfrak{M}$.
\end{lemma}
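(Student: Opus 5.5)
Separating is immediate: $\mathfrak{M}_\sigma \subset \mathfrak{M}$, so if $a \Omega = 0$ for some $a \in \mathfrak{M}_\sigma$, then $a = 0$ because $\Omega$ is separating for $\mathfrak{M}$. The content of the lemma is therefore cyclicity, i.e., that $\mathfrak{M}_\sigma \Omega$ is dense in $\mathcal{H}$. I will prove this by regularising elements of $\mathfrak{M}$ with the Gaussian smoothing already introduced in Cor.~\ref{corr:bound3}.

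For $a \in \mathfrak{M}$ set
\begin{equation}
a_n \coloneq \sqrt{ \frac{n}{\pi} } \int_{-\infty}^\infty \mathe^{- n s^2} \sigma_s(a) \total s \eqend{.}
\end{equation}
\textbf{Step 1:} $a_n \in \mathfrak{M}$. The integral converges $\sigma$-weakly, since $s \mapsto \sigma_s(a)$ is $\sigma$-weakly continuous and bounded by $\norm{ a }_\mathrm{op}$. Moreover $\mathfrak{M}$ is $\sigma$-weakly closed and $\sigma_s(a) \in \mathfrak{M}$ by Tomita's theorem, so the integral lies in $\mathfrak{M}$.

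\textbf{Step 2:} $a_n \in \mathfrak{M}_\sigma$. Substituting $s \to s - t$ gives
\begin{equation}
\sigma_t(a_n) = \sqrt{ \frac{n}{\pi} } \int_{-\infty}^\infty \mathe^{- n (s-t)^2} \sigma_s(a) \total s \eqend{.}
\end{equation}
The right-hand side extends to an entire function of $t \in \mathbb{C}$. Indeed, the Gaussian $\mathe^{-n(s-z)^2}$ is entire in $z$, and for $z$ in compact sets it is integrably dominated in $s$. Analyticity of $z \mapsto \sigma_z(a_n)$ then follows, e.g.\ in the $\sigma$-weak sense, by Morera's theorem applied to $\phi(\sigma_z(a_n))$ for normal functionals $\phi$. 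This is the standard argument of~\cite[Prop.~2.5.22]{brattelirobinson1}, which I would cite.

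\textbf{Step 3:} $a_n \Omega \to a \Omega$ in norm. Since $\Delta^{-\mathi s} \Omega = \Omega$, we have $\sigma_s(a) \Omega = \Delta^{\mathi s} a \Omega$, and therefore
\begin{equation}
a_n \Omega - a \Omega = \sqrt{ \frac{n}{\pi} } \int_{-\infty}^\infty \mathe^{- n s^2} \left( \Delta^{\mathi s} - \1 \right) a \Omega \total s \eqend{.}
\end{equation}
Since $\Delta^{\mathi s}$ is a strongly continuous unitary group, $\norm{ (\Delta^{\mathi s} - \1) a \Omega }$ is bounded by $2 \norm{ a \Omega }$ and tends to $0$ as $s \to 0$. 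Because the Gaussian is an approximate identity, dominated convergence gives $\norm{ a_n \Omega - a \Omega } \to 0$. Consequently $\overline{ \mathfrak{M}_\sigma \Omega } \supseteq \overline{ \mathfrak{M} \Omega } = \mathcal{H}$, and this is cyclicity.

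The main obstacle is Step~2, namely making the analytic continuation rigorous in the appropriate operator topology. Steps~1 and~3 are routine. For Step~2 I will rely on the standard smoothing result rather than redo the estimates; the only points to verify are the dominated bound $\abs{ \mathe^{-n(s-z)^2} } = \mathe^{-n(s - \Re z)^2 + n (\Im z)^2}$ and the resulting norm bound $\norm{ \sigma_z(a_n) }_\mathrm{op} \leq \mathe^{n (\Im z)^2} \norm{ a }_\mathrm{op}$.
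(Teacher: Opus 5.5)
Your proof is correct, but it takes a different route from the paper.

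\textbf{The paper's route.} It proves cyclicity abstractly, in three steps:
\begin{enumerate}
\item It imports the $\sigma$-weak density of $\mathfrak{M}_\sigma$ in $\mathfrak{M}$ from \cite[Cor.~2.5.23]{brattelirobinson1}.
\item It uses the bicommutant theorem to conclude $\mathfrak{M}_\sigma' = \mathfrak{M}'$.
\item It invokes the duality ``cyclic for a unital $*$-algebra $\Leftrightarrow$ separating for its commutant''. Cyclicity of $\Omega$ for $\mathfrak{M}_\sigma$ then follows from $\Omega$ being separating for $\mathfrak{M}'$.
\end{enumerate}

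\textbf{Your route.} You never pass to the commutant. You show directly that $\mathfrak{M}_\sigma \Omega$ is norm-dense in $\overline{\mathfrak{M}\Omega} = \mathcal{H}$. You do this with the explicit Gaussian-smoothed approximants $a_n$ and the estimate $\norm{a_n\Omega - a\Omega} \leq \sqrt{n/\pi}\int \mathe^{-ns^2}\norm{(\Delta^{\mathi s}-\1)a\Omega}\total s \to 0$. This is exactly the construction the paper introduces only after the lemma, for Cor.~\ref{corr:bound3}.

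\textbf{Comparison.} The two arguments share the same analytic core, since the density result the paper cites rests on this same Gaussian smoothing. What each buys:
\begin{itemize}
\item The paper's proof is shorter and purely structural. It works verbatim for any $\sigma$-weakly dense unital $*$-subalgebra. It does, however, need $\mathfrak{M}_\sigma$ to be a unital $*$-algebra and relies on the bicommutant machinery.
\item Your proof is self-contained and needs neither. It also directly yields the norm convergence $a_n\Omega \to a\Omega$ that the paper uses later in Cor.~\ref{corr:bound3}.
\end{itemize}
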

\begin{proof}
We recall that a vector is cyclic for a unital $*$-algebra if and only if it is separating for its commutant. Since $\mathfrak{M}_\sigma \subset \mathfrak{M}$ and $\Omega$ is separating for $\mathfrak{M}$, it is separating also for $\mathfrak{M}_\sigma$. From the bicommutant theorem~\cite[Cor.~2.4.15]{brattelirobinson1}, it follows that
\begin{equation}
\mathfrak{M}_\sigma' = \left( \mathfrak{M}_\sigma'' \right)' = \mathfrak{M}' \eqend{,}
\end{equation}
using that $\mathfrak{M}_\sigma$ is $\sigma$-weakly dense in $\mathfrak{M}$. Since $\Omega$ is separating for $\mathfrak{M}'$, it is therefore separating for $\mathfrak{M}_\sigma'$, and thus cyclic for $\mathfrak{M}_\sigma$.
\end{proof}
For any $a \in \mathfrak{M}$, we may in fact explicitly construct an approximating sequence $(a_n)_{n \in \mathbb{N}}$ of analytic elements $a_n \in \mathfrak{M}_\sigma$ by smearing over the modular flow of $a$. Namely, we set
\begin{equation}
\label{eq:analytic_sequence_def}
a_n \coloneq \sqrt{ \frac{n}{\pi} } \int_{-\infty}^\infty \mathe^{- n s^2} \sigma_s(a) \total s \eqend{,}
\end{equation}
and by~\cite[Prop. 2.5.22]{brattelirobinson1} have $a_n \Omega \to a \Omega$ in norm as $n \to \infty$. For these elements, we have
\begin{equation}
\sigma_t(a_n) = \sqrt{ \frac{n}{\pi} } \int_{-\infty}^\infty \mathe^{- n s^2} \sigma_{s+t}(a) \total s = \sqrt{ \frac{n}{\pi} } \int_{-\infty}^\infty \mathe^{- n (s-t)^2} \sigma_s(a) \total s
\end{equation}
for all $t \in \mathbb{R}$, and in the latter integral can analytically continue the integrand to arbitrary $t \in \mathbb{C}$. In particular, we obtain
\begin{equation}
\label{eq:analytic_flow}
\sigma_{\mathi r}(a_n) = \sqrt{ \frac{n}{\pi} } \int_{-\infty}^\infty \mathe^{- n \left( s - \mathi r \right)^2} \sigma_s(a) \total s
\end{equation}
and
\begin{splitequation}
\label{eq:analytic_flow_norm}
\norm{ \sigma_\frac{\mathi}{4}(a_n) \left[ \sigma_\frac{3 \mathi}{4}(a_n) \right]^* \Omega }^2 &= \left( \frac{n}{\pi} \right)^2 \int\dotsi\int_{-\infty}^\infty \mathe^{ - n \left( s_1 + \frac{\mathi}{4} \right)^2 - n \left( s_2 - \frac{3 \mathi}{4} \right)^2 - n \left( s_3 - \frac{\mathi}{4} \right)^2 - n \left( s_4 + \frac{3 \mathi}{4} \right)^2} \\
&\qquad\qquad\times \Bigl( \sigma_{s_1}(a) \sigma_{s_2}(a^*) \Omega, \sigma_{s_3}(a) \sigma_{s_4}(a^*) \Omega \Bigr) \total s_1 \total s_2 \total s_3 \total s_4 \eqend{.}
\end{splitequation}

However, note that even if we can approximate vectors arbitrarily well, we generally cannot do that for the relative entropy because of the unboundedness of $\Delta$. In addition, note that for unitary $a \in \mathfrak{M}$ with a non-vanishing relative entropy $S_\mathrm{rel}(\Omega \vert a \Omega) \neq 0$, the corresponding $b' \in \mathfrak{M}'$ cannot be unitary, since otherwise the vector state $b' \Omega$ on $\mathfrak{M}$ would coincide with $\Omega$, contradicting the fact that $S_\mathrm{rel}(\Omega \vert a \Omega) \neq 0$. Nevertheless, since the relative entropy is lower semicontinuous~\cite[Thm.~(3.7)]{araki1977}, we have
\begin{equation}
S_\mathrm{rel}\left( \Omega \Vert a \Omega \right) \leq \liminf_{n \to \infty} S_\mathrm{rel}\left( \Omega \Vert a_n \Omega \right) \eqend{.}
\end{equation}
Therefore, the bounds
\begin{equation}
0 \leq S_\mathrm{rel}\left( \Omega \Vert a_n \Omega \right) \leq 2 \ln \norm{ \sigma_\frac{\mathi}{4}(a_n) \left[ \sigma_\frac{3 \mathi}{4}(a_n) \right]^* \Omega } \leq 2 \ln \norm{ \sigma_\frac{\mathi}{2}(a_n) }_\mathrm{op}
\end{equation}
obtained from Cor.~\ref{corr:bound2} (or equations~\eqref{eq:srel_analytic_bound}--\eqref{eq:srel_analytic_bound_2}) for $a_n$ also give a bound on the relative entropy between $\Omega$ and $a \Omega$, namely
\begin{equation}
S_\mathrm{rel}\left( \Omega \Vert a \Omega \right) \leq 2 \liminf_{n \to \infty} \ln \norm{ \sigma_\frac{\mathi}{4}(a_n) \left[ \sigma_\frac{3 \mathi}{4}(a_n) \right]^* \Omega } \eqend{,}
\end{equation}
which then proves Cor.~\ref{corr:bound3}. In general, it is not guaranteed that the bound is nontrivial since the right-hand side might diverge as $n \to \infty$, but we will see that it actually is finite in an example.

\section{Examples}
\label{sec:example}

Lastly, we want to show that the bound of Cor.~\ref{corr:bound3} gives a non-trivial result in a specific example. For this, we consider the free scalar field in Minkowski spacetime in a wedge, and the chiral current on a light ray. For both examples, we determine analytic vectors and their swapping partners, but only derive a finite bound in the latter case, where we can evaluate the required integrals almost explicitly.

\subsection{Free scalar field in a wedge}

Consider a free scalar field in Minkowski spacetime in the standard right wedge $\mathcal{W}_\mathrm{r} \coloneq \{ x \in \mathbb{R}^{d+1} \colon x^1 \geq \abs{ x^0 } \}$. The Bisognano--Wichmann theorem~\cite{bisognanowichmann1975,bisognanowichmann1976} shows that the vacuum vector $\Omega$ in the GNS representation of the vacuum state $\omega$ is cyclic and separating for the von Neumann algebra $\mathfrak{M}$ generated by Weyl operators $W(f)$ with real Schwartz functions $f \in \mathcal{S}_\mathbb{R}(\mathcal{W}_\mathrm{r})$ localized in the right wedge, and the modular operator for the pair $(\mathfrak{M}, \Omega)$ coincides with the boost operator in the positive $x^1$ direction. Moreover, the commutant $\mathfrak{M}'$ is the von Neumann algebra generated by Weyl operators $W(f)$ with real Schwartz functions $f \in \mathcal{S}_\mathbb{R}(\mathcal{W}_\mathrm{l})$ localized in the left wedge $\mathcal{W}_\mathrm{l} \coloneq \{ x \in \mathbb{R}^{d+1} \colon x^1 \leq - \abs{ x^0 } \}$. Since the theory is free, the modular objects are second-quantized operators on Fock space~\cite{figlioliniguido1989,figlioliniguido1994}, such that
\begin{equation}
\label{eq:boost_modular_action}
\Delta_\Omega^{\mathi t} W(f) \Delta_\Omega^{-\mathi t} = W(f_t) \quad\text{with}\quad f_t(x) \coloneq f(\Lambda_{-t} x) \ \forall t \in \mathbb{R} \eqend{,}
\end{equation}
where we denote with $\Lambda_{-t}$ the boosted coordinates in the $x^1$ direction:
\begin{equations}[eq:boosted_coords]
(\Lambda_t x)^0 &= \cosh(2 \pi t) x^0 + \sinh(2 \pi t) x^1 \eqend{,} \\
(\Lambda_t x)^1 &= \sinh(2 \pi t) x^0 + \cosh(2 \pi t) x^1 \eqend{,} \\
(\Lambda_t x)^i &= x^i \quad\text{for}\quad i \in \{ 2,\ldots,d+1 \} \eqend{.}
\end{equations}
Since for $x \in \mathcal{W}_\mathrm{r}$ we have
\begin{equation}
\label{eq:boost_wedge_invariant}
(\Lambda_t x)^1 \pm (\Lambda_t x)^0 = \mathe^{\pm 2 \pi t} \left( x^1 \pm x^0 \right) \geq 0 \eqend{,}
\end{equation}
the boosts leave the right wedge $\mathcal{W}_\mathrm{r}$ invariant (and analogously for the left wedge), and we have an instance of a local geometric action of the modular group~\cite{buchholzsummers1993,buchholzdreyerflorigsummers2000}. Moreover, the modular conjugation $J$ acts on Weyl operators according to
\begin{equation}
\label{eq:wedge_modular_conjugation}
J W(f) J = W(-f_J) \quad\text{with}\quad f_J(x^0, x^1, \ldots, x^{d+1}) \coloneq f(-x^0, -x^1, x^2, \ldots, x^{d+1}) \eqend{,}
\end{equation}
and is extended by complex antilinearity.

The integral kernel of the two-point function of the state $\omega$ is given by
\begin{splitequation}
\label{eq:omega_2pf}
\omega_2(x, y) &= 2 \pi \int \Theta(p^0) \delta(p^2 + m^2) \, \mathe^{- \mathi p (x-y)} \frac{\total^{d+1} p}{(2 \pi)^{d+1}} \\
&= \lim_{\epsilon \to 0^+} \int \frac{1}{2 \omega_\vec{p}} \mathe^{\mathi \omega_\vec{p} (x^0-y^0 + \mathi \epsilon) - \mathi \vec{p} (\vec{x}-\vec{y})} \frac{\total^d \vec{p}}{(2 \pi)^d} \eqend{,}
\end{splitequation}
where $p^2 = - (p^0)^2 + \vec{p}^2$ and $\omega_\vec{p} = \sqrt{ \vec{p}^2 + m^2 }$, and the integral kernel of the commutator function $E$ reads
\begin{splitequation}
\label{eq:commutator_def}
E(x, y) &\coloneq - \mathi \Bigl[ \omega_2(x, y) - \omega_2(y, x) \Bigr] = - E(y, x) \\
&= \lim_{\epsilon \to 0^+} \int \frac{\sin\left[ \omega_\vec{p} (x^0-y^0) - \vec{p} (\vec{x}-\vec{y}) \right]}{\omega_\vec{p}} \mathe^{- \epsilon \, \omega_\vec{p}} \frac{\total^d \vec{p}}{(2 \pi)^d} \eqend{.}
\end{splitequation}
The Weyl relations
\begin{equation}
\label{eq:weyl_relation}
W(f) W(g) = \mathe^{- \frac{\mathi}{2} E(f, g)} W(f+g) \eqend{,} \quad [ W(f) ]^* = W(-f) \eqend{,} \quad
\end{equation}
show that $\{ W(s f) \}_{s \in \mathbb{R}}$ for fixed $f$ form a one-parameter unitary group, whose generator is the field operator $\phi(f)$, which is essentially self-adjoint on the dense domain of finite particle vectors~\cite[Prop.~5.2.3]{brattelirobinson2}. Since finite-particle vectors are analytic for $\phi(f)$, we have
\begin{equation}
\label{eq:weyl_expansion}
W(f) \Omega = \lim_{N \to \infty} \sum_{n=0}^N \frac{\mathi^n}{n!} [ \phi(f) ]^n \Omega = \mathe^{- \norm{ f }^2} \lim_{N \to \infty} \sum_{n=0}^N \frac{1}{n!} \left[ a^*(f) \right]^n \Omega \eqend{,}
\end{equation}
where $a^*(f)$ are the smeared creation operators in terms of which $\phi(f) = \overline{a(f) + a^*(f)}$~\cite[Prop.~5.2.4]{brattelirobinson2} and we used the canonical commutation relations in the second step.

For coherent excitations of the vacuum of the form $W(f) \Omega$, the relative entropy $S_\mathrm{rel}\left( \Omega \Vert W(f) \Omega \right)$ can be explicitly computed~\cite{casinigrillopontello2019,longo2019,ciollilongoruzzi2020}, as well as for other coherent or thermal states (see for example Refs.~\cite{dragofaldinopinamonti2018,hollands2020,panebianco2020,ariasetal2020,bostelmanncadamurodelvecchio2022,guimaraesroditisorellavieira2025,cadamurofroebkatsinismandrysch2025} and references therein). However, taking states of the form $[ \phi(f) ]^k \Omega$ or other non-unitary excitations of the vacuum, this is not the case anymore and only perturbative expansions are available, see for example Refs.~\cite{araki1973,derezinskijaksicpillet2003,lashkariliurajagopal2021,dasezhuthachan2019,lashkariliurajagopal2023,kabatlifschytznguyensarkar2020,jiangkabatmarthandansarkas2025}. Nevertheless, we can bound the relative entropy between such states and the vacuum using Cor.~\ref{corr:bound2}. To do so, consider for a function $f \in L^1_\mathbb{R}(\mathcal{W}_\mathrm{r})$ and $n \in \mathbb{N}$ the function $f_n$ defined by
\begin{equation}
\label{eq:analytic_fn_def}
f_n(x) \coloneq \sqrt{ \frac{n}{\pi} } \int_{-\infty}^\infty \mathe^{- n s^2} f(\Lambda_s x) \total s \eqend{,}
\end{equation}
and its \emph{swapping partner}
\begin{equation}
\label{eq:analytic_fn_swapping}
\left( J \Lambda_\frac{\mathi}{2} f_n \right)(x) \coloneq \sqrt{ \frac{n}{\pi} } \int_{-\infty}^\infty \mathe^{- n \left( s - \frac{\mathi}{2} \right)^2} f\left( - (\Lambda_s x)^0 , - (\Lambda_s x)^1, x^2, \ldots, x^{d+1} \right) \total s \eqend{,}
\end{equation}
whose definition is motivated by the construction~\eqref{eq:analytic_sequence_def} of analytic elements and the action~\eqref{eq:wedge_modular_conjugation} of the modular conjugation $J$. Note that since $\supp f \subset \mathcal{W}_\mathrm{r}$ and the Lorentz boosts leave the wedge invariant~\eqref{eq:boost_wedge_invariant}, the swapping partner $J \Lambda_\frac{\mathi}{2} f_n$ is supported in the left wedge $\mathcal{W}_\mathrm{l}$. These two functions define the same one-particle state $a^*(f_n) \Omega = a^*\left( J \Lambda_\frac{\mathi}{2} f_n \right) \Omega$ if the Fourier transforms of $f_n$ and $J \Lambda_\frac{\mathi}{2} f_n$ agree on the positive mass shell $p^0 = \omega_\vec{p}$. We compute
\begin{splitequation}
\label{eq:fourier_fn}
( \mathcal{F} f_n )(\omega_\vec{p}, \vec{p}) &= \int f_n(x) \, \mathe^{\mathi p x} \total^{d+1} x = \sqrt{ \frac{n}{\pi} } \int_{-\infty}^\infty \int \mathe^{- n s^2} f(\Lambda_s x) \, \mathe^{\mathi p x} \total^{d+1} x \total s \\
&= \sqrt{ \frac{n}{\pi} } \int_{-\infty}^\infty \mathe^{- n s^2} \int f(x) \, \mathe^{\mathi p \Lambda_{-s} x} \total^{d+1} x \total s \\
&= \sqrt{ \frac{n}{\pi} } \int_{-\infty}^\infty \mathe^{- n s^2} \int f(x) \exp\left[ - \mathi H(x, -s, \vec{p}) + \mathi \sum_{j=2}^{d+1} p^j x^j \right] \total^{d+1} x \total s
\end{splitequation}
with the function
\begin{splitequation}
\label{eq:h_function_def}
H(x, z, \vec{p}) &\coloneq \omega_\vec{p} \left[ \cosh(2 \pi z) x^0 + \sinh(2 \pi z) x^1 \right] - p^1 \left[ \sinh(2 \pi z) x^0 + \cosh(2 \pi z) x^1 \right] \\
&= \cosh(2 \pi z) \left( \omega_\vec{p} x^0 - p^1 x^1 \right) + \sinh(2 \pi z) \left( \omega_\vec{p} x^1 - p^1 x^0 \right) \eqend{,}
\end{splitequation}
where we could interchange the integrals using Fubini's theorem since $f \in L^1_\mathbb{R}(\mathcal{W}_\mathrm{r})$, and then used that the Jacobi determinant of the change $x \to \Lambda_{-s} x$ is 1. For complex $z$, we compute using well-known hyperbolic function identities
\begin{splitequation}
\Im H(x, z, \vec{p}) &= \sin\left( 2 \pi \Im z \right) \biggl[ \left( \omega_\vec{p} - \abs{p^1} \right) \left[ \cosh\left( 2 \pi \Re z \right) \left( x^1 - \abs{x^0} \right) + \mathe^{2 \pi \sgn x^0 \Re z} \abs{x^0} \right] \\
&\hspace{7em}+ \abs{p^1} \mathe^{- 2 \pi \sgn p^1 \Re z} \left( x^1 - \sgn p^1 x^0 \right) \biggr] \eqend{,}
\end{splitequation}
and since for $x \in \mathcal{W}_\mathrm{r}$ we have $x^1 \geq \pm x^0$ all the terms in brackets are non-negative. It follows that for $\abs{\Im z} \leq \frac{1}{2}$, the sign of $\Im H(x, z, p)$ is the same as the sign of $\Im z$, and hence the integrand in~\eqref{eq:fourier_fn} can be analytically continued to $\Im s \in \left[ - \frac{1}{2}, 0 \right]$ where $\mathe^{- \mathi H(x, -s, \vec{p})}$ is bounded. Using that
\begin{equation}
H\left( x, -s-\frac{\mathi}{2}, \vec{p} \right) = - H(x, -s, \vec{p}) = - \omega_\vec{p} ( \Lambda_{-s} x )^0 + p^1 ( \Lambda_{-s} x )^1 \eqend{,}
\end{equation}
we thus obtain
\begin{splitequation}
\label{eq:fourier_fn_final}
( \mathcal{F} f_n )(\omega_\vec{p}, \vec{p}) &= \sqrt{ \frac{n}{\pi} } \int_{-\infty}^\infty \mathe^{- n \left( s - \frac{\mathi}{2} \right)^2} \int f(x) \exp\left[ - \mathi H\left( x, s-\frac{\mathi}{2}, \vec{p} \right) + \mathi \sum_{j=2}^{d+1} p^j x^j \right] \total^{d+1} x \total s \\
&= \sqrt{ \frac{n}{\pi} } \int_{-\infty}^\infty \mathe^{- n \left( s - \frac{\mathi}{2} \right)^2} \int f(x) \, \mathe^{\mathi \omega_\vec{p} ( \Lambda_{-s} x )^0 - \mathi p^1 ( \Lambda_{-s} x )^1 + \mathi \sum_{j=2}^{d+1} p^j x^j} \total^{d+1} x \total s \eqend{.}
\end{splitequation}

On the other hand, for the swapping partner $J \Lambda_\frac{\mathi}{2} f_n$ we compute
\begin{splitequation}
\left( \mathcal{F} J \Lambda_\frac{\mathi}{2} f_n \right)(\omega_\vec{p}, \vec{p}) &= \int \left( J \Lambda_\frac{\mathi}{2} f_n \right)(x) \, \mathe^{\mathi p x} \total^{d+1} x \\
&= \sqrt{ \frac{n}{\pi} } \int_{-\infty}^\infty \mathe^{- n \left( s - \frac{\mathi}{2} \right)^2} \int f\left( - x^0 , - x^1, x^2, \ldots, x^{d+1} \right) \, \mathe^{\mathi p \Lambda_{-s} x} \total^{d+1} x \total s \eqend{,}
\end{splitequation}
following the same steps as for $f_n$. Changing finally $x^0 \to - x^0$ and $x^1 \to - x^1$, this is seen to be equal to~\eqref{eq:fourier_fn_final}, and thus the Fourier transforms of $f_n$ and its swapping partner $J \Lambda_\frac{\mathi}{2} f_n$ agree on the positive mass shell $p^0 = \omega_\vec{p}$. Therefore, even though $f_n$ and $J \Lambda_\frac{\mathi}{2} f_n$ are supported on opposite wedges, such that the fields $\phi(f_n)$ and $\phi\left( J \Lambda_\frac{\mathi}{2} f_n \right)$ are affiliated with the algebra $\mathfrak{M}$ and its commutant $\mathfrak{M}'$, respectively, the states $\phi(f_n) \Omega = a^*(f_n) \Omega$ and $\phi\left( J \Lambda_\frac{\mathi}{2} f_n \right) \Omega = a^*\left( J \Lambda_\frac{\mathi}{2} f_n \right) \Omega$ are the same. We remark that swapping partners for unbounded operators have previously been constructed in the literature, see for example~\cite[Thm.~3.6]{borchers1995}, \cite[Eq.~(3.13)]{lechner2003} or~\cite[Lemma~3]{duell2017}, and the construction presented here is essentially the same; for coherent states see also~\cite{falconeconti2025}.

Note that while $f_n$ is real-valued (and $\phi(f_n)$ is self-adjoint), its swapping partner is not, and thus their Fourier transforms on the negative mass shell $p^0 = - \omega_\vec{p}$ are different. This is in fact needed to avoid a contradiction with the fact that $\mathfrak{M}$ is a factor, since if the Fourier transforms would agree on both mass hyperboloids we would have $\phi(f_n) = \phi\left( J \Lambda_\frac{\mathi}{2} f_n \right)$ as operators. Since the first is affiliated with $\mathfrak{M}$ and the second with the commutant $\mathfrak{M}'$, this would only be possible if they are equal to the identity $\1$. In general, indeed, if $a$ is an analytic self-adjoint operator, its swapping partner $J \Delta^\frac{1}{2} a \Delta^{-\frac{1}{2}} J$ is not necessarily self-adjoint, since its (formal) adjoint reads $J \Delta^{-\frac{1}{2}} a \Delta^\frac{1}{2} J$, where the exponents are changed. In fact, if $\phi(f_n) = \phi\left( J \Lambda_\frac{\mathi}{2} f_n \right)$ holds, also the Weyl operators $W(f_n)$ and $W\left( J \Lambda_\frac{\mathi}{2} f_n \right)$ would be the same, and since the first is an element of $\mathfrak{M}$ and the second of its commutant, they would need to be equal to the identity since $\mathfrak{M}$ is a factor. Nevertheless, since finite-particle vectors are analytic for $\phi(f)$, and the creation operator $a^*(f)$ only depends on $f$ restricted to the positive mass shell, we obtain from equation~\eqref{eq:weyl_expansion} that
\begin{splitequation}
W\left( J \Lambda_\frac{\mathi}{2} f_n \right) \Omega &= \mathe^{- \norm{ J \Lambda_\frac{\mathi}{2} f_n }^2} \lim_{N \to \infty} \sum_{k=0}^N \frac{1}{k!} \left[ a^*\left( J \Lambda_\frac{\mathi}{2} f_n \right) \right]^k \Omega \\
&= \mathe^{- \norm{ J \Lambda_\frac{\mathi}{2} f_n }^2} \lim_{N \to \infty} \sum_{k=0}^N \frac{1}{k!} \left[ a^*(f_n) \right]^k \Omega = \mathe^{\norm{ f_n }^2 - \norm{ J \Lambda_\frac{\mathi}{2} f_n }^2} W(f_n) \Omega \eqend{,}
\end{splitequation}
i.e., the vectors $W\left( J \Lambda_\frac{\mathi}{2} f_n \right) \Omega$ and $W(f_n) \Omega$ only differ by a rescaling. We thus also obtain the swapping partner of the unitary excitation $W(f_n) \Omega$, which is given by
\begin{equation}
\label{eq:weyl_swapping_partner}
\mathe^{\norm{ J \Lambda_\frac{\mathi}{2} f_n }^2 - \norm{ f_n }^2} W\left( J \Lambda_\frac{\mathi}{2} f_n \right) \Omega \eqend{.}
\end{equation}
In particular, it is \emph{not} a unitary excitation, which is consistent with the fact that the relative entropy $S_\mathrm{rel}\left( \Omega \Vert W(f_n) \Omega \right)$ does not vanish~\cite{casinigrillopontello2019,longo2019,ciollilongoruzzi2020}.

\subsection{Unitary swapping partners}

In general, it is possible to find unitary operators in a von Neumann algebra that admit a unitary swapping partner. For example, consider the Hilbert space $\mathcal{H} = \mathbb{C}^2 \otimes \mathbb{C}^2$ with the usual scalar product, the von Neumann algebra $\mathfrak{A} \coloneq \mathcal{B}(\mathbb{C}^2) \otimes \1$ and its commutant $\mathfrak{A}' = \1 \otimes \mathcal{B}(\mathbb{C}^2)$, where $\mathcal{B}(\mathbb{C}^2)$ denotes the set of all bounded operators on $\mathbb{C}^2$. The (maximally entangled) vector
\begin{equation}
\Omega \coloneq \frac{1}{\sqrt{2}} \Bigl[ (1,0) \otimes (1,0) + (0,1) \otimes (0,1) \Bigr]
\end{equation}
is cyclic and separating for $\mathfrak{A}$. Indeed, let $A \otimes \1$ be a generic operator in $\mathfrak{A}$. Then we compute
\begin{equation}
\left( A \otimes \1 \right) \Omega = \frac{1}{\sqrt{2}} \Bigl[ A (1,0) \otimes (1,0) + A (0,1) \otimes (0,1) \Bigr] \eqend{,}
\end{equation}
and since $(1,0)$ and $(0,1)$ are orthogonal, we obtain the implication
\begin{equation}
\left( A \otimes \1 \right) \Omega = 0 \implies A (1,0) = 0 \ \text{and}\ A (0,1) = 0 \implies A = 0 \eqend{,}
\end{equation}
and $\Omega$ is separating. Similarly, $\Omega$ is separating for $\mathfrak{A}'$ and therefore cyclic for $\mathfrak{A}$. Consider now the operators $U \in \mathfrak{A}$ and $U' \in \mathfrak{A}'$ defined as
\begin{equation}
U \coloneq \sigma_z \otimes \1 \eqend{,} \quad U' \coloneq \1 \otimes \sigma_z \eqend{,} \quad \sigma_z = \begin{pmatrix} 1 & 0 \\ 0 & - 1 \end{pmatrix} \eqend{.}
\end{equation}
As tensor products of unitary operators, $U$ and $U'$ are unitary, and we obtain
\begin{equation}
U \Omega = \frac{1}{\sqrt{2}} \Bigl[ (1,0) \otimes (1,0) - (0,1) \otimes (0,1) \Bigr] = U' \Omega \eqend{.}
\end{equation}
One easily verifies that $U \Omega = U' \Omega$ and $\Omega$ implement the same state on both the von Neumann algebra $\mathfrak{A}$ and the commutant $\mathfrak{A}'$ \emph{separately}, namely we obtain
\begin{equation}
\left( U \Omega, \left( A \otimes \1 \right) U \Omega \right)_\mathcal{H} = \frac{1}{2} \left( (1,0) , A (1,0) \right)_{\mathbb{C}^2} + \frac{1}{2} \left( (0,1) , A (0,1) \right)_{\mathbb{C}^2} = \left( \Omega, \left( A \otimes \1 \right) \Omega \right)_\mathcal{H}
\end{equation}
and analogously $\left( U \Omega, \left( \1 \otimes A \right) U \Omega \right)_\mathcal{H} = \left( \Omega, \left( \1 \otimes A \right) \Omega \right)_\mathcal{H}$.

We thus have succeeded in constructing two unitary operators $U \in \mathfrak{A}$ and $U' \in \mathfrak{A}'$ that implement the same state, hence are swapping partners. Since they implement the same state, the relative entropy between $\Omega$ and $U \Omega$ vanishes, which is in accordance with our bounds of Thm.~\ref{thm:bound1} in terms of the non-commutative $L^p$-norms. Namely, using equation~\eqref{eq:bound1_inequality_uprime} which is obtained from the bound of Thm.~\ref{thm:bound1}, we obtain
\begin{equation}
0 \leq S_\mathrm{rel}\left( \Omega \Vert U' \Omega \right) \leq 2 \ln \norm{ \Delta^{- \frac{1}{4}} \Omega } = 2 \ln \norm{ \Omega } = 0 \eqend{,}
\end{equation}
which forces the relative entropy to vanish. It is easy to see that this follows straightforwardly from the quantum-mechanical definition of relative entropy
\begin{equation}
S_\mathrm{rel}\left( \Omega \Vert U' \Omega \right) = \tr_1 \left[ \rho_1 \left( \ln \rho_1 - \ln \sigma_1 \right) \right] \eqend{,}
\end{equation}
where the trace is over the first factor $\mathbb{C}^2$ of $\mathcal{H}$, and where $\rho_1$ and $\sigma_1$ are the reduced density matrices on $\mathbb{C}^2$ (traced over the second factor $\mathbb{C}^2$ of $\mathcal{H}$) of the orthogonal projectors $P_\Omega$ and $P_{U \Omega}$ onto the subspaces spanned by $\Omega$ and $U \Omega$, respectively, which read
\begin{equation}
\rho_1 = \tr_2\left( P_\Omega \right) = \frac{1}{2} \begin{pmatrix} 1 & 0 \\ 0 & 1 \end{pmatrix} = \tr_2\left( P_{U \Omega} \right) = \sigma_1 \eqend{.}
\end{equation}

Therefore, we see that it is important for a non-trivial result that unitary excitations do not have a unitary swapping partner.

\subsection{Chiral current on a light ray}

We now study the free chiral current $j$ on a light ray in 1+1 dimensions, depending only on the null coordinate $u = t-x$, and consider the positive ray with $u \geq 0$. The current can be realized as the $u$ derivative of the massless scalar field, which shows that the integral kernel of its two-point function in the Minkowski vacuum state $\omega$ is given by
\begin{splitequation}
\label{eq:twopf_current}
\omega_2(u,u') &= \frac{1}{2} \lim_{\epsilon \to 0^+} \int_0^\infty p \, \mathe^{- \mathi p (u-u'-\mathi \epsilon)} \frac{\total p}{2 \pi} \\
&= - \frac{1}{4 \pi} \lim_{\epsilon \to 0^+} \partial_u \partial_{u'} \ln (u-u'-\mathi \epsilon) \\
&= - \frac{1}{4 \pi} \partial_u \partial_{u'} \ln \abs{u-u'} + \frac{\mathi}{4} \delta'(u-u') \eqend{.}
\end{splitequation}
Here, the logarithm arose from the two-point function of the massless scalar~\cite[Eq.~(2.3b)]{schroertruong1977}, and to obtain the last expression we used the identity $\lim_{\epsilon \to 0^+} \ln(x - \mathi \epsilon) = \ln \abs{x} - \mathi \pi \Theta(-x)$. It follows that the integral kernel of the commutator function $E$ reads
\begin{splitequation}
\label{eq:commutator_current}
E(u,u') \coloneq - \mathi \Bigl[ \omega_2(u,u') - \omega_2(u',u) \Bigr] = \frac{1}{2} \delta'(u-u') \eqend{.}
\end{splitequation}
In the GNS representation of $\omega$, the vacuum vector $\Omega$ is cyclic and separating for the von Neumann algebra $\mathfrak{M}$ generated by Weyl operators $W(f) = \mathe^{\mathi j(f)}$ with real Schwartz functions $f \in \mathcal{S}_\mathbb{R}(\mathbb{R}_+)$ localized on the positive ray, and the modular operator for the pair $(\mathfrak{M}, \Omega)$ coincides with dilations along the ray~\cite{borchers1992,borchersyngvason1999}. Moreover, the commutant $\mathfrak{M}'$ is the von Neumann algebra generated by Weyl operators $W(f)$ with real Schwartz functions $f \in \mathcal{S}_\mathbb{R}(\mathbb{R}_-)$ localized on the negative ray. The Weyl relations~\eqref{eq:weyl_relation} again hold, with the commutator function given by~\eqref{eq:commutator_current}. The modular objects are again second-quantized operators on Fock space, such that
\begin{equation}
\label{eq:dilation_modular_action}
\Delta_\Omega^{\mathi t} W(f) \Delta_\Omega^{-\mathi t} = W(f_t) \quad\text{with}\quad f_t(u) = (\alpha_t f)(u) \coloneq f(\mathe^{2 \pi t} u) \ \forall t \in \mathbb{R} \eqend{.}
\end{equation}
Clearly both the positive and the negative ray are left invariant by dilations, and hence this is again an instance of a local geometric action of the modular group. Moreover, the modular conjugation $J$ acts on Weyl operators according to
\begin{equation}
\label{eq:ray_modular_conjugation}
J W(f) J = W(-f_J) \quad\text{with}\quad f_J(u) \coloneq f(-u) \eqend{,}
\end{equation}
and is extended by complex antilinearity.

We then proceed analogously to the case of the scalar field in a wedge. Consider thus a function $f \in L^1_\mathbb{R}(\mathbb{R}_+)$, and for $n \in \mathbb{N}$ the function $f_n$ defined by
\begin{equation}
\label{eq:ray_analytic_fn_def}
f_n(u) \coloneq \sqrt{ \frac{n}{\pi} } \int_{-\infty}^\infty \mathe^{- n s^2} f(\mathe^{2 \pi s} u) \total s \eqend{,}
\end{equation}
and its \emph{swapping partner}
\begin{equation}
\label{eq:ray_analytic_fn_swapping}
\left( J \alpha_\frac{\mathi}{2} f_n \right)(u) \coloneq - \sqrt{ \frac{n}{\pi} } \int_{-\infty}^\infty \mathe^{- n \left( s - \frac{\mathi}{2} \right)^2} f\left( - \mathe^{2 \pi s} u \right) \total s \eqend{.}
\end{equation}
Clearly the swapping partner is supported on the negative ray, and we again verify that nevertheless, both define the same one-particle state. For this, recall that the GNS construction establishes the one-particle Hilbert space by completing $\mathcal{S}(\mathbb{R})$ in the norm induced by the scalar product
\begin{equation}
\label{eq:ray_scalar_product}
\left( f, g \right) \coloneq \iint_{-\infty}^\infty f^*(u) \, \omega_2(u,u') g(u') \total u \total u' = \frac{1}{2} \lim_{\epsilon \to 0^+} \int_0^\infty p \, \mathe^{- p \epsilon} \left[ ( \mathcal{F} f )(p) \right]^* ( \mathcal{F} g )(p) \frac{\total p}{2 \pi}
\end{equation}
involving the two-point function $\omega_2$~\eqref{eq:twopf_current}. We therefore need to show that the Fourier transforms $( \mathcal{F} f_n )(p)$ and $\left( \mathcal{F} J \alpha_\frac{\mathi}{2} f_n \right)(p)$ agree for $p \geq 0$. We compute
\begin{splitequation}
\label{eq:ray_fourier_fn}
( \mathcal{F} f_n )(p) &= \int f_n(u) \, \mathe^{\mathi p u} \total u = \sqrt{ \frac{n}{\pi} } \int_{-\infty}^\infty \int \mathe^{- n s^2} f\left( \mathe^{2 \pi s} u \right) \, \mathe^{\mathi p u} \total u \total s \\
&= \sqrt{ \frac{n}{\pi} } \int_{-\infty}^\infty \mathe^{- n s^2 - 2 \pi s} \int f(u) \, \mathe^{\mathi p \mathe^{- 2 \pi s} u} \total u \total s \eqend{,}
\end{splitequation}
where we could interchange the integrals using Fubini's theorem since $f \in L^1_\mathbb{R}(\mathbb{R}_+)$ and $\mathe^{- n s^2 - 2 \pi s}$ is integrable in $s$ for all $n > 0$. For complex $z$, we have
\begin{equation}
\mathe^{\mathi p \mathe^{- 2 \pi z} u} = \mathe^{\mathi p u \, \mathe^{- 2 \pi \Re z} \cos(2 \pi \Im z)} \exp\left[ p u \, \mathe^{- 2 \pi \Re z} \sin(2 \pi \Im z) \right] \eqend{,}
\end{equation}
and since $u \geq 0$ we see that the integrand in~\eqref{eq:ray_fourier_fn} can for $p \geq 0$ be analytically continued to $\Im s \in \left[ - \frac{1}{2}, 0 \right]$ where $\mathe^{\mathi p \mathe^{- 2 \pi s} u}$ is bounded. Taking $\Im s = - \frac{1}{2}$, it follows that
\begin{splitequation}
\label{eq:ray_fourier_fn_final}
( \mathcal{F} f_n )(p) \bigr\rvert_{p \geq 0} &= - \sqrt{ \frac{n}{\pi} } \int_{-\infty}^\infty \mathe^{- n \left( s - \frac{\mathi}{2} \right)^2 - 2 \pi s} \int f(u) \, \mathe^{- \mathi p u \, \mathe^{- 2 \pi s}} \total u \total s \Bigr\rvert_{p \geq 0} \\
&= - \sqrt{ \frac{n}{\pi} } \int_{-\infty}^\infty \mathe^{- n \left( s - \frac{\mathi}{2} \right)^2} \int f\left( - \mathe^{2 \pi s} u \right) \, \mathe^{\mathi p u} \total u \total s \Bigr\rvert_{p \geq 0} \eqend{,}
\end{splitequation}
which is exactly the Fourier transform of the swapping partner $J \alpha_\frac{\mathi}{2} f_n$~\eqref{eq:ray_analytic_fn_swapping}, restricted to $p \geq 0$. Therefore, even though $f_n$ and $J \alpha_\frac{\mathi}{2} f_n$ are supported on opposite rays, such that the fields $j(f_n)$ and $j\left( J \alpha_\frac{\mathi}{2} f_n \right)$ are affiliated with the algebra $\mathfrak{M}$ and its commutant $\mathfrak{M}'$, respectively, the states $j(f_n) \Omega$ and $j\left( J \alpha_\frac{\mathi}{2} f_n \right) \Omega$ are the same.

We now consider the concrete example $f(u) = u \mathe^{- \alpha u}$ for $u \geq 0$ with a parameter $\alpha > 0$ and $f(u) = 0$ for $u < 0$, whose Fourier transform reads
\begin{equation}
( \mathcal{F} f )(p) = ( \alpha - \mathi p )^{-2} \eqend{.}
\end{equation}
For the family $\{ f_n \}$~\eqref{eq:ray_analytic_fn_def} we obtain using Eq.~\eqref{eq:ray_fourier_fn}
\begin{splitequation}
( \mathcal{F} f_n )(p) &= \sqrt{ \frac{n}{\pi} } \int_{-\infty}^\infty \mathe^{- n s^2 - 2 \pi s} ( \mathcal{F} f )\left( p \, \mathe^{- 2 \pi s} \right) \total s = \sqrt{ \frac{n}{\pi} } \int_{-\infty}^\infty \frac{\mathe^{- n s^2}}{\left( \alpha \, \mathe^{\pi s} - \mathi p \, \mathe^{- \pi s} \right)^2} \total s \\
&= \frac{1}{\sqrt{\pi}} \int_0^\infty \mathe^{- s^2} \frac{4 \mathi \alpha p - 2 (\alpha^2 - p^2) \cosh\left( \frac{\pi s}{\sqrt{n}} \right)}{\left[ \mathi ( \alpha^2 - p^2 ) + 2 \alpha p \cosh\left( \frac{\pi s}{\sqrt{n}} \right) \right]^2} \total s \eqend{,}
\end{splitequation}
where in the last equality we split the integral at $s = 0$, and performed the change of variables $s \to \pm s/\sqrt{n}$. While the integral cannot be computed exactly, it is possible to obtain an asymptotic expansion for large $n$ by expanding the integrand, which is uniformly bounded in $n$. Namely, we obtain
\begin{equation}
h(s) \coloneq \frac{4 \mathi \alpha p - 2 (\alpha^2 - p^2) \cosh\left( \frac{\pi s}{\sqrt{n}} \right)}{\left[ \mathi ( \alpha^2 - p^2 ) + 2 \alpha p \cosh\left( \frac{\pi s}{\sqrt{n}} \right) \right]^2} = \frac{2}{(\alpha - \mathi p)^2} \left[ 1 + \frac{2 \pi^2 s^2}{n} \frac{\alpha^2 - p^2 + 4 \mathi \alpha p}{(\alpha - \mathi p)^2} + \bigo{\frac{1}{n^2}} \right]
\end{equation}
and the remainder can be bounded by $\frac{s^4}{4!} \abs{ h^{(4)}(\sigma) }$ for some $\sigma \geq 0$. The integrals can now be computed, and we obtain
\begin{equation}
\label{eq:ray_ffn}
( \mathcal{F} f_n )(p) = \frac{1}{(\alpha - \mathi p)^2} \left[ 1 + \frac{\pi^2}{n} \frac{\alpha^2 - p^2 + 4 \mathi \alpha p}{(\alpha - \mathi p)^2} + \frac{C(n,p)}{n^2} \right] \eqend{,}
\end{equation}
where $C(n,p)$ is a function that can be bounded by a constant.\footnote{Long but straightforward computations result in $\abs{C(n,p)} \leq 10^3$ for $n \geq 10^2$.} It follows that
\begin{splitequation}
\label{eq:ray_jfn_norm}
\norm{ j(f_n) \Omega }^2 &= \frac{1}{2} \lim_{\epsilon \to 0^+} \int_0^\infty p \, \mathe^{- p \epsilon} \abs{ ( \mathcal{F} f_n )(p) }^2 \frac{\total p}{2 \pi} \\
&= \frac{1}{8 \pi \alpha^2} - \frac{\pi}{4 n \alpha^2} + \frac{C'(n,\alpha)}{n^2 \alpha^2} \eqend{,}
\end{splitequation}
where $C'(n,\alpha)$ is another function that can be bounded by a constant.\footnote{Straightforward computations result in $\abs{C'(n,\alpha)} \leq 150$ for $n \geq 10^2$.} We thus obtain the normalized state
\begin{equation}
\label{eq:ray_psifn_norm}
\psi_n \coloneq c_n j(f_n) \Omega = j(c_n f_n) \Omega \eqend{,} \quad c_n = \left[ \frac{1}{8 \pi \alpha^2} - \frac{\pi}{4 n \alpha^2} + \frac{C'(n,\alpha)}{n^2 \alpha^2} \right]^{-\frac{1}{2}} \eqend{,}
\end{equation}
and can bound the relative entropy between $\psi_n$ and $\Omega$ using Cor.~\ref{corr:bound2}, or the one between $\psi \coloneq c_\infty \, j(f) \Omega$ and $\Omega$ using Cor.~\ref{corr:bound3}. Note that $\psi_n$ converges in norm to $\psi$, since
\begin{equation}
\norm{ j(f_n-f) \Omega }^2 = \frac{1}{2} \int_0^\infty p \abs{ ( \mathcal{F} (f_n - f) )(p) }^2 \frac{\total p}{2 \pi} \eqend{,}
\end{equation}
and from~\eqref{eq:ray_ffn} we see that
\begin{equation}
( \mathcal{F} (f_n-f) )(p) = \frac{1}{(\alpha - \mathi p)^2} \left[ \frac{\pi^2}{n} \frac{\alpha^2 - p^2 + 4 \mathi \alpha p}{(\alpha - \mathi p)^2} + \frac{C(n,p)}{n^2} \right] \eqend{,}
\end{equation}
is bounded by an integrable function of $p$ and vanishes pointwise as $n \to \infty$.

Since $j(f_n)$ is an analytic element affiliated to $\mathfrak{M}$, we can simply compute the analytic continuation of its modular flow by analytically continuing the integrand, and obtain $\sigma_t\left( j(f_n) \right) = j\left( \alpha_t f_n \right)$ with
\begin{splitequation}
(\mathcal{F} \alpha_t f_n)(p) &= \sqrt{ \frac{n}{\pi} } \int \int_{-\infty}^\infty \mathe^{- n (s-t)^2} f\left( \mathe^{2 \pi s} u \right) \total s \, \mathe^{\mathi p u} \total u \\
&= \sqrt{ \frac{n}{\pi} } \int_{-\infty}^\infty \mathe^{- n (s-t)^2 - 2 \pi s} ( \mathcal{F} f )\left( p \, \mathe^{- 2 \pi s} \right) \total s \\
&= \frac{1}{\sqrt{\pi}} \int_0^\infty \mathe^{- \left( s - \frac{t}{\sqrt{n}} \right)^2} \frac{4 \mathi \alpha p - 2 (\alpha^2 - p^2) \cosh\left( \frac{\pi s}{\sqrt{n}} \right)}{\left[ \mathi ( \alpha^2 - p^2 ) + 2 \alpha p \cosh\left( \frac{\pi s}{\sqrt{n}} \right) \right]^2} \total s \eqend{.}
\end{splitequation}
Expanding again for large $n$, the same steps as for $t = 0$ yield
\begin{equation}
\label{eq:ray_fourier_alphat_fn}
(\mathcal{F} \alpha_t f_n)(p) = \frac{1}{(\alpha - \mathi p)^2} \left[ 1 + \frac{2 t}{\sqrt{\pi n}} + \frac{\pi^2}{n} \frac{\alpha^2 - p^2 + 4 \mathi \alpha p}{(\alpha - \mathi p)^2} + \frac{C''(n,p)}{n^\frac{3}{2}} \right] \eqend{,}
\end{equation}
where $C''(n,p)$ is another function that can be bounded by a constant, assuming $\abs{t} \leq 1$. Since $f$ is a real function, we have $\left[ \sigma_t\left( j(f_n) \right) \right]^* = \sigma_{t^*}\left( j(f_n) \right) = j\left( \alpha_{t^*} f_n \right)$, and because the chiral current is free, Wick's theorem results in
\begin{splitequation}
\norm{ \sigma_\frac{\mathi}{4}\left( j(c_n f_n) \right) \left[ \sigma_\frac{3 \mathi}{4}\left( j(c_n f_n) \right) \right]^* \Omega }^2 &= c_n^4 \left( \Omega, j\left( \alpha_\frac{3 \mathi}{4} f_n \right) j\left( \alpha_{-\frac{\mathi}{4}} f_n \right) j\left( \alpha_\frac{\mathi}{4} f_n \right) j\left( \alpha_{-\frac{3 \mathi}{4}} f_n \right) \Omega \right) \\
&= c_n^4 \left( \alpha_{-\frac{3 \mathi}{4}} f_n, \alpha_{-\frac{\mathi}{4}} f_n \right) \left( \alpha_{-\frac{\mathi}{4}} f_n, \alpha_{-\frac{3 \mathi}{4}} f_n \right) \\
&\quad+ c_n^4 \left( \alpha_{-\frac{3 \mathi}{4}} f_n, \alpha_\frac{\mathi}{4} f_n \right) \left( \alpha_\frac{\mathi}{4} f_n, \alpha_{-\frac{3 \mathi}{4}} f_n \right) \\
&\quad+ c_n^4 \left( \alpha_{-\frac{3 \mathi}{4}} f_n, \alpha_{-\frac{3 \mathi}{4}} f_n \right) \left( \alpha_\frac{\mathi}{4} f_n, \alpha_\frac{\mathi}{4} f_n \right)
\end{splitequation}
with the scalar product~\eqref{eq:ray_scalar_product} of the GNS construction. From the explicit expression~\eqref{eq:ray_fourier_alphat_fn} of the Fourier transform of the function $\alpha_t f_n$, we see that each term has a finite limit as $n \to \infty$, which is in fact independent of $t$. We thus obtain
\begin{equation}
\lim_{n \to \infty} \norm{ \sigma_\frac{\mathi}{4}\left( j(c_n f_n) \right) \left[ \sigma_\frac{3 \mathi}{4}\left( j(c_n f_n) \right) \right]^* \Omega }^2 = 3 \lim_{n \to \infty} c_n^4 \norm{ j(f_n) \Omega }^4 = 3 \lim_{n \to \infty} \norm{ \psi_n }^4 = 3 \eqend{,}
\end{equation}
since $c_n$ was determined in such a way that $\norm{ \psi_n } = 1$~\eqref{eq:ray_psifn_norm}. Therefore, using Cor.~\ref{corr:bound3} we obtain the bound
\begin{equation}
\label{eq:chiral_srel_bound}
0 \leq S_\mathrm{rel}\left( \Omega \Vert \psi \right) \leq 2 \ln 3 \eqend{.}
\end{equation}
From the derivation, we see that the same bound holds in fact for a large class of functions $g$, namely for real smooth functions $g$ of compact support in $(0,\infty)$. In this case, the Fourier transform $(\mathcal{F} g)(p)$ decays faster than $\frac{1}{p}$ for large $p$, and all integrals that appear are finite and can be performed in any order thanks to Fubini's theorem. Therefore, the relative entropy between the vacuum and $j(g) \Omega$ for all of these functions is finite. Since they are dense in the one-particle Hilbert space, we obtain the uniform bound~\eqref{eq:chiral_srel_bound} on a dense subset.

\section{Discussion and outlook}

We have shown how one can use the convexity of non-commutative $L^p$ norms to bound the relative entropy between the faithful state $\omega$ on a von Neumann algebra $\mathfrak{M}$ and an excitation $\psi = \omega \circ \operatorname{ad}_{b'}$ thereof, where $b'$ is an arbitrary operator affiliated to the commutant $\mathfrak{M}'$. The bound arises from the monotonicity of the Araki--Masuda divergences defined via the non-commutative $L^p$ norms and the fact that they are sandwiched between the Petz--Rénxi relative entropies, which was first proven in Refs.~\cite{bertascholztomamichel2018,jencova2018}. We have shown that for $p = 4$ and $p = \infty$ the non-commutative $L^p$ norms can be computed without needing to know the relative modular operator, and instead can be expressed using only the non-relative modular operator associated to $\omega$. We have then employed elements $a \in \mathfrak{M}$ analytic with respect to the modular flow and their swapping partners to translate our bounds into bounds for excitations of the form $\omega \circ \operatorname{ad}_a$ with analytic $a$, and finally used the lower semicontinuity of the relative entropy to obtain bounds for general excitations from the algebra.

As an example, we computed swapping partners in free scalar field theory in a wedge and for chiral currents $j$ on a light ray, and in the latter case derived the uniform bound \eqref{eq:chiral_srel_bound} on the relative entropy between the vacuum $\Omega$ and a dense subset of normalized single-particle states $\psi = j(g) \Omega$. Using the known behavior of relative entropy under rescalings~\cite[Thm.~3.6~(3)]{araki1977}, this translates into the bound
\begin{splitequation}
S_\mathrm{rel}\left( \Omega \Vert j(g) \Omega \right) &= \norm{ j(g) \Omega }^2 S_\mathrm{rel}\left( \Omega \middle\Vert \frac{j(g) \Omega}{\norm{ j(g) \Omega }} \right) + \norm{ j(g) \Omega }^2 \ln \norm{ j(g) \Omega }^2 \\
&\leq 2 \norm{ j(g) \Omega }^2 \ln \Bigl( 3 \norm{ j(g) \Omega } \Bigr)
\end{splitequation}
for non-normalized excitations $j(g) \Omega$ with an arbitrary real smooth function $g \in L^1(\mathbb{R}_+)$ vanishing for $u \leq 0$. This bound is of a qualitatively different form than the Bekenstein bound derived in~\cite{longo2025}, which depends on the energy of the state and not only on its norm. It is thus conceivable that there are families of states for which our bound remains finite, while the one of~\cite{longo2025} diverges, and vice versa.

It will be interesting to derive a bound on the relative entropy also for the case where the excitation is obtained in a more general way, not by an element affiliated to the algebra; for the chiral current this could be a state $j(g) \Omega$ with $g$ having support on both the positive and negative light rays. A first step towards deriving a corresponding bound is the understanding of generalized swapping partners for such states, namely whether there exist $a$ affiliated to $\mathfrak{M}$ (or $b'$ affiliated to $\mathfrak{M}'$) such that the states $j(g) \Omega$ and $a \Omega$ (or $b' \Omega$) are the same. In the case of the Bekenstein bound, the corresponding generalization~\cite{hollandslongo2025} results in a bound that deteriorates if a large part of the excitation comes from the commutant; if such a deterioration also arises in our case will strongly depend on properties of the generalized swapping partner. Moreover, it will be informative to apply our bounds to other theories, in particular for fermions where relative entropy has been studied in various situations (see the recent Refs.~\cite{longoxu2018,friesreyes2019,xu2023,galandamuchverch2023,finstermuch2025} and references therein).

\begin{acknowledgments}
This work has been partially funded by the Deutsche Forschungsgemeinschaft (DFG, German Research Foundation) --- project no. 396692871 within the Emmy Noether grant CA1850/1-1. L.S. acknowledges financial support by Italian Ministry of University and Research through the grant PRIN 2022ZE8SC4. M.B.F. thanks Gandalf Lechner and Ricardo Correa da Silva for discussions, and both authors thank Mario Berta and Marco Tomamichel for an email exchange and for pointing out Refs.~\cite{jencova2018,jencova2021}.
\end{acknowledgments}

\appendix

\subsection*{Conflict of interest statement}

On behalf of all authors, the corresponding author states that there is no conflict of interest.

\subsection*{Data availability statement}

This manuscript has no associated data.

\bibliography{literature}

\end{document}